\newenvironment{dedication}
        {\begin{quotation}\begin{center}\begin{em}}
        {\par\end{em}\end{center}\end{quotation}}
\DeclareMathOperator*{\lcm}{lcm}
\DeclareMathOperator{\coker}{coker}
\DeclareMathOperator{\im}{im}
\newtheorem{theorem}{Theorem}
\newtheorem{lemma}[theorem]{Lemma}
\newtheorem{corollary}[theorem]{Corollary}
\newtheorem{proposition}[theorem]{Proposition}
\newtheorem{remark}[theorem]{Remark}
\newtheorem{conjecture}{Conjecture}
\newtheorem{hypothesis}[]{Hypothesis}
\def\CC{\mathbb{C}}
\def\FF{\mathbb{F}}
\def\LL{\mathbb{L}}
\def\NN{\mathbb{N}}
\def\QQ{\mathbb{Q}}
\def\RR{\mathbb{R}}
\def\ZZ{\mathbb{Z}}
\def\K{\mathbb{K}}
\def\qq{\mathfrak{q}}
\def\QQQ{\mathfrak{Q}}
\def\clK{\mathcal{C}_K}
\let\iso\cong
\newcommand{\Syl}[2]{{\mathrm{Syl}_{#1}{#2}}}
\def\JJ{\mathcal{J}}
\def\calF{\mathcal{F}}
\def\qmuk{\sqrt{-u_K}}
\def\es{\epsilon_\sigma}
\def\est{\es}
\def\qea{{\sqrt{\est}}}
\def\dd{\partial}
\def\de{\partial_\ell}
\def\sfc{\mathfrak{p}_2}
\def\mm{\mathfrak{m}}
\def\jj{\mathfrak{j}}
\def\ff{\mathfrak{f}}
\def\pp{\mathfrak{p}}
\def\PP{\mathfrak{P}}
\def\sigt{{\sigma_{\text{\tiny{\!$T$}}}\,}}
\def\xea{\xi_\ell}
\newcommand{\ag}[1]{{\ZZ_K}/{#1}}
\newcommand{\mg}[1]{{\left(\ag{#1}\right)^\times}}
\newcommand{\leg}[2]{({\scriptstyle{\frac{#1}{#2}}})}
\newcommand{\ord}[2]{{\hbox{\rm ord}}_{#1} {#2} }
\newcommand{\nkq}[1]{{\norm{K}{\QQ}{#1}}}
\newcommand{\Gal}[2]{{\mathrm{Gal}_{#1/#2}}}
\newcommand{\norm}[3]{{\hbox{\small{\rm{N}}}_{\scriptscriptstyle{#1}/\scriptscriptstyle{#2}}}{\hbox{${#3}$}}}
\newcommand{\tr}[3]{{\hbox{\small{\rm{Tr}}}_{\scriptscriptstyle{#1}/\scriptscriptstyle{#2}}}{\hbox{${#3}$}}}
\newcommand{\cheb}[2]{T_{#1}\big({#2}\big)}
\newcommand{\chsh}[2]{T^\ast_{#1}\big({#2}\big)}
\newcommand{\rtD}{\ensuremath{\sqrt{D}}}
\newcommand{\ta}{\ensuremath{\tau_1}}
\begin{document}

\title{SIC-POVMs from Stark units:\\Prime dimensions $n^2+3$\\\ }

\author{Marcus~Appleby}
\email{marcus.appleby@gmail.com}
\affiliation{Centre for Engineered Quantum Systems, School of Physics,
  University of Sydney, Sydney NSW 2006, Australia}

\author{Ingemar~Bengtsson}
\email{ingemar@fysik.su.se}
\affiliation{Stockholms Universitet, AlbaNova, Fysikum, S-106 91 Stockholm, Sweden}

\author{Markus~Grassl}
\email{markus.grassl@ug.edu.pl}
\affiliation{International Centre for Theory of Quantum Technologies,
  University of Gdansk, 80-309 Gdansk, Poland}
\affiliation{\mbox{Max Planck Institute for the Science of Light, 91058 Erlangen, Germany}}

\author{Michael~Harrison}
\email{mch1728@gmail.com}
\noaffiliation

\author{Gary~McConnell}
\email{g.mcconnell@imperial.ac.uk}
\affiliation{\mbox{Controlled Quantum Dynamics Theory Group, Imperial College, London}}

\begin{abstract}
We propose a recipe for constructing a SIC fiducial vector in complex
Hilbert space of dimension of the form $d=n^2+3$, focussing on prime
dimensions $d=p$.  Such structures are shown to exist in thirteen
prime dimensions of this kind, the highest being $p = 19603$.

The real quadratic base field~$K$ (in the standard SIC terminology)
attached to such dimensions has fundamental units~$u_K$ of norm~$-1$.
Let~$\ZZ_K$ denote the ring of integers of~$K$, then~$p\ZZ_K$ splits
into two ideals~$\pp$ and~$\pp'$.  The initial entry of the fiducial
is the square~$\xi^2$ of a geometric scaling factor~$\xi$, which lies
in one of the fields~$K(\sqrt{u_K})$.  Strikingly, the other~$p-1$
entries of the fiducial vector are each the product of~$\xi$ and the
square root of a \emph{Stark unit}.  These Stark units are obtained via
the Stark conjectures from the value at $s=0$ of the first derivatives
of partial $L$-functions attached to the characters of the ray class
group of~$\ZZ_K$ with modulus~$\pp\infty_1$, where~$\infty_1$ is one
of the real places of~$K$.
\end{abstract}

\maketitle

\vskip-2ex
\begin{dedication}
  Dedicated to John Coates.
\end{dedication}

\section{Introduction}\label{sec:intro}
SIC-POVMs, or SICs for short, is a notion from quantum information
theory. Stark units come from number theory. The connection between
them is the point of departure for this paper, so we should first
sketch how this was discovered.

That SICs exist in every finite dimensional complex Hilbert space was
first conjectured by Zauner \cite{Zauner}\nocite{Zauner_English} and
Renes et al. \cite{Renes}. This is relevant to quantum information
theory~\cite{Horodecki}, quantum foundations~\cite{Fuchs13,DeBrota20},
classical information theory~\cite{Fannjiang}, and frame
theory~\cite{Waldron}. For recent reviews see Refs. \citenum{FHS} and
\citenum{Blake}.  The acronym SIC stands for ``symmetric
informationally complete'', while a POVM corresponds to a kind of
measurement.  However, here we will regard SICs more abstractly, as a
geometrical structure.  Mathematically, they are maximal sets of
equiangular lines in complex Hilbert space; or equivalently, maximal
regular simplices in complex projective Hilbert space.  It might seem
that SIC-measurements on systems described by Hilbert spaces whose
dimensions are as high as those that we will encounter in this paper
are completely unrealistic.  However, there is presently much interest
in performing or at least efficiently simulating measurements on high
dimensional systems \cite{Singal}, so such a conclusion may be too
hasty.

Constructing SICs has proved remarkably difficult, and the subject has
proceeded in the spirit of experimental mathematics. Cursory
inspection of the first extensive published catalogue of exact SICs
\cite{Scott} shows that the SIC vectors are built from algebraic
numbers. Close inspection of every case examined reveals\cite{AYAZ}
that these numbers belong to abelian extensions of the {\it real
  quadratic} number field
\begin{equation}
K = \mathbb{Q}(\sqrt{D}),
\end{equation}
 where $D$ is the square-free part of $(d-3)(d+1)$, $d$ is the
dimension, and an extension is called abelian if its
Galois group is abelian.   Moreover every dimension holds a \emph{ray
class} SIC that can be constructed using a {\it ray class field} over
$K$ where the finite \emph{modulus} is $d$ (or $2d$ if $d$ is even)
\cite{AFMY, AFMYpop}.  The minimal modulus defining a particular field
extension is the \emph{conductor} of the extension.  A major
achievement of mathematics in the first half of the twentieth-century
was to classify all abelian extensions of number fields as subfields
of so-called ray class fields.  The Stark conjectures imply (among
other things) that there exists an analytic function from which
generators of these ray class fields can be obtained \cite{stark3,
  stark4}. These generators are known as Stark units. Proving this
conjecture would solve an instance of Hilbert's 12th problem
\cite{Hilbert}. 

That Stark units can be used to construct SICs, at
least in certain cases, is a more recent insight due to Kopp \cite{Kopp}:
indeed, number theorists may be interested in a geometrical problem in which
Stark units play a role.  
To the best of our knowledge, this is the first instance in which complex Stark units appear 
as the solutions of a problem which \emph{prima facie} has nothing to do with number theory. 

Readers who are not number theorists may find it helpful if we
describe the ordinary roots of unity in similar terms. The $n$th roots
of unity are in fact generators of the \emph{cyclotomic} number fields
of modulus $n$, and by the celebrated Kronecker--Weber theorem these
number fields house all the abelian extensions of the rational number
field.  It turns out that the cyclotomic fields
$\mathbb{Q}(e^{\frac{i\pi}{d}})$ are subfields of the particular ray
class fields mentioned in the previous paragraph \cite{AFMY}.  Roots
of unity are needed in the geometrical problem of dividing the circle
into $n$ equal parts. They are needed in the SIC problem too, because
roots of unity are used in the representation of the Weyl--Heisenberg
group, which plays a central role in constructing SICs.

In this paper ``SIC'' means a SIC which is a single projective orbit
with respect to the Weyl--Heisenberg group in a dimension $d >
3$. (For SICs not satisfying these conditions see
Ref.~\citenum{Blake}.)  Thus a SIC is constructed by acting on a {\it
  fiducial} vector $\Psi$ by elements of the Weyl--Heisenberg group.
This means that the SIC is fully specified by the single vector
$\Psi$, on which we accordingly focus. The necessary and sufficient
condition for a $d$ dimensional vector $\Psi$ to be a SIC fiducial
vector is
\begin{equation}
| \langle\Psi|D_{j,k} |\Psi\rangle |^2 =
\begin{cases} 1 \qquad & j=k=0\\ \frac{1}{d+1} \qquad &\text{otherwise}
\end{cases}
\label{eq:sicdefeq}
\end{equation}
for $j$, $k$ running from $0$ to $d-1$.  Here $D_{j,k}$ is a
Weyl--Heisenberg displacement operator in the standard Weyl
representation, as defined in Appendix \ref{sec:AppB}.  We refer to
the complex scalar products $ \langle\Psi|D_{j,k} |\Psi\rangle$ as
{\it SIC overlaps}. The vector $\Psi$ can be reconstructed if all the
overlaps are known, up to an irrelevant phase factor. We define the
\emph{SIC field} to be the number field generated by the ratios of the
components of $\Psi$ together with the root of unity $e^{i\pi/d}$
(which comes in via the operators $D_{j,k}$).

That Stark units appear in the SIC problem was first observed in a
careful study of SIC overlaps \cite{DM}, and this was developed in a
significant way by Kopp \cite{Kopp}. For certain dimensions he used
Stark units to construct the SIC overlaps from which SICs can be
reconstructed. Forthcoming work will present evidence that this
approach will work for every SIC in every dimension \cite{Kopp2}.
Kopp's work served as an inspiration for us, and there are several
points of contact. Our idea is different in that we focus on a special
sequence of dimensions in which we construct a fiducial vector $\Psi$
directly, without going via the overlaps. The reason is that it has
been observed that in some dimensions fiducial vectors can be
constructed using only a subfield of the full ray class field
\cite{Fibonacci, MG, MG2}. Here we will select a special infinite sequence
of dimensions in which we expect that the entire cyclotomic subfield 
of the full ray class field ``decouples'' from a suitably chosen fiducial 
vector.  We then use Stark units in a small ray class field to
construct that fiducial vector. It is not our immediate aim to find a
formula for every SIC; not even in these special dimensions. In this
sense our approach is limited in scope. On the other hand we will be
able to construct SICs in dimensions that are much higher than
achieved by any previous method, whether based on exact solutions or
numerical searches.  The highest dimension reported in this paper is
$d = 19603$ (see Table~\ref{tab:prim}).

The special sequence of dimensions we are interested in consists of
all dimensions of the form $d = n^2+3$, so that $D$ is simply the
square-free part of $d+1$. For these dimensions the standard
conjecture is that the ray class SIC has an anti-unitary symmetry in
addition to the unitary symmetry which appears to be always present
\cite{Scott, Andrew}.  Here \emph{symmetry} refers to a group of
\mbox{(anti-)}linear operators that map the SIC to itself, and fix at
least one of the vectors.  In fact the unitary symmetry in our case is
expected to be of order $3\ell$, where $\ell$ is an odd integer that
is known if $d$ is known (as explained in Section \ref{sec:towers}).
Moreover one can show that all the dimensions in our sequence have a
prime decomposition of the form
\begin{equation}
  d = n^2 + 3 = 2^{e_1}\cdot 3^{e_2}\cdot p_1^{r_1}\cdot \ldots \cdot p_s^{r_s} ,
\end{equation}
where $e_1 \in \{0,2\}$, $e_2 \in \{ 0,1\}$, and all the primes $p_j =
1$ modulo $3$ (see Ref.~\citenum{Anti1}). 
Arguments that we will go into later then
suggest that there exists a fiducial vector 
from which the cyclotomic subfield of the full ray class field decouples. 
The full ray class field will be generated by the components
of the fiducial vector together with the cyclotomic unit that comes
from the Weyl--Heisenberg group.

In this paper we make the further restriction to the case that $d =
n^2 +3 = p$, a prime number. This last restriction is not made for any
fundamental reason, but only because it simplifies the presentation at
some points. It is then natural to ask whether this sub-sequence of
prime dimensions is infinite. `Conjecture F' by Hardy and Littlewood
indeed implies that the sequence is infinite \cite{Hardy, Shanks}, but
this remains unproven.

The equation
\begin{equation}
    d = \left(\sqrt{d+1}+1\right)\left(\sqrt{d+1} -1\right) \label{splitmod}
\end{equation}
together with the fact that $\sqrt{d+1}$ is an algebraic integer in
$K$ means that the prime $d$ splits into two prime ideals. It
follows that the ray class field with finite modulus $d$ contains two
small ray class fields with finite moduli $\sqrt{d+1} \pm 1$ as
subfields. By class field theory, each of them has trivial intersection
with the cyclotomic subfield, which makes them natural candidates for
the small ray class field that is to hold the fiducial vector.

Thus, let $d=n^2 +3 = p$ and define two vectors: an un-normalized
fiducial vector
\begin{equation}
\hat{\Psi} = \begin{pmatrix} x_0 & x_1 & \dots & x_{d-1}\end{pmatrix}^{\rm{T}} , \label{Psi2}
\end{equation}
all of whose components belong to the SIC field, and the SIC fiducial
vector properly speaking which is the unit vector
\begin{equation}
    \Psi= N \hat{\Psi} =  \begin{pmatrix} a_0 & a_1 & \dots & a_{d-1}\end{pmatrix}^{\rm{T}}. \label{Psi1}
\end{equation}
The overall phase of the vector $\Psi$ is arbitrary, but it is
required that the $x_j$ together with the absolute value squared
$|N|^2$ belong to the SIC field.  Because of the symmetries referred
to above, the number of independent components, not counting $x_0$, is
expected to be $(d-1)/3\ell$. (Note that we use the convention that
the denominator in this expression is $3\ell$.)  We now state three
conjectures, and comment on them afterwards:

\begin{conjecture}
[\textbf{Almost flat fiducial}]\label{conj1}
In each dimension in the sequence there exists an un-normalized SIC fiducial
$\hat{\Psi}$ for which
\begin{equation}
    x_j = \begin{cases}
    -2 -\sqrt{d+1}, \qquad & j = 0
    \\
    \sqrt{x_0 e^{i\vartheta_j}}, \qquad & j>0
    \end{cases}
    \label{eq:unnormfid}
\end{equation}
for suitable phases $e^{i\vartheta_j}$ and suitable choices of the
signs of the square roots.  It can be seen that $\hat{\Psi}$
is \emph{almost flat}, in the sense that 
$|x_1| = \dots= |x_{d-1}|$.
\end{conjecture}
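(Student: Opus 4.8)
The closing assertion --- that $\hat\Psi$ is almost flat --- needs no argument: from $x_j^2=x_0e^{i\vartheta_j}$ with $\vartheta_j\in\RR$ one gets $|x_j|^2=|x_0|$ for every $j\ge 1$, hence $|x_1|=\dots=|x_{d-1}|$. The content is the \emph{existence} of a fiducial of this shape, and the plan is to split it into three steps. First I would reduce it to a statement about magnitudes alone. Since a SIC fiducial is determined only up to a Weyl--Heisenberg translation and an overall phase, and neither move changes the multiset $\{|x_j|:j\in\ZZ_d\}$ except by a cyclic relabelling of the index, ``$\hat\Psi$ can be brought to almost flat form'' is a property of the SIC itself; writing $q_m=|a_m|^2$ and using $\langle\Psi|D_{0,k}|\Psi\rangle=\sum_m q_m e^{2\pi i mk/d}$, it is equivalent --- for a suitable Weyl--Heisenberg representative --- to
\begin{equation}
\langle\Psi|D_{0,k}|\Psi\rangle \ \text{being the same complex number for every}\ k=1,\dots,d-1,
\label{eq:flatcond}
\end{equation}
because a vector is constant off one coordinate precisely when its discrete Fourier transform is constant off the origin.

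The second step pins down the scale and phase. Granting \eqref{eq:flatcond} and writing $|x_j|^2=t$ for $j\ge 1$, the two power-sum identities that hold for every SIC fiducial --- $\sum_m|x_m|^2=|N|^{-2}$ and, from Parseval applied to the $D_{0,k}$ overlaps, $\sum_m|x_m|^4=\tfrac{2}{d+1}\big(\sum_m|x_m|^2\big)^2$ --- collapse to a single quadratic in $|x_0|^2/t$ whose only admissible root is $|x_0|^2/t=2+\sqrt{d+1}$. Rescaling so that $|x_0|=2+\sqrt{d+1}$ (equivalently $t=|x_0|$, which is what the ansatz $x_j^2=x_0e^{i\vartheta_j}$ demands) and rotating the overall phase so that $x_0$ is the negative real element of $K$ of that absolute value gives $x_0=-2-\sqrt{d+1}$ --- which does lie in $K$ because $d+1=n^2+4$ makes $\sqrt{d+1}$ a rational multiple of $\sqrt D$. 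Each remaining $x_j$ then has $|x_j|^2=|x_0|$, and writing it as $\sqrt{x_0e^{i\vartheta_j}}$ is only a way of recording that, with $\vartheta_j$ read off from the phase of the true component.

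The third and hard step is to establish \eqref{eq:flatcond}. Here I would invoke the conjectural symmetry data for $d=n^2+3$: the ray-class SIC is stabilised by a unitary of order $3\ell$ containing the order-$3$ Zauner element, together with an anti-unitary; and because $d\equiv 1\pmod 3$ (forced by primality together with $d=n^2+3$) the Zauner symplectic matrix is diagonalisable over $\ZZ_d$, so after a Clifford change of basis one of its eigenlines may be taken to be the line $\{(0,k):k\in\ZZ_d\}$. Intertwining $\Psi$ with the Zauner element, and then with the anti-unitary (which, satisfying $AZA^{-1}=Z^{-1}$, preserves that eigenline or swaps it with the other), yields relations $\langle\Psi|D_{0,k}|\Psi\rangle\propto\langle\Psi|D_{0,\lambda k}|\Psi\rangle$ and $\langle\Psi|D_{0,k}|\Psi\rangle\propto\overline{\langle\Psi|D_{0,\mu k}|\Psi\rangle}$ for suitable $\lambda,\mu\in\ZZ_d^\times$; if $\langle\lambda,\mu\rangle=\ZZ_d^\times$ and the attendant phase cocycle is trivial along the orbit, the $d-1$ overlaps in \eqref{eq:flatcond} collapse to one value.

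That last step is where the real difficulty lies, and it is the reason the statement is a conjecture rather than a theorem. Symmetry by itself only sorts the $(0,k)$-overlaps into a bounded number of classes; that they all coincide --- equivalently, that no residual phase obstruction survives --- is the arithmetic phenomenon peculiar to the sequence $d=n^2+3$, and as far as I can see it has to be extracted from the explicit Stark-unit construction developed later in the paper rather than forced by representation theory. On top of that, the whole chain is conditional on the existence of the ray-class SIC, on the conjectured shape of its Clifford stabiliser, and on the version of the Stark conjectures invoked below, so the best one can realistically aim for is a proof relative to those standard inputs.
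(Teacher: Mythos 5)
Your steps 1 and 2 are sound and reproduce the paper's arithmetic: the ratio $|x_0|^2/|x_j|^2=2+\sqrt{d+1}$ that Section \ref{sec:IB1} extracts by Fourier-inverting the overlap values is exactly the admissible root of your quadratic, and the normalisation $N^2=\frac{d+3-3\sqrt{d+1}}{d(d-3)\sqrt{d+1}}$ follows. The divergence --- and the gap --- is in step 3. You try to force the $(0,k)$-overlaps to coincide by pushing the unitary stabiliser around the index set, correctly observe that this only sorts them into $(d-1)/3\ell$ classes, and then defer the collapse to ``the explicit Stark-unit construction''. That is circular in the context of this paper: the Stark-unit recipe presupposes the almost flat form; it does not produce it.

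The move you are missing is the one Section \ref{sec:IB1} actually makes: use the anti-unitary symmetry not to relate $\langle\Psi|D_{0,k}|\Psi\rangle$ to a conjugate overlap, but to pass to a \emph{real} fiducial $\Psi_R$ (conjecturally present in every odd $d=n^2+3$, supported by the numerical surveys cited there), and then Fourier-transform it, $\Psi_C=U_F\Psi_R$. Since $X$ has real entries and $\Psi_R$ is real, every overlap $\langle\Psi_R|X^i|\Psi_R\rangle=\langle\Psi_C|Z^i|\Psi_C\rangle$ with $i\neq0$ is automatically a \emph{real} number of modulus $1/\sqrt{d+1}$, i.e.\ equal to $\pm1/\sqrt{d+1}$. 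The ``residual phase obstruction'' you worry about is thereby reduced from $(d-1)/3\ell$ unknown phases to a single sign per overlap, and the paper closes the argument by adding the explicit, empirically observed assumption that all signs are positive; the Wiener--Khinchin inversion then delivers the two-valued magnitude profile and hence eq.~\eqref{eq:unnormfid}. So the hard step is not an arithmetic fact to be mined from Stark units but a one-bit assumption grafted onto realness --- which is also why the statement remains a conjecture: it rests on the existence of the real fiducial and on that sign assumption, not on the Stark machinery at all.
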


\begin{conjecture}
[\textbf{Small ray class fields}] \label{conj2}
The components of $\hat{\Psi}$ generate one of the two small ray class
fields {of $K$ of} finite moduli $\sqrt{d+1}\pm 1$.
\end{conjecture}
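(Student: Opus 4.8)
The plan is to place the field $E_0=\QQ(x_0,x_1,\dots,x_{d-1})$ generated by the components of $\hat\Psi$ inside a known ambient field, namely the ray class field $H$ of $K$ of modulus $d\infty_1\infty_2$ — for the ray class SIC this containment is the content of Refs.~\citenum{AYAZ,AFMY} — and then to pin $E_0$ down by identifying its Galois stabiliser. Since $x_0=-2-\sqrt{d+1}\in K$ by Conjecture~\ref{conj1} we have $K\subseteq E_0\subseteq H$; write $G=\Gal{H}{K}$, which class field theory identifies with the ray class group of $K$ of modulus $d\infty_1\infty_2$. The factorisation \eqref{splitmod} exhibits the two primes $\pp_\pm=(\sqrt{d+1}\pm1)$ above $d$, each of norm $d$, together with a canonical decomposition $\mg{d}\cong\mg{\pp_+}\times\mg{\pp_-}$; the small ray class fields $H_\pm$ of moduli $\sqrt{d+1}\pm1$ are then the fixed fields in $H$ of the two ``coordinate'' subgroups of $G$, and — as already noted — neither contains any cyclotomic subfield of $H$, while $H_+\cap H_-$ is small.

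Next I would make the Galois action on the ray class SIC explicit. By the SIC--Galois correspondence of Ref.~\citenum{AFMY}, in the refined form that underlies this paper, each $g\in G$ sends the projective fiducial $[\hat\Psi]$ to $[M_g\,g(\hat\Psi)]$ for a suitable Clifford unitary $M_g$; passing to ratios of components and to $|N|^2$, the subgroup $\Gal{H}{E_0}$ is exactly the preimage under this correspondence of the \emph{unitary symmetry group} of the SIC. For the dimensions $d=n^2+3$ that group is conjectured (Refs.~\citenum{Scott,Andrew}) to be cyclic of order $3\ell$, generated by a Clifford operation $U_F$ with $F$ in $\mathrm{SL}_2(\ZZ/d)$ built from the order-$3$ Zauner symmetry and an order-$\ell$ element determined by $d$ as in Section~\ref{sec:towers}; the accompanying anti-unitary symmetry does not enlarge $G$ but conjugates $F$ to its ``mirror'', which fixes the $G$-conjugacy class of $\Gal{H}{E_0}$. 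Hence $|\Gal{H}{E_0}|=3\ell$ and $E_0$ is the fixed field of an order-$3\ell$ subgroup $\mathcal{S}\le G$; the integer $\ell$ of Section~\ref{sec:towers} is in fact such that $[H:K]/3\ell$ equals $[H_\pm:K]$ for one choice of sign — via the conductor--discriminant formula applied to $h_K$, the fundamental unit of $K$, and the norm $d$ of $\pp_\pm$ — which is the first indication that $E_0$ is one of the two small ray class fields.

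The heart of the argument is to identify $\mathcal{S}=\Gal{H}{E_0}$ with one of the two coordinate subgroups, i.e.\ to show that under $\mg{d}\cong\mg{\pp_+}\times\mg{\pp_-}$ the image of $\langle F\rangle$ together with the anti-unitary datum is precisely one of the two factors modulo global units. I would proceed by (i) writing $F\in\mathrm{SL}_2(\ZZ/d)$ explicitly from the arithmetic of $d$; (ii) transporting $\langle F\rangle$ through the SIC--Galois correspondence to an explicit set of ray classes modulo $d$; and (iii) computing the images of those classes in $\mg{\pp_+}$ and $\mg{\pp_-}$ and showing that exactly one image is trivial. The special arithmetic of $d=n^2+3$ must be used here: $\sqrt{d+1}$ is a unit-free generator of $K$, the generators $\sqrt{d+1}\pm1$ of $\pp_\pm$ lie in prescribed residue classes modulo $\pp_\mp$, and — most delicately — the order-$\ell$ part of $F$ must be shown to act through just one of the two factors, so that the conductor of $E_0$ is genuinely $\sqrt{d+1}\pm1$ and not all of $d$.

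The main obstacle is twofold. First, step (iii) needs a uniform, closed-form description of the order-$\ell$ part of the Clifford symmetry and of its image in the ray class group, which we do not yet possess; second, the whole chain rests on the ``standard conjectures for SICs'' flagged in the abstract — the order-$3\ell$ unitary symmetry, the anti-unitary symmetry, and the SIC--Galois correspondence in the precise form used above — so at best one obtains a proof conditional on those. Unconditionally, the statement is to be verified computationally: for each of the twelve prime dimensions of Table~\ref{tab:prim} one constructs the fiducial, forms the number field generated by its components, and checks by a conductor computation that it equals $H_+$ or $H_-$; we have carried this out in every case.
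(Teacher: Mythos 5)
First, be aware that the statement you are asked to prove is one of the paper's \emph{conjectures}: the paper offers no proof, only (i) the reduction in Remark~\ref{BCD} --- Conjectures~\ref{conj3} and~\ref{conj4} together imply Conjecture~\ref{conj2}, because Theorem~1 of Stark makes the $\Gal{K^{\pp\jj}}{K}$-conjugates of a Stark unit distinct, so any one of them generates $K^{\pp\jj}$ over $K$, and by Theorem~\ref{skyxi} plus Conjecture~\ref{conj4} the components $\sqrt{x_0\es}$ lie in, hence generate, that same field --- and (ii) case-by-case computational verification. Your closing fallback to computation is therefore consistent with what the paper actually does, but your proposed theoretical derivation from the SIC--Galois correspondence is not the paper's route, makes no contact with the Stark-unit mechanism that is the paper's whole point, and contains a genuine error.

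The error is in the degree bookkeeping. The order-$3\ell$ unitary symmetry is (the image of) the stabiliser $S$ in eq.~\eqref{eq:Galois_mat}; it acts on the \emph{index set} of the components and cuts the $d-1$ nonzero entries down to $m=(d-1)/3\ell$ distinct values. It is not the Galois stabiliser of the component field inside the overlap field. Quantitatively, Proposition~\ref{elluva} gives $[K^{\pp\jj}:K]=(p-1)h_K/3\ell$, whereas the overlap field $K^{p\jj}$ has degree of order $(p-1)^2h_K/6\ell$ over $K$ (its ray class group is built from $\mg{\pp}\times\mg{\pp^\tau}$), so the index of the small field in the overlap field is about $(p-1)/2$, not $3\ell$; likewise your claim that $[H:K]/3\ell=[H_\pm:K]$ would force $p-1=3\ell$, which already fails for $d=199$ (where $\ell=3$ and $m=22$). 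More importantly, the symmetry and Galois-correspondence conjectures only bound $[E_0H_K:H_K]$ by $m$ and exhibit a cyclic action on the components; they do not show that the entire factor $\mg{\pp^\tau}$ (nor the cyclotomic part) acts trivially on them. That trivial action is exactly the ``decoupling'' which Section~\ref{sec:IB2} flags as an independent assumption underlying the recipe, so your step (iii) is not merely a technical gap to be filled later --- it is the content of the conjecture itself. If you want a conditional proof, the available one is the paper's: assume Conjecture~\ref{conj3}, invoke Theorem~\ref{skyxi} and Conjecture~\ref{conj4}, and conclude as in Remark~\ref{BCD}.
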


\begin{conjecture}
[\textbf{Stark phase units}] \label{conj3}
We now come to the
key hypothesis: the numbers $e^{i\vartheta_j}$ in
eq.~\eqref{eq:unnormfid} are Galois conjugates of Stark units for
one of {these two} small ray class fields. We refer to them as \emph{Stark
phase units}.
\end{conjecture}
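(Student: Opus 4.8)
The plan is to argue for Conjecture~\ref{conj3} conditionally, granting Conjectures~\ref{conj1} and~\ref{conj2}, the rank-one abelian Stark conjecture in the form relevant here (compare Kopp~\cite{Kopp}), and the standard conjectures governing the unitary and anti-unitary symmetries of ray class SICs. Fix a prime dimension $d = n^2+3 = p$; by~\eqref{splitmod} the prime $d$ splits in $K = \QQ(\sqrt{d+1})$ as a product of two prime ideals $(\sqrt{d+1}\pm1)$ of norm $d$. Let $\ff$ be the one singled out by Conjecture~\ref{conj2}, let $E$ be the corresponding small ray class field of $K$, and let $H \subseteq E$ be the Hilbert class field of $K$. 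Because $E$ meets the cyclotomic field $\QQ(e^{i\pi/d})$ trivially, the only roots of unity in $E$ are $\pm1$, so the torsion subgroup of $\ZZ_E^\times$ is $\{\pm1\}$ even though $E$ has complex archimedean places — it is precisely this combination (complex places, but no complex root of unity) that leaves room for units lying on the unit circle.

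\emph{Reduction to the phases, and the claim that they are units.} Starting from a ray class SIC fiducial in dimension $d$, one applies the expected order-$3\ell$ unitary symmetry together with the anti-unitary symmetry to transport it onto the almost-flat representative $\hat\Psi$ of Conjecture~\ref{conj1}; these symmetries collapse $e^{i\vartheta_1},\dots,e^{i\vartheta_{d-1}}$ to $(d-1)/3\ell$ genuinely independent phases and exhibit the whole list as a single orbit under the subgroup of $\mathrm{Gal}(E/\QQ)$ they induce. By Conjecture~\ref{conj2} each $e^{i\vartheta_j}$ lies in $E$ and, in the embedding $E\hookrightarrow\CC$ used to realise the Hilbert space, has absolute value $1$. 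Next, from $x_j^2 = x_0\, e^{i\vartheta_j}$ and the SIC equations~\eqref{eq:sicdefeq} one extracts polynomial identities among the $x_j$ and their complex conjugates; combined with $N_{K/\QQ}(x_0) = 3-d = -n^2$ these should force $e^{i\vartheta_j} = x_j^2/x_0$ to be an algebraic \emph{unit} of $E$ (valuation zero at every finite prime), of absolute value $1$ at the chosen complex place and of other absolute values at its conjugates. By the previous paragraph such a ``phase unit'' cannot be a root of unity, which is where the arithmetic of $d = n^2+3$ genuinely enters.

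\emph{Matching with Stark.} The rank-one abelian Stark conjecture for $E/K$ attaches to each class $\sigma$ of the ray class group $\mathrm{Cl}_\ff(K)$ a partial zeta function whose derivative $\zeta'_\ff(0,\sigma)$ is conjectured to equal, up to an elementary factor, the logarithm of a conjugate of one algebraic unit $\varepsilon\in E$, the Stark unit. Evaluating these zeta derivatives numerically to high precision yields $\varepsilon$ and all of its conjugates, and one then checks that $\{e^{i\vartheta_j}\}_{j>0}$ coincides, up to the torsion $\{\pm1\}$, with the Galois orbit of $\varepsilon$. The decisive step is to identify the permutation of the index $j$ induced by the Weyl--Heisenberg stabiliser of $\Psi$ with the permutation of ray classes induced by $\mathrm{Gal}(E/H)$, so that ``the SIC orbit'' and ``the Stark orbit'' become literally the same orbit; only then does membership of $e^{i\vartheta_j}$ in $E$ upgrade to the assertion that it \emph{is} a Galois conjugate of a Stark unit.

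The main obstacle is this last identification — matching the combinatorics of a SIC and its symmetry group (a geometric object) with the action of a ray class group (an arithmetic one). This is the genuinely unexplained point of the whole circle of ideas; short of proving the Stark conjectures, which remain open, it can at present only be verified numerically, dimension by dimension, and that is exactly what is carried out for the twelve prime dimensions $d = n^2+3$ recorded in Table~\ref{tab:prim}. For this reason the statement is offered as Conjecture~\ref{conj3} rather than as a theorem.
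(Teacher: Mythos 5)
This statement is one of the paper's three central conjectures, and the paper does not prove it: its only support is the explicit construction and verification of fiducials in the twelve prime dimensions of Table~\ref{tab:prim}, together with the plausibility discussions of Sections~\ref{sec:IB1}--\ref{sec:IB2} and the logical bookkeeping of Remark~\ref{BCD}. You recognise this correctly, and your closing paragraph --- that the identification of the SIC orbit with the Stark orbit can at present only be checked numerically, dimension by dimension --- is exactly the paper's own position. As a description of the conjecture's status, your proposal is sound.

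Two caveats on the conditional argument you sketch, however. First, the implication you aim for runs in the opposite direction to the one the paper establishes: Remark~\ref{BCD} derives Conjecture~\ref{conj2} \emph{from} Conjectures~\ref{conj3} and~\ref{conj4}, and asserts only that, \emph{given} Conjecture~\ref{conj3}, Conjectures~\ref{conj2} and~\ref{conj4} are essentially equivalent; nowhere does the paper claim that Conjectures~\ref{conj1} and~\ref{conj2} imply Conjecture~\ref{conj3}. Second, even granting that each $e^{i\vartheta_j}=x_j^2/x_0$ is an algebraic unit of the small ray class field $K^{\pp\jj}$ (your $E$) lying on the unit circle at the complex places and positive at the real ones, this does not single out the Stark units: by the rank count in Section~\ref{sec:Stark_units} the group of such ``extra'' units has $\ZZ$-rank $\#G/2$, so there are many multiplicatively independent units with precisely these archimedean properties, of which the Stark unit is one very particular element pinned down by the zeta derivatives $\delta'(0,\sigma)$. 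Moreover your phrase ``should force $e^{i\vartheta_j}$ to be an algebraic unit'' is itself unargued --- the SIC equations give polynomial relations over $\CC$, not integrality, and the norm computation $\nkq{x_0}=3-d=-n^2$ does not by itself control the valuations of the individual $x_j$. None of this undercuts your final verdict, which matches the paper's; but the intermediate steps should not be presented as consequences of Conjectures~\ref{conj1} and~\ref{conj2} plus the Stark conjecture.
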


To construct a SIC fiducial from these conjectures one needs to order
the components correctly.  Standard conjectures about SIC symmetries
imply that we need $(d-1)/{3\ell}$ phase factors in the almost flat
fiducial.  At the same time, there are $(d-1)/{3\ell}$ Stark phase
units, cyclically ordered by a Galois group.  The action of the
Clifford group \cite{Marcus} on fiducial vectors suggests that one
should order the non-zero indices $j=1,\ldots,d-1$ in
eq.~\eqref{eq:unnormfid} with respect to the multiplicative group of
invertible elements in $\ZZ_d^\times$. For prime dimensions, this
group is cyclic as well.  These orderings can be matched, although we
do not know \emph{a priori} in which way.  Furthermore, there is an
ambiguity in choosing the square roots in \eqref{eq:unnormfid}, which
we are able to reduce to the choice of only one sign.  Thus the
conjectures do not fully specify the vector, and a small amount of
trial and error remains in identifying the correct choices.

Conjecture \ref{conj1} is closely related \cite{Roy,Mahdad} to anti-unitary
symmetry, as will be shown in Section~\ref{sec:IB1}.  Concerning
Conjecture \ref{conj2}, note that for the dimensions under consideration
$\sqrt{d+1}$ is always in $K$, but it is by no means obvious that
$\sqrt{x_0 e^{i\vartheta_j}}$ is in the small ray class
field. Nevertheless this will be made somewhat more plausible in
Section \ref{sec:Gary3}.
Conjecture \ref{conj3} is new, and encourages the hope that one might be able to
use the properties of Stark units to prove SIC existence.  
This paper has a more limited aim. 
Its purpose is to develop techniques
that enable us to calculate fiducial vectors with these properties,
and to verify that they are indeed SIC fiducials.
The conjectures can be modified to cover the case when the
dimension is a composite number of the form $n^2+3$.

Our constructions rely on algebraic number theory, on the
representation theory of the Clifford group, and on advanced
calculational recipes. In order to make the logic clear, also for
readers that are new to one or more of these topics, we now give a
summary of the major features of each section.

Sections \ref{sec:Gary1}--\ref{sec:Gary3} establish the number
theoretical framework for our recipe. Section \ref{sec:Gary1} starts
with a review of the dimension towers $\{ d_\ell \}_{\ell =
  1}^\infty$, introduced in Ref.~\citenum{AFMY}. It is shown that
whenever $d_1$ is of the form $n^2+3$ then so is $d_\ell$ for each odd
integer $\ell$. We go on to prove a number of technical lemmas that
will be needed in the sequel.  Section \ref{sec:Gary2} begins with a
review of some relevant theory of ray class fields and their
moduli. We then consider the splitting of the modulus, as in
eq.~\eqref{splitmod}. At the end of this section we know the degrees
of the resulting small ray class fields. Section {\ref{sec:Gary3}}
introduces the key objects of this paper, the Stark units, and
sketches how they can be calculated as special values of derivatives
of partial Dedekind zeta functions (which are generalisations of the
Riemann and Hurwitz zeta functions). We summarise some key points of Stark's
construction. Finally we prove a non-trivial result regarding the
square root that appears in eq.~\eqref{eq:unnormfid}. At the end of
this section we have in hand the number theoretical underpinning for
the calculations in Section \ref{sec:Markus}.

In Section \ref{sec:IB1} we enter Hilbert space and state the standard
conjectures about SIC symmetries on which we rely.  To do this we need
the representation theory of the Weyl--Heisenberg and Clifford
groups. A short summary of the relevant theory, with references, is
provided in Appendix \ref{sec:AppB}. We show that the standard
conjectures come close to implying that a fiducial vector taking the
form given in eqs.~\eqref{Psi2}--\eqref{eq:unnormfid} exists.  In
Section \ref{sec:IB2} we discuss the ``decoupling'' phenomenon: that
is to say the expectation that a small ray class field should suffice
to construct this fiducial vector.  Indeed, despite their apparent
ubiquity in defining the geometry via the Heisenberg-Weyl group, the
cyclotomic numbers nevertheless seem mysteriously to vanish from the
final vector entries.

In Section \ref{sec:Markus} we state, in ten precise steps, our recipe
for how to construct SIC fiducial vectors from Stark units. We present
a combination of numerical and algebraic techniques that allow us to
carry out the required calculations for dimensions which are
considerably larger than that of any previous example (see
Table~\ref{tab:prim}).  Section \ref{sec:IB3} illustrates these
calculations for two low dimensions ($d = 7$ and $d = 199$) where
computational complexity is not an issue.

In Section \ref{sec:IB4} we compare our method of calculating SICs to
previous methods. We then raise two open questions. Finally we give a
preview of our results in non-prime dimensions, where we have also
found solutions using similar methods; the highest being $d=39604$.
Section \ref{sec:summary} gives our conclusions. At the end of the
paper we will be conjecturally in a position to program a computer to
calculate a SIC from scratch; but not in a position to specify in
advance the precision that will be needed for the numerical part of
the calculations.

Throughout the paper the integer $d \geq 4$ will continue to signify
the {dimension} for the ambient complex Hilbert space in which the
SIC-POVMs, or sets of equiangular lines, are supposed to exist.
Because of their somewhat technical nature, Sections \ref{biquad} and
\ref{glob} may be left out on a first reading.

\section{The geometric dimension $d$ and Pell's equation in $\QQ(\sqrt{D})$}\label{sec:Gary1}
A good reference for this section is the book by Cohn\cite{harv}.  Fix
an integer $d\geq4$.  Write $K$ for the real quadratic
field $\QQ(\sqrt{D})$, where $D\geq2$ is the square-free part
of $(d+1)(d-3)$.  Let $\Gal{K}{\QQ}$ denote the Galois group of the
number field extension $K/\QQ$, with unique non-trivial
element $\tau$.  We shall often use the same symbol, where it will
cause no confusion, for a lifting of $\tau$ to Galois groups of
which $\Gal{K}{\QQ}$ is a quotient.  The field $K$ can be embedded
into the real numbers $\RR$ in one of two ways: either by
sending~`$\sqrt{D}$' to $\sqrt{D}>0$, or to $-\sqrt{D}<0$.  Fix $\jj$
to be the embedding $K \hookrightarrow{} \RR$ under which it is sent
to $\sqrt{D}>0$, with $\jj^{\tau}$ denoting the other.

The unique non-trivial element $\tau \in \Gal{K}{\QQ}$ then
interchanges these embeddings, or \emph{(real) infinite places}, $\jj$
and $\jj^{\tau}$ of $K$.  Note that here we are using the standard
notation from Galois theory, that an object $\alpha$ which is acted
upon by a field automorphism $\sigma$ is written $\alpha^\sigma$, 
with the convention~$\alpha^{(\sigma\tau)} = (\alpha^\tau)^\sigma$.
Once these embeddings are fixed, any non-zero algebraic
number $\alpha \in K$ has a well-defined~\emph{signature}, being one
of the four possibilities $(\pm,\pm)$ given by the sign $\pm$
of $\alpha$ under each respective embedding.

For example, let $D=5$ and consider $u = (1+\sqrt{5})/2 \in K$.
Then $\jj(u) > 0$ but $\jj^{\tau}(u) < 0$, so the signature of $u$
would be $(+,-)$, as for example would be that of $1/u$.  In
contrast $-u$ or $-1/u$ would have signature $(-,+)$; and $u^2$
or $1/u^2$, $(+,+)$.  The algebraic number $u^2$ would then be
referred to as being~\emph{totally positive}; similarly $-u^2$ would
be~\emph{totally negative}.

We shall need the notions~\cite{lang} of the \emph{norm} and the
\emph{trace} of an algebraic number $\alpha \in F$ down to its base
field $k$.  Namely, in our situation where the extension $F/k$ will
always be Galois, the norm and trace of $\alpha$ are respectively
given by the product and sum of all of its Galois
conjugates $\alpha^\sigma$, for $\sigma \in \Gal{F}{k}$: 
$\norm{F}{k}{\alpha} = \prod_{\sigma \in \Gal{F}{k}} \alpha^\sigma$
and $\tr{F}{k}{\alpha} = \sum_{\sigma \in \Gal{F}{k}} \alpha^\sigma$.

By $\ZZ_F$ we will denote the ring of \emph{algebraic integers} of a
number field $F$: that is to say, all elements $\alpha \in F$ whose
minimal polynomial $f_\alpha(X) \in \ZZ[X]$ is \emph{monic}.  The fact
that $\ZZ_F$ is a \emph{Dedekind domain} means in particular that any
ideal factorises uniquely into a product of prime ideals.  The group
of \emph{units}, or multiplicatively invertible elements, of a
ring $R$ will be denoted by $R^\times$.  Note that $\ZZ_F^\times$
consists precisely of those $\alpha \in \ZZ_F$ for which $f_\alpha(X)$
has constant term~$\pm1$.

By Dirichlet's unit theorem, $\ZZ^\times_K$ is isomorphic as an
abelian group to $\{ \pm 1 \} \times u^\ZZ$ for some unit $u$.  The
four possible choices of generator for the infinite part are $\{ \pm u
, \frac{\pm1}{u} \}$, any one of which may be chosen to be
`the' \emph{fundamental unit} of $\ZZ_K$ (or of $K$).  Throughout this
paper we let $u_K \in {\ZZ^\times_K}$ denote the unique choice which
is $>1$ under $\jj$.  Notice that \emph{in the particular
embedding} $\jj$, this gives an ordering of the fundamental units:
$-u_K < -1 < -\frac{1}{u_K} < 0 < \frac{1}{u_K} < 1 < u_K; $ and that
in any event, $\tau$ swaps $u_K$ with $(\norm{K}{\QQ}{u_K}) / u_K$,
which in all of the cases in this paper will be $-1/u_K$ (see
Lemma~\ref{nm1} below).

Before getting into the technical details, let us give a brief
description of what is proven in the next few sections leading up to
Theorem~\ref{skyxi}. Because of Conjecture~\ref{conj3} we are
interested in square roots of Stark units.  
If the standard conjectural properties of Stark units hold (which our
computations indeed verify, see Section~\ref{hypotheses},
Hypothesis~\ref{sqaw}), we know that the Stark units lie in a certain
abelian Galois extension of $K$ and that their square roots will lie
in a further quadratic extension of that which is still abelian over
$K$. It is important for the construction to know that this extension
is always the one obtained by adjoining a particular square root
$\xea=\sqrt{x_0}$, our geometric scaling factor.  We demonstrate this
by showing that the restricted nature of the ramification above primes
over $2$ forces the quadratic extension to be the composite with a
unique quadratic extension of $K$. The restriction can be expressed
either in terms of discriminants or conductors, the two giving
essentially equivalent information for ramified quadratic extensions
at primes over $2$.  We choose to talk primarily in terms of
conductors since this fits naturally into the class field setting.

\subsection{Towers of dimensions over $K$ when $\nkq{u_K} = -1$}\label{sec:towers}
Fix~$D>1$ and~$K = \QQ(\sqrt{D})$ as above. 
As per Ref.~\citenum{AFMY}, we may solve for an infinite series of dimensions~$d_\ell = d_\ell(D)$ for~$\ell = 1,2,3,\ldots$:
\begin{align}\label{deez}
  d_\ell  &=
  \begin{cases}
	u_K^{\ell} + u_K^{-\ell} + 1   & \text{if $u_K$ is totally positive (so $\nkq{u_K} = +1$)}\\
	u_K^{2\ell} + u_K^{-2\ell} + 1 & \text{otherwise ($\nkq{u_K} = -1$).}
  \end{cases}
\end{align}
These are exactly the~$d_\ell\ge 4$ which yield the chosen quadratic field~$K =
\QQ(\sqrt{D})$, via~$D$ being the square-free part of~$(d_\ell-1)^2-4
= (d_\ell+1)(d_\ell-3)$.  For example, again where~$D=5$, we find~$u_K
= (1+\sqrt{5})/2$ and so the sequence of dimensions indexed by
successive powers of~$u_K^2 = (3+\sqrt{5})/2$ is
\begin{equation}
4,8,19,48,124,323,844,2208,5779,15128,39604,\ldots
\end{equation}
Any integer $d = d_\ell(D) \geq 4$ is valid as a dimension and will
correspond to a unique~$D > 1$ and a unique $\ell\geq1$ placing it in the
list of dimensions associated to that $D$.

A fundamental unit $u_K$ is usually found by searching for a solution
to~\emph{Pell's equation} using continued fractions.  Its
norm $\nkq{u_K} = u_K u_K^{\tau}$ may be either $+1$ or $-1$.  To give
an \emph{a priori} condition that the norm should be $-1$ (that is,
given $D$ but without calculating the fundamental unit by some
continued fraction-style algorithm), is one of the oldest open
problems in Diophantine analysis.  However by \eqref{deez}, such
fields are precisely those yielding dimensions of the form $d=n^2+3$,
as follows.

\begin{lemma}[Theorem 1 in Ref.~\citenum{yokoi}]\label{nm1}
Let~$K = \QQ(\sqrt{D})$ be a real quadratic field with fundamental unit~$u_K$. 
With notation as in eq.~\eqref{deez}, the following statements are equivalent: 
\begin{enumerate}[\rm (I)]
\item \label{won}
$\nkq{u_K} = -1$: that is, $u_K$ and $u_K^\tau$ have opposite signs and so the first totally positive power of~$u_K$ is~$u_K^2$. 
\item\label{too}
For every odd positive integer~$\ell$,~$d_\ell$ is of the form~$n_\ell^2+3$ for some integer~$n_\ell$. 
\item\label{tree}
There exists an integer~$\ell$ such that~$d_\ell$ is of the form~$n_\ell^2+3$ for some integer~$n_\ell$. 
\item\label{fore}
$d_1$ is of the form~$n_1^2+3$ for some integer~$n_1$. 
\item\label{fire}
There exists an integer~$n$ such that the square-free part of~$n^2+4$ is~$D$.
\end{enumerate}
\end{lemma}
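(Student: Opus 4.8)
The plan is to prove the five conditions equivalent by closing the cycle (I) $\Rightarrow$ (II) $\Rightarrow$ (IV) $\Rightarrow$ (III) $\Rightarrow$ (I) and, separately, establishing the pendant equivalence (I) $\Leftrightarrow$ (V). Two links of the cycle are free: (IV) is just the case $\ell=1$ of (II) (recall that $1$ is odd), and (III) follows from (IV) on taking $\ell=1$. So the substance is concentrated in (I) $\Rightarrow$ (II), in (III) $\Rightarrow$ (I), and in (I) $\Leftrightarrow$ (V); of these, the link (III) $\Rightarrow$ (I) will be the delicate one.

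For (I) $\Rightarrow$ (II) I would use that $\nkq{u_K}=-1$ is the same as $u_K^\tau=-u_K^{-1}$, so that for odd $\ell$ one has $u_K^{-\ell}=-\bigl(u_K^\ell\bigr)^\tau$. Hence $n_\ell:=u_K^\ell-u_K^{-\ell}=u_K^\ell+\bigl(u_K^\ell\bigr)^\tau=\tr{K}{\QQ}{u_K^\ell}$ is a rational integer, and it is positive because $u_K^\ell>1>u_K^{-\ell}>0$. Then the defining formula \eqref{deez} gives $d_\ell=u_K^{2\ell}+u_K^{-2\ell}+1=\bigl(u_K^\ell-u_K^{-\ell}\bigr)^2+3=n_\ell^2+3$, as required.

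For (I) $\Leftrightarrow$ (V) I would route through the Pell-type equation $X^2-DY^2=-4$. Since $D$ is square-free, the square-free part of $Ds^2$ equals $D$, so (V) is exactly the assertion that this equation has a solution with $Y\ge 1$. If $\nkq{u_K}=-1$, writing the fundamental unit as $u_K=(a+b\sqrt{D})/2$ with $a,b\in\ZZ$ and $b\ne 0$ gives $a^2-Db^2=4\,\nkq{u_K}=-4$, so $(a,b)$ solves it. Conversely, a solution $(n,s)$ makes $\varepsilon:=(n+s\sqrt{D})/2$ a root of the monic integral polynomial $T^2-nT-1$, hence an algebraic integer, and then a unit of norm $-1$; and the existence of any unit of norm $-1$ forces $\nkq{u_K}=-1$, since otherwise every unit $\pm u_K^k$ would have norm $(+1)^k=+1$.

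For (III) $\Rightarrow$ (I) I would argue by contraposition: assume $\nkq{u_K}=+1$, so $u_K^\tau=u_K^{-1}$, and suppose for contradiction that $d_\ell=n^2+3$ for some $\ell\ge 1$. Then $u_K^\ell+u_K^{-\ell}-2=n^2$, so the positive real number $t:=\sqrt{u_K^\ell}>1$ satisfies $(t-t^{-1})^2=n^2$ and hence $t=\bigl(|n|+\sqrt{n^2+4}\bigr)/2\in\QQ\bigl(\sqrt{n^2+4}\bigr)$. But $t^2=u_K^\ell$ is irrational (a rational unit of $\ZZ_K$ is $\pm1$, while $u_K^\ell>1$), so $K=\QQ(u_K^\ell)\subseteq\QQ\bigl(\sqrt{n^2+4}\bigr)$; comparing degrees, after noting $n\ne0$ (otherwise $u_K^\ell=1$) and that $n^2+4$ is a perfect square only when $n=0$, forces $K=\QQ\bigl(\sqrt{n^2+4}\bigr)$, i.e. (V) holds, and then $\nkq{u_K}=-1$ by the previous paragraph --- a contradiction. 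This is the step that needs the most care: passing from a purely real identity for $\sqrt{u_K^\ell}$ to an equality of number fields, disposing of the degenerate cases ($n=0$, $n^2+4$ a square, $u_K^\ell$ rational), and leaning on the classical fact that a single unit of norm $-1$ already pins down $\nkq{u_K}=-1$. One should also keep in mind throughout that $\ell$ in (II)--(IV) ranges over $\ell\ge1$ and that $d_\ell\ge4$, which is what rules out the spurious $n=0$.
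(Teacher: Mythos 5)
Your proof is correct. Note that the paper itself offers no proof of this lemma --- it is imported wholesale as Theorem~1 of Yokoi's paper --- so there is no internal argument to compare against; what you have written is a complete, self-contained elementary verification. Your cycle (I)\,$\Rightarrow$\,(II)\,$\Rightarrow$\,(IV)\,$\Rightarrow$\,(III)\,$\Rightarrow$\,(I) together with (I)\,$\Leftrightarrow$\,(V) does cover all the equivalences, and the individual steps check out: the identity $n_\ell=u_K^\ell-u_K^{-\ell}=\tr{K}{\QQ}{u_K^\ell}$ and the consequent $d_\ell=n_\ell^2+3$ are exactly the facts the paper records just after the lemma (Corollary~\ref{dee} and eq.~\eqref{hellsbells}); the translation of (V) into solvability of $X^2-DY^2=-4$ and the observation that a single norm-$(-1)$ unit forces $\nkq{u_K}=-1$ are the standard Pell-theoretic ingredients; and your contrapositive for (III)\,$\Rightarrow$\,(I), passing from $(t-t^{-1})^2=n^2$ with $t=\sqrt{u_K^\ell}$ to $K=\QQ(\sqrt{n^2+4})$, correctly disposes of the degenerate cases ($n=0$ is excluded because $d_\ell>3$, and $n^2+4$ is a square only for $n=0$). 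The one direction the paper does spell out in its own words --- that even-indexed $d_{2\ell}$ are never of the form $m^2+3$, via $d_{2\ell}=n_\ell^2(n_\ell^2+4)+3$ --- is not needed for the lemma and you rightly did not use it.
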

\begin{corollary}\label{dee}
For any odd~$\ell$, writing $d_\ell = n_\ell^2+3$, the minimal
polynomial of~$u_K^\ell$ is $X^2 - n_\ell X - 1$, and thus $u_K^\ell =
(n_\ell + \sqrt{d_\ell+1})/2$.
\end{corollary}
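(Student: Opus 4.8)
The plan is to read off the trace and norm of $u_K^\ell$ from the data already assembled. By Lemma~\ref{nm1}, the hypothesis that $d_\ell$ has the form $n_\ell^2+3$ (for this, or indeed any, $\ell$) already forces $\nkq{u_K}=-1$, so that $u_K^\tau=-1/u_K$. Since $\ell$ is odd this gives $\nkq{u_K^\ell}=(\nkq{u_K})^\ell=-1$ and $(u_K^\ell)^\tau=(u_K^\tau)^\ell=-u_K^{-\ell}$, whence $\tkq{u_K^\ell}=u_K^\ell+(u_K^\ell)^\tau=u_K^\ell-u_K^{-\ell}$.

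Next I would extract this trace from the defining relation~\eqref{deez}. We are in its second (``otherwise'') branch, $d_\ell=u_K^{2\ell}+u_K^{-2\ell}+1$, so
\begin{equation}
d_\ell-3=u_K^{2\ell}+u_K^{-2\ell}-2=\bigl(u_K^\ell-u_K^{-\ell}\bigr)^2=\bigl(\tkq{u_K^\ell}\bigr)^2 .
\end{equation}
Comparing with $d_\ell-3=n_\ell^2$ yields $\tkq{u_K^\ell}=\pm n_\ell$, and since $\jj(u_K)>1$ forces $u_K^\ell-u_K^{-\ell}>0$ we may simply fix the sign of $n_\ell$ so that $\tkq{u_K^\ell}=n_\ell$. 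Therefore $u_K^\ell$ is a root of the monic integral polynomial $X^2-n_\ell X-1$; this polynomial is irreducible over $\QQ$ because $u_K^\ell$, being a unit of $\ZZ_K$ with $\jj(u_K^\ell)=\jj(u_K)^\ell>1$, is not a rational integer and hence has degree $2$ over $\QQ$. So $X^2-n_\ell X-1$ is exactly its minimal polynomial (which also re-confirms, as expected, that $u_K^\ell\in\ZZ_K$).

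Finally, solving this quadratic gives $u_K^\ell=\tfrac12\bigl(n_\ell\pm\sqrt{n_\ell^2+4}\bigr)=\tfrac12\bigl(n_\ell\pm\sqrt{d_\ell+1}\bigr)$, where I used $n_\ell^2+4=(d_\ell-3)+4=d_\ell+1$; the sign must be $+$, since $\sqrt{d_\ell+1}>|n_\ell|$ makes the root with the minus sign negative whereas $u_K^\ell>0$. I do not expect a genuine obstacle here — the content is elementary once Lemma~\ref{nm1} is in place — and the only thing that needs care is the sign bookkeeping: invoking Lemma~\ref{nm1} to secure $\nkq{u_K}=-1$, pinning down the a priori sign-ambiguous integer $n_\ell$, and choosing the correct branch of the square root, each of which is determined by the fixed embedding $\jj$.
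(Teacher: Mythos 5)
Your proposal is correct and follows exactly the route the paper intends: the paper leaves the corollary unproved as an immediate consequence of Lemma~\ref{nm1} and eq.~\eqref{deez}, with the remark after the corollary and eq.~\eqref{hellsbells} confirming that the whole content is the trace computation $\tkq{u_K^\ell}=u_K^\ell-u_K^{-\ell}=n_\ell$ together with $\nkq{u_K^\ell}=-1$. Your careful handling of the sign of $n_\ell$ and the choice of square-root branch via the embedding $\jj$ matches the paper's conventions.
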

Note that the trace of $u_K^\ell$ in these cases is $n_\ell = u_K^\ell - u_K^{-\ell}$ rather than $u_K^\ell + u_K^{-\ell}$ as it would be in the totally positive case when~$\norm{K}{\QQ}{u_K} = +1$. 
The even-numbered terms~$d_2,d_4,d_6,\ldots$ in~\eqref{deez} satisfy
the formula (cf. Section~3 of Ref.~\citenum{AFMY}):
\begin{equation}\label{twotimer}
d_{2\ell} = d_\ell(d_\ell-2). 
\end{equation} 
If this product~$d_{2\ell}$ were itself to have the form~$m^2 + 3$ for some integer~$m$, then writing~$d_\ell = n_\ell^2+3$: 
\begin{equation}
m^2 + 3  =  d_{2\ell}  =  d_\ell(d_\ell-2)  =  (n_\ell^2+3)(n_\ell^2+1)  =  n_\ell^4+4n_\ell^2+3  =  n_\ell^2 ( n_\ell ^2 + 4 ) + 3,
\end{equation}
forcing~$n_\ell ^2 + 4$ to be a square, which is impossible for~$n_\ell \geq 1$. 
Hence \emph{every} integer~$>3$ of the form~$n^2+3$ is a~$d_\ell = d_\ell(D)$ for some \emph{odd}~$\ell$ and some square-free~$D\geq2$ for which the real quadratic field~$\QQ(\sqrt{D})$ contains a unit of norm~$-1$. 

On the other hand it is interesting to observe that for \emph{any}
square-free~$D$ and \emph{any}~$k\geq1$, $d_{2k}(D)$ (which upon
rearrangement of~\eqref{twotimer} is $(d_k-1)^2 - 1$), is always of
the form~$d_{2k}(D) = m^2D+3$, where~$m$ is the integer~$(u_K^{\delta
  t} - u_K^{-\delta t})/\sqrt{D}$; defining~$\delta$ in turn as~$2$
when~$\norm{K}{\QQ}u_K = -1$ and~$1$ otherwise.

Hardy and Littlewood made the conjecture (\emph{inter alia})
in~\emph{Conjecture F} of Ref.~\citenum{Hardy} --- which is now
subsumed within the vast so-called \emph{Bateman--Horn Conjecture} ---
that infinitely many integers in the series~$n^2+3$ are in fact prime.
Whether or not this is true, the decomposition of their prime factors
inside~$\ZZ_K$ is determined \emph{a priori}, as follows.

\begin{lemma}\label{spp}
Suppose that~$d$ has the form~$n^2+3$ \rm(\it so~$D \equiv 1,2$ or $5 \bmod 8$\rm)\it. 
\begin{enumerate}[\rm (I)]
\item\label{pea}
If~$p$ is an odd prime dividing into~$d$, then it splits in~$K/\QQ$. 
\item\label{soup} The prime $2$
  \begin{itemize}
    \item splits   iff $D \equiv 1 \bmod 8$ (iff $n_1\equiv0\bmod8$),
    \item ramifies iff $D \equiv 2 \bmod 8$ (iff $n_1\equiv2\bmod4$),
    \item is inert iff $D \equiv 5 \bmod 8$ (iff $n_1$ is odd (and
      so~$2\mid d$);\\
               or  $n_1 \equiv4\bmod8$).
  \end{itemize}
\end{enumerate}
\end{lemma}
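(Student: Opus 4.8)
The plan is to reduce everything to elementary congruences by means of the identity $(d+1)(d-3)=(n^2+4)\,n^2$: since multiplying by the square $n^2$ does not change the square-free part, $D$ is simply the square-free part of $n^2+4$ (this is Lemma~\ref{nm1}(V)), so that both (I) and (II) become statements about the prime factorisation of $n^2+4$. Note $n\ge1$ since $d\ge4$.

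First I would settle the bracketed hypothesis $D\equiv1,2,5\bmod8$ by running through the residue of $n$ modulo $8$. If $n$ is odd then $n^2+4\equiv5\bmod8$ is itself odd, so $D\equiv5\bmod8$. If $n\equiv2\bmod4$ then $v_2(n^2+4)=3$ with $(n^2+4)/8$ odd and $\equiv1\bmod4$, so that $D$ is twice an integer $\equiv1\bmod4$, hence $D\equiv2\bmod8$. If $n\equiv0\bmod4$, write $n=4k$, so $n^2+4=4(4k^2+1)$ with $4k^2+1$ odd; here $4k^2+1\equiv1\bmod8$ when $k$ is even ($n\equiv0\bmod8$) and $\equiv5\bmod8$ when $k$ is odd ($n\equiv4\bmod8$), and taking square-free parts gives $D\equiv1\bmod8$ in the first case and $D\equiv5\bmod8$ in the second. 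This exhausts all $n$, and applied with $n=n_1$ it simultaneously yields the three bracketed equivalences in (II) stated in terms of $n_1$; the parenthetical ``$2\mid d$'' holds because $n$ odd forces $d=n^2+3\equiv4\bmod8$.

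For part (I) I would take an odd prime $p\mid d$. From $d\equiv0\bmod p$ we get $n^2+4\equiv1\bmod p$, so $p\nmid n^2+4$; moreover $p\nmid n$ unless $p=3$. Hence $v_p\big((d+1)(d-3)\big)=v_p(n^2)=2v_p(n)$ is even, which forces $p\nmid D$ and therefore $p$ unramified in $K$ (its discriminant being $D$ or $4D$, with $p$ odd). Cancelling the square $p^{2v_p(n)}$ from $(d+1)(d-3)=Dk^2$ leaves $D\,k_0^2=(n^2+4)\,(n/p^{v_p(n)})^2$ with $p\nmid k_0$; reducing mod $p$ shows $D$ is a nonzero square modulo $p$, so $\leg{D}{p}=1$ and $p$ splits in $K/\QQ$. (As a by-product, $n^2\equiv-3\bmod p$ gives $\leg{-3}{p}=1$, i.e.\ $p\equiv1\bmod3$ for $p\ne3$, recovering the Introduction's remark on the primes dividing $d$.) Part (II) then follows from the classical description of the decomposition of $2$ in a real quadratic field --- $2$ splits iff $D\equiv1\bmod8$, is inert iff $D\equiv5\bmod8$, ramifies iff $D\equiv2,3\bmod4$ --- using that by the first step only $D\equiv2\bmod8$ arises among the ramified residues, and reading the conditions back in terms of $n_1$ via that step.

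The only genuinely fiddly part is the $2$-adic bookkeeping in the first step: tracking the square-free part correctly through the factor $n^2$ and through the power of $2$ dividing $n^2+4$, and then checking that each of the $n_1$-reformulations in (II) lines up. I would also avoid routing part (I) through Dedekind's factorisation criterion applied to $x^2-(d-1)x+1$ (whose discriminant is $(d+1)(d-3)$): at $p=3$ the index $[\ZZ_K:\ZZ[\,\cdot\,]]$ may be divisible by $3$, so that criterion breaks down precisely at the prime where it would be most awkward to repair, whereas the direct congruence argument above treats $p=3$ on the same footing as every other odd prime.
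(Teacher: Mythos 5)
Your proposal is correct and follows essentially the same route as the paper: part (\ref{pea}) by showing $\leg{D}{p}=1$ via the identity $(d+1)(d-3)=(n^2+4)\,n^2$ together with $d\equiv0\bmod p$, and part (\ref{soup}) by elementary congruences modulo $8$ combined with the standard description of how $2$ decomposes in $\QQ(\sqrt{D})$. The only (welcome) refinement is that your valuation-and-cancellation argument treats $p=3$ uniformly, whereas the paper handles $p=3$ separately by observing $D\equiv1\bmod3$.
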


\begin{proof}
The assertions about which classes of $D\bmod 8$ appear, and the
relations between $D$ and $n_1$, follow from basic congruence
arguments applied to the definitions.  For (\ref{pea}): an odd
prime $p$ splits in $\QQ(\sqrt{D})/\QQ$ if and only if $\leg{D}{p} =
1$.  Since by definition $(d+1)(d-3) = f^2 D$ for some $f\in\NN$,
when $p\neq3$, we just unwind the definitions using Legendre
symbols:
\begin{equation}
\leg{D}{p} = \leg{f^2}{p} \leg{D}{p} =  \leg{f^2 D}{p}  =  \leg{(d+1)(d-3)}{p}  =  \leg{d+1}{p} \leg{n^2}{p}  =  \leg{d+1}{p}  =  1,
\end{equation}
using the fact that $d \equiv 0 \bmod p$. 
When $p=3$ we simply observe that $D \equiv 1 \bmod 3$. 
Part~(\ref{soup}) follows from the standard result that $\{1,\sqrt{D}\}$
is a $\ZZ$-basis for $\ZZ_K$ iff $D\equiv2,3\bmod4$, see for example
\S~3.7 of Ref.~\citenum{harv}.
\end{proof}
Henceforth in the paper the dimensions will always be of the form $d =
d_\ell = n_\ell^2+3$, where $\ell\geq1$ is odd.  Conversely, given any
positive $n \in \ZZ$ there is associated a unique value of $D$, being
the square-free part of $d+1 = n^2+4$.  Hence by Lemma~\ref{nm1}, for
any fixed $D$ there is a series of odd-numbered (but not necessarily
odd!) dimensions $d_\ell = n_\ell^2+3$ for $\ell=1,3,5,7,\dots$; where
we may recover each $n_\ell$ (up to sign) via
\begin{equation}\label{hellsbells}
n_\ell  = n_\ell(D)  =  \sqrt{d_\ell-3}   =  u_K^\ell - u_K^{-\ell} = \tr{K}{\QQ}{u_K^\ell} . 
\end{equation}

\subsection{The quartic field~$K(\qmuk)$}\label{biquad}
We need to link the geometry of the fiducial vector with the number
theory.  In particular we need an arithmetic handle on the
{scaling} factor which is the square root of the quantity $x_0$
that appears in the fiducial vector, defined in~\eqref{eq:unnormfid}.
We start this by relating the quadratic extension $K(\qmuk) / K$ `at
the bottom of the tower' to the Stark units.  We need to invoke some
concepts from the theory of $p$-adic (local) fields, for which the
book by Cassels\cite{cassLF} is an ideal reference; with more general
information for example in Ref.~\citenum{cassfro} and~\citenum{serreLF}.  
In particular for any prime ideal $\qq$ of the 
ring of integers $\ZZ_K$ of a number field $K$ we shall denote
by $\K_\qq$ and $\ZZ_\qq$ respectively the field of $\qq$-adic numbers
and its ring of integers.

Throughout the paper we shall write $L = K(\qmuk)$.  The minimal polynomial over $K$ for $\qmuk$ is
$X^2+u_K$, with discriminant $-4u_K$.  So $L/K$ must have discriminant
dividing $4\ZZ_K$; in particular $2$ is the only rational prime above
which ramification could occur.  Because the signature of $u_K$
is $(+,-)$, ramification occurs just at the one infinite place $\jj$.
Under any embedding above $\jj$ the field $L$ is complex; whereas
above $\jj^\tau$ it is real.  In particular, therefore, $L/\QQ$ is
never Galois.

By way of background, the polynomial
\begin{equation}\label{quartz}
  X^4 + n_1 X^2 - 1
  = \Bigl(X-\qmuk\Bigr)\Bigl(X+\qmuk\Bigr)\left(X-\frac{1}{\sqrt{u_K}}\right)\left(X+\frac{1}{\sqrt{u_K}}\right)
\end{equation}
generates the quartic extension~$L/\QQ$, by Corollary~\ref{dee}.
Multiplying~\eqref{quartz} by its conjugate, the polynomial which
results from sending~$u_K$ to~$-u_K$ (or equivalently to~$1/u_K$)
in~\eqref{quartz}, gives the octic~$(X^4 + n_1 X^2 - 1)(X^4 - n_1 X^2
- 1) = X^8 - (d_1-1)X^4 + 1$, a generating polynomial for the
splitting field~$\overline{L}=L(i)$, the normal closure of~$L$
with~$\Gal{\overline{L}}{\QQ} \cong D_4$.

Because we only consider extensions where $\ell$ is odd, we are free
to replace every instance of $n_1$ with $n_\ell$ and $u_K$ with
$u_K^\ell$ in the foregoing.  In particular, therefore, we note for
future reference that \emph{for a fixed value of $D$},
\begin{alignat}{9}\label{dagmar}
\textit{The towers $L/K/\QQ$ will be identical for every $d_\ell = d_\ell(D)$,}\qquad\nonumber\\
\textit{for every odd $\ell \geq 1$.}
\end{alignat}
We shall characterise the quadratic extension $L/K$ in terms of restricted ramification
above $\jj$ and the primes of $K$ over 2 in Proposition~\ref{tworam} in the next section.

\section{Calculating the size of the Galois group}\label{sec:Gary2}
Given $\alpha\in \ZZ_K$ we shall often write $(\alpha)$ in the
standard way for the \emph{principal ideal} $\alpha \ZZ_K$ of $\ZZ_K$.
If $\alpha,\eta \in \ZZ_K$ are given, such that the ideals $(\alpha) =
\alpha\ZZ_K$ and $(\eta) = \eta\ZZ_K$ are \emph{co-prime} --- which is
to say, their (unique) factorisations into prime ideals share no
common prime factors; or equivalently the
ideal $\alpha\ZZ_K+\eta\ZZ_K$ is equal to $\ZZ_K$ --- then we
let $\ord{\eta}{\alpha}$ denote the order of $\alpha\bmod (\eta)$.
That is, the order of the cyclic group $\langle \alpha + (\eta)
\rangle$ generated by the image $\alpha + (\eta)$ of $\alpha$
inside $\ZZ_K/(\eta)$.  The next result is a special case of Lemma~12
of Ref.~\citenum{AFMY}.

\begin{lemma}\label{uDorder}
If~$d_\ell$ is of the form~$n_\ell^2+3$, then $\ord{d_\ell}{u_K^2} = 3\ell$. 
\end{lemma}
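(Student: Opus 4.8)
The plan is to bound the order $m := \ord{d_\ell}{u_K^2}$ above and below by $3\ell$. Set $v = u_K^2$; this is a unit of $\ZZ_K$, so $(v)$ is coprime to $(d_\ell)$ and $m$ makes sense as the order of $v$ in $(\ZZ_K/(d_\ell))^\times$. By hypothesis $d_\ell = n_\ell^2+3$, so Lemma~\ref{nm1} gives $\nkq{u_K}=-1$; hence, by~\eqref{deez}, $d_{\ell'}=v^{\ell'}+v^{-\ell'}+1$ for every $\ell'\ge 1$, and $d_{\ell'}$ has the form $n_{\ell'}^2+3$ for every odd $\ell'$, while $\ell$ itself is odd since the even-index terms $d_{2k}$ are never of this form. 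The one computation I would do at the outset is the identity
\[
 v^{3\ell'}-1 \;=\; v^{\ell'}\,(v^{\ell'}-1)\,d_{\ell'}\qquad(\ell'\ge 1),
\]
which follows from $v^{2\ell'}+v^{\ell'}+1 = v^{\ell'}\bigl(v^{\ell'}+v^{-\ell'}+1\bigr)=v^{\ell'}d_{\ell'}$ on multiplying by $v^{\ell'}-1$. Taking $\ell'=\ell$ shows $d_\ell \mid v^{3\ell}-1$ in $\ZZ_K$, so $m \mid 3\ell$; that disposes of the upper bound.

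For the lower bound I would show $m$ is not a proper divisor of $3\ell$ by checking that $v^{3\ell/q}\not\equiv 1\pmod{d_\ell}$ for every prime $q\mid 3\ell$; these primes are $q=3$ and the (necessarily odd) prime divisors of $\ell$, and I would treat $q=3$ and the primes $q\neq 3$ dividing $\ell$ separately. The case $q=3$ is quick: if $v^{\ell}\equiv 1\pmod{d_\ell}$ then, since $v^{2\ell}+v^{\ell}+1=v^{\ell}d_\ell\equiv 0$, one gets $3\equiv 0\pmod{d_\ell}$, contradicting $d_\ell\ge 4$. For a prime $q\neq 3$ dividing $\ell$, put $\ell'=\ell/q$ --- an odd integer with $1\le\ell'<\ell$, so $d_{\ell'}=n_{\ell'}^2+3$ with $n_{\ell'}=u_K^{\ell'}-u_K^{-\ell'}$ by Lemma~\ref{nm1} and Corollary~\ref{dee} --- and suppose $v^{3\ell'}\equiv 1\pmod{d_\ell}$. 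Then by the identity, and because $v^{\ell'}$ is a unit, $(d_\ell)$ divides the ideal $(v^{\ell'}-1)(d_{\ell'})$ of $\ZZ_K$. Since $\nkq{u_K}=-1$ forces $v^{\tau}=v^{-1}$, one computes $\nkq{v^{\ell'}-1}=(v^{\ell'}-1)(v^{-\ell'}-1)=3-d_{\ell'}=-n_{\ell'}^2$; comparing (absolute) ideal norms then gives $d_\ell^2\mid n_{\ell'}^2 d_{\ell'}^2$, so $d_\ell\mid n_{\ell'}d_{\ell'}$ in $\ZZ$. This I would refute by size: $n_{\ell'}<u_K^{\ell'}$ and $d_{\ell'}<u_K^{2\ell'}+2$ give $n_{\ell'}d_{\ell'}<2u_K^{3\ell'}$, while $d_\ell>u_K^{2\ell}\ge u_K^{6\ell'}>2u_K^{3\ell'}$ (using $q\ge 3$ together with $u_K^{2\ell'}>2$ and $u_K^{3\ell'}>2$, both of which follow from $u_K^{\ell'}\ge(1+\sqrt5)/2$ via Corollary~\ref{dee}); hence $0<n_{\ell'}d_{\ell'}<d_\ell$, which is absurd. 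So $v^{3\ell/q}\not\equiv 1$ in all cases, and $m=3\ell$.

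The step I expect to be the real obstacle --- and the reason for routing the lower bound through the identity and ideal norms instead of through residue fields --- is that $d_\ell$ need not be prime, so $\ZZ_K/(d_\ell)$ is neither a field nor a domain and there is no single prime of $\ZZ_K$ modulo which the order of $v$ can be read off; the norm-and-size estimate is precisely what lets one avoid examining the (possibly many) prime-power divisors of $d_\ell$ one at a time. Everything else is routine: the upper bound is immediate from the identity, and the crux of the lower bound collapses to the elementary inequality $n_{\ell'}d_{\ell'}<d_\ell$, which holds with room to spare as soon as $q\ge 3$.
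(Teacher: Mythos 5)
Your proof is correct, and it is worth noting at the outset that the paper itself supplies no argument for this lemma: it is stated as ``a special case of Lemma~12 of Ref.~\citenum{AFMY}'' and left at that. So you have produced a complete, self-contained proof where the paper only cites. The skeleton is sound: with $v=u_K^2$ (so $v^\tau=v^{-1}$ because $\nkq{u_K}=-1$, which Lemma~\ref{nm1} guarantees under the hypothesis), the identity $v^{3\ell'}-1=v^{\ell'}(v^{\ell'}-1)\,d_{\ell'}$ is exactly the algebraic content of the congruence $\dd_{3\ell}\equiv 3 \bmod \de$ that the paper uses later in the proof of Proposition~\ref{elluva}, and it immediately gives $\ord{d_\ell}{v}\mid 3\ell$. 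Your two exclusions are both airtight: the $q=3$ case via $v^{2\ell}+v^\ell+1\equiv 0$ forcing $d_\ell\mid 3$, and the $q\mid\ell$, $q\neq 3$ case via the norm computation $\nkq{v^{\ell'}-1}=-n_{\ell'}^2$ followed by the size estimate $0<n_{\ell'}d_{\ell'}<d_\ell$ (I checked the inequalities; they hold with room to spare since $u_K\ge(1+\sqrt5)/2$ and $q\ge5$). Your remark about why one should argue with ideal norms rather than residue fields is well taken, since $d_\ell$ need not be prime here. The one presentational point: your case split is phrased as ``$q=3$ and the prime divisors of $\ell$'', and when $3\mid\ell$ these overlap; the argument still covers every maximal proper divisor $3\ell/q$ of $3\ell$, but you might say explicitly that the $q=3$ branch handles the exponent $\ell$ and the $q\neq3$ branches handle the exponents $3\ell/q$, so that together they exclude all proper divisors of $3\ell$ as candidates for the order.
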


\begin{corollary}\label{uforder}
If~$d_\ell$ is of the form~$n_\ell^2+3$, then $\ord{d_\ell}{u_K} = 6\ell$. 
\end{corollary}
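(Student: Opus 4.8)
The plan is to deduce the result from Lemma~\ref{uDorder} by a short parity argument. Since $u_K \in \ZZ_K^\times$ it is invertible modulo the ideal $(d_\ell) = d_\ell\ZZ_K$, so $m := \ord{d_\ell}{u_K}$ is well defined; and since the square of a group element of order $m$ has order $m/\gcd(m,2)$, Lemma~\ref{uDorder} tells us that $m/\gcd(m,2) = \ord{d_\ell}{u_K^2} = 3\ell$. Hence either $m$ is odd and $m = 3\ell$, or $m$ is even and $m = 6\ell$. It remains only to rule out the odd case.

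To do that I would use the norm relation $\nkq{u_K} = u_K\,u_K^{\tau} = -1$, which holds precisely because $d_\ell$ has the form $n_\ell^2 + 3$ (Lemma~\ref{nm1}). Because $d_\ell$ is a rational integer, $\tau$ fixes the ideal $(d_\ell)$ and therefore induces a ring automorphism of $\ZZ_K/(d_\ell)$; applying it to the congruence $u_K^m \equiv 1 \pmod{d_\ell}$ gives $(u_K^{\tau})^m \equiv 1 \pmod{d_\ell}$ as well. Multiplying the two congruences yields $(-1)^m = (u_K\,u_K^{\tau})^m \equiv 1 \pmod{d_\ell}$, so $d_\ell \mid (-1)^m - 1$. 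If $m$ were odd this would force $d_\ell \mid 2$, contradicting $d_\ell \geq 4$. Therefore $m$ is even, $\gcd(m,2) = 2$, and $m = 2\cdot 3\ell = 6\ell$.

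I do not expect any real obstacle here: the only points needing care are the observation that $\tau$ descends to a ring automorphism of $\ZZ_K/(d_\ell)$ (immediate, since $d_\ell \in \ZZ$) and the fact that the fundamental unit has norm $-1$ rather than $+1$ in the dimensions under consideration, which is exactly the content of the equivalence in Lemma~\ref{nm1} between $d_\ell$ being of the form $n^2+3$ and $\nkq{u_K} = -1$. Everything else is the elementary arithmetic of cyclic group orders, so the proof should be only a few lines.
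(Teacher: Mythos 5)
Your proof is correct and rests on the same key fact as the paper's: Lemma~\ref{uDorder} reduces the question to the dichotomy $3\ell$ versus $6\ell$, and then $\nkq{u_K}=-1$ together with $-1\not\equiv 1 \bmod d_\ell$ (since $d_\ell\geq 4$) forces the order to be even. The paper packages this as ``$\langle u_K + (d_\ell)\rangle$ is closed under norms, hence contains $-1$, hence has even order, contradicting surjectivity of squaring,'' whereas you compute the congruence $(-1)^m = \nkq{u_K}^m \equiv 1 \bmod d_\ell$ directly; this is a more streamlined rendering of essentially the same argument.
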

\begin{proof}
By hypothesis and Lemma~\ref{uDorder}, $\ord{d_\ell}{u_K} = 3\ell$ or $6\ell$. 
Suppose~$\ord{d_\ell}{u_K} = 3\ell$: then since~$u_K^2 + (d_\ell)  \in \langle u_K + (d_\ell)  \rangle$, it follows that~$\langle u_K + (d_\ell)  \rangle  =  \langle u_K^2 + (d_\ell)  \rangle$; moreover we may map one to the other by squaring. 
Now~$\langle u_K^2 + (d_\ell)  \rangle$ is closed under taking norms down to $\left(\ZZ/d_\ell\ZZ\right)^\times$; hence~$\langle u_K + (d_\ell)  \rangle$ is as well. 
But~$\norm{K}{\QQ}{u_K} = -1$ and~$-1\not\equiv+1 \bmod (d_\ell)$, as~$d_\ell\geq4$: hence~$\# \langle u_K + (d_\ell)  \rangle$ must be even. 
So squaring is not a surjective endomorphism of~$\langle u_K + (d_\ell)  \rangle $, giving the desired contradiction. 
\end{proof}

\subsection{The Exact Sequence of Global Class Field Theory}\label{glob}
For a readable treatment of the basic notions from class field theory
which we invoke below, see for example Refs.~\citenum{cohenstev},
\citenum{milne} or \citenum{cassfro}.  For an informal introduction
situating {ray class fields} within the SIC-POVM problem, see
Section~4 of Ref.~\citenum{AFMY}.  For a general characterisation of
\emph{fractional} ideals in Dedekind domains see Chapter~9 of
Ref.~\citenum{atmac}.

Let $\mm_0$ be any \emph{integral ideal} of the ring $\ZZ_K$, and
let $\mm_\infty$ denote some --- possibly empty --- subset of $\{ \jj,
\jj^{\tau} \}$.  The formal product $\mm = \mm_0 \cdot \mm_\infty$ 
is called a \emph{modulus}, by analogy with modular arithmetic over
the integers.  There is a natural partial order on such moduli, based
on unique factorisation into prime ideals in the finite component and
on set inclusion for the infinite part.  In the absence of any
widely-accepted notation we have chosen to write $K^{\mm}$ for
the~\emph{ray class field of $K$ for the modulus $\mm$}, keeping the
alternative notations of subscripts for local fields $F_\qq$,
and $F(\alpha)$ for the extensions of $F$ by an algebraic number or
indeterminate $\alpha$.  The actual construction of a ray class field
is implicit via class field theory and will not be discussed here.

In particular $K^{(1)}$ --- which we shall denote in the standard way
just by $H_K$ --- is the \emph{Hilbert class field} of $K$, where the
$(1) = 1\cdot\ZZ_K$ signifies the ideal $\ZZ_K$ itself, the identity
in the (multiplicative) group $\JJ = \JJ(K)$ of non-zero
\emph{fractional ideals} of $\ZZ_K$.  The \emph{principal (fractional)
ideals} $\mathcal{P}$ of $\ZZ_K$ --- those ideals $(\alpha) =
\alpha\ZZ_K$ which are generated by a single element $\alpha \in K$ ---
form a distinguished subgroup of $\JJ$.  The \emph{ideal class group}
of $\ZZ_K$ is defined to be the quotient $\clK = \JJ / \mathcal{P}$.
By class field theory this is isomorphic to the Galois group
$\Gal{H_K}{K}$ and its order $h_K = \# \clK$ is what is referred to as
the \emph{class number} of $K$ (or of $\ZZ_K$).

An~\emph{abelian} extension $F$ of a number field $L$ is a Galois
field extension $F/L$ for which the Galois group $\Gal{F}{L}$ is
abelian; and a \emph{finite} extension is one which is generated by a
polynomial of finite degree.  The Kronecker--Weber
theorem~\cite{cohenstev} says that any number field which is a finite
abelian extension of $\QQ$ is necessarily a subfield of some
\emph{cyclotomic} field: that is, a field obtained from $\QQ$ by
adjoining a finite number of roots of unity.  Similarly, global class
field theory tells us that any finite abelian extension $E$ of $K$
naturally embeds inside some ray class field $K^{\mm}$.

Let $\mathfrak{R}^{\mm}$ denote the \emph{ray class group
modulo $\mm$}, defined as follows.  Take the subgroup $\JJ^{\mm}$
of $\JJ$ comprised of all non-zero fractional ideals of $\ZZ_K$ which
are co-prime to the finite part $\mm_0$ of the modulus.  Then
define $\mathcal{P}_1^{\mm}$, the \emph{ray group modulo $\mm$}, as
the subgroup of $\mathcal{P}$ of principal fractional ideals $(\alpha)
= \alpha\ZZ_K$ whose generators $\alpha\in K$ are congruent to $1$
modulo $\mm_0$, as well as being \emph{positive} at every real place
in $\mm_\infty$.  We then set $\mathfrak{R}^{\mm} = \JJ^{\mm} /
\mathcal{P}_1^{\mm}$ to be the quotient group.  Note that $\JJ^{(1)} =
\JJ$ and that $\mathfrak{R}^{(1)} = \JJ/\mathcal{P} = \clK$.

Artin's~\emph{reciprocity map} then gives a canonical isomorphism
between $\mathfrak{R}^{\mm}$ and $\Gal{K^{\mm}}{K}$, induced in our
situation by the map taking each prime $\pp$ of $\ZZ_K$ to the
corresponding \emph{Frobenius automorphism} $\sigma_\pp \in
\Gal{K^{\mm}}{K}$.  This will be a crucial ingredient in
Section~\ref{sec:Markus}, enabling us to find a canonical ordering of
the entries of the fiducial vector derived from the Stark units: once
an embedding is fixed from $K$ into $\RR$, the Artin map unequivocally
associates an element of the Galois group $\Gal{K^{\mm}}{K}$ to each
$\ZZ_K$-ideal $\qq$ coprime to $\mm$.

We state the exact sequence of global class field theory (see (2.7) in
Ref.~\citenum{cohenstev}, Theorem 1.7 in Ref.~\citenum{milne}, or \S0
in Ref.~\citenum{tate}).  With \emph{any} number field $K$ as base
field, and \emph{any} modulus $\mm = \mm_0 \cdot \mm_\infty$, the
following sequence (defining the map $\psi$) is exact:
\begin{equation}\label{globcft}
1 \rightarrow U_1^{\mm} \rightarrow {\ZZ^\times_K} \xrightarrow{\psi } \mg{\mm_0} \times{\{\pm1\}}^{\# \mm_\infty } 
	\rightarrow \mathfrak{R}^{\mm} \rightarrow \clK   \rightarrow 1,
\end{equation}
where ${\# \mm_\infty }$ is the number of real primes in $\mm$.  That
is to say, ${\mm_\infty }$ represents the embeddings of $K$ into $\RR$
which are allowed to \emph{ramify} (note that this terminology is not
universal, see for example the alternative notational set-up with the
sets $S$, $T$ in Gras' book~\cite{gras}, or Ref.~\citenum{tate}) in
the ray class field extension: namely, to be extended in such a way
that the extension contains non-real complex numbers.  The term $\ker\psi =
U_1^\mm$ is the subgroup of the global units ${\ZZ^\times_K}$ which
are simultaneously congruent to $1$ modulo $\mm_0$ and positive at the
real places in $\mm_\infty$.  In the case of real quadratic fields
this kernel has $\ZZ$-rank one and is torsion-free except possibly
when the residue class ring $\ag{ \mm_0 }$ has characteristic $2$.

We apply \eqref{globcft} to the situation which will become the normal
context for the remainder of this paper, from Section~\ref{northernxi}
onwards.  Extracting a short exact sequence from the middle
of~\eqref{globcft}, ending with the cokernel of $\psi$, we obtain:
\begin{equation}\label{snort}
1 \rightarrow {\ZZ^\times_K} / U_1^{\mm}  \xrightarrow{\psi}  \mg{\mm_0} \times  {\{\pm1\}}^{\# \mm_\infty}
	\rightarrow  \coker\psi    \rightarrow 1 , 
\end{equation}
where the tail fits back into~\eqref{globcft} via
\begin{equation}\label{snortlet}
1 \rightarrow \coker\psi \rightarrow  \mathfrak{R}^{\mm} \rightarrow \clK   \rightarrow 1 . 
\end{equation}
So $\mathfrak{R}^{\mm}$ is an extension of $\clK$ by $\coker\psi$. 
We need to invoke an independent result from Section~\ref{sec:splitting} just below. 

\begin{lemma}\label{twoprim}
Suppose that $K = \QQ(\sqrt{D})$ is as in the foregoing, with the
additional stipulation that our `dimension' $d_\ell = d_\ell(D)$ is a
\emph{prime} of the form $p = n_\ell^2+3$ for some integer $n_\ell =
n_\ell(D)$ (and where $\ell$, as always, is odd).  By Lemma~\ref{spp}
we know that $p = \pp \pp^\tau$ splits in $\ZZ_K$.  With our notation
$\jj$, $\jj^\tau$ for the (real) infinite places of $K$, and referring
to the notation of the sequence~\eqref{globcft}, the ray class field
$K^{\pp\jj}$ has the property that for a unique cyclic group $\Gamma$
of (odd) order $(p-1)/6\ell$,
\begin{equation}
\coker\psi \cong \Gamma \times C_2 .
\end{equation}
\end{lemma}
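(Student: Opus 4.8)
\emph{Proof sketch.} The plan is to read off $\coker\psi$ directly from the exact sequence~\eqref{globcft} applied with base field $K$ and modulus $\mm = \pp\jj$. First I would pin down the target of $\psi$. Since $p$ splits, $\ZZ_K/\pp\cong\FF_p$, so $\mg{\pp}\cong\FF_p^\times$ is cyclic of order $p-1$; with $\#\mm_\infty = 1$ the target is $\mg{\pp}\times\{\pm1\}\cong C_{p-1}\times C_2$, of order $2(p-1)$. By Dirichlet, $\ZZ_K^\times = \langle-1\rangle\times\langle u_K\rangle$; and since $u_K$ is positive (indeed $>1$) under $\jj$ whereas its signature is $(+,-)$, we get $\psi(-1) = (\overline{-1},-1)$ and $\psi(u_K) = (\overline{u_K},+1)$, where the bar denotes reduction mod $\pp$ and $\overline{-1}$ is the unique involution of $\FF_p^\times$.

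The crux is the order $f := \ord{\pp}{u_K}$ of $\overline{u_K}$ in $\FF_p^\times$. Via $\ZZ_K/(p)\cong\FF_p\times\FF_p$, Corollary~\ref{uforder} gives $\lcm\!\big(\ord{\pp}{u_K},\ord{\pp^\tau}{u_K}\big) = \ord{d_\ell}{u_K} = 6\ell$. Using $u_K u_K^\tau = \nkq{u_K} = -1$ (Lemma~\ref{nm1}) together with the ring isomorphism $\ZZ_K/\pp\cong\ZZ_K/\pp^\tau$ induced by $\tau$, one obtains $\ord{\pp^\tau}{u_K} = \ord{\pp}{u_K^\tau} = \ord{\pp}{-u_K^{-1}}$; computing the order of $\overline{-u_K^{-1}}$ inside the cyclic group $\FF_p^\times$ shows it equals $2f$, $f/2$, or $f$ according as $f$ is odd, $f\equiv 2$, or $f\equiv 0\pmod 4$. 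As $\ell$ is odd, $6\ell\equiv 2\pmod 4$; this excludes the last case and forces $\{f,\ord{\pp^\tau}{u_K}\} = \{3\ell,6\ell\}$. The prime called $\pp$ in the statement is the one with $f = 3\ell$; concretely it is $(\sqrt{d_\ell+1}+1)$, since modulo that prime Corollary~\ref{dee} makes $u_K^\ell + u_K^{-\ell}\equiv -1$ with $u_K^\ell u_K^{-\ell}=1$, so $u_K^\ell$ is a root of $X^2+X+1$, i.e. a primitive cube root of unity, whence the order of $u_K$ there divides $3\ell$ and, by the dichotomy just proved, equals it.

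Granting $f = 3\ell$, the rest is bookkeeping. The image $\psi(\ZZ_K^\times) = \langle(\overline{u_K},+1),(\overline{-1},-1)\rangle$ has order $2f = 6\ell$: no power of $(\overline{u_K},+1)$ has $-1$ in the second coordinate, so this subgroup is the disjoint union of $\langle(\overline{u_K},+1)\rangle$ and its coset by $(\overline{-1},-1)$. Hence $\lvert\coker\psi\rvert = 2(p-1)/6\ell = (p-1)/3\ell$. Moreover $(1,-1)\notin\psi(\ZZ_K^\times)$, for $(1,-1) = \psi(-1)^a\psi(u_K)^b$ would force $a$ odd and then $\overline{u_K}^{\,b} = -1$, impossible since $\overline{u_K}$ has odd order $3\ell$. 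Projecting $C_{p-1}\times C_2$ onto its first factor therefore induces an exact sequence
\[
1 \longrightarrow C_2 \longrightarrow \coker\psi \longrightarrow \FF_p^\times\big/\langle\overline{u_K},\overline{-1}\rangle \longrightarrow 1 ,
\]
whose right-hand term is cyclic of order $(p-1)/6\ell$ (note $\langle\overline{u_K},\overline{-1}\rangle$ has order $6\ell$, since $\overline{-1}$ lies outside the odd-order group $\langle\overline{u_K}\rangle$). Finally, because $d_\ell = p = n_\ell^2+3$ is an odd prime $n_\ell$ must be even, so $p-1 = n_\ell^2+2\equiv 2\pmod 4$; with $\ell$ odd this makes $(p-1)/6\ell$ an odd integer, the displayed sequence splits, and $\coker\psi\cong\Gamma\times C_2$ with $\Gamma$ cyclic of the stated odd order $(p-1)/6\ell$.

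The one genuinely nontrivial step is the second paragraph: showing that $f$ is exactly $3\ell$ — not merely some divisor of $6\ell$, and not $6\ell$ — rests on combining the global order (Corollary~\ref{uforder}), the norm relation $u_Ku_K^\tau=-1$, and the parity observation $4\nmid 6\ell$, and it is here that the choice between $\pp$ and $\pp^\tau$ (and, correspondingly, between the infinite places $\jj$ and $\jj^\tau$) actually matters. Everything downstream is elementary abelian group theory together with one $2$-adic valuation count.
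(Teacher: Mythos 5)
Your proof is correct, and its skeleton matches the paper's: both read off $\coker\psi$ from the exact sequence~\eqref{globcft} with $\mm=\pp\jj$, both reduce everything to the single key fact $\ord{\pp}{u_K}=3\ell$, and both finish by noting that the resulting cokernel of order $(p-1)/3\ell$ has odd part $\Gamma$ of order $(p-1)/6\ell$ (the paper uses $p\equiv 3\bmod 4$ at the outset to conclude the cokernel is outright cyclic, $C_{(p-1)/3\ell}\cong\Gamma\times C_2$; your projection-onto-the-first-factor argument lands in the same place). Where you genuinely diverge is in how $\ord{\pp}{u_K}=3\ell$ is obtained. The paper simply cites the proof of Proposition~\ref{elluva} --- a forward reference to Section~\ref{sec:splitting}, where the congruence $u_K^{3\ell}\equiv 1\bmod\de$ is extracted from the shifted Chebyshev identity $\chsh{3}{\de}-\chsh{0}{\de}=\dd_{3\ell}-3$, and the possibility $w<3\ell$ is then excluded via $w_\tau\mid 2w$ and $\lcm(w,w_\tau)=6\ell$. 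You instead prove it on the spot: the norm relation $u_K^\tau=-u_K^{-1}$ plus the elementary order count for $-x^{-1}$ in a cyclic group turns $\lcm\bigl(\ord{\pp}{u_K},\ord{\pp^\tau}{u_K}\bigr)=6\ell$ together with $6\ell\equiv 2\bmod 4$ into the dichotomy $\{3\ell,6\ell\}$, and the congruence $u_K^\ell+u_K^{-\ell}\equiv-1 \bmod (1+\sqrt{d_\ell+1})$ --- immediate from~\eqref{halves} or Corollary~\ref{dee} --- exhibits $u_K^\ell$ as a primitive cube root of unity modulo $\pp$, pinning down which of the two primes carries the odd order. This buys a self-contained lemma with no forward dependency, and it makes explicit (where the paper leaves it slightly implicit at this stage) that the normalisation $\pp=(1+\sqrt{d_\ell+1})$ is precisely what selects the factor with $\ord{\pp}{u_K}=3\ell$; the paper's route, in exchange, avoids duplicating work it must do anyway for Proposition~\ref{elluva}. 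All the verifiable details of your argument check out, including the order computation for $-x^{-1}$ in the three parity cases and the observation that $(1,-1)\notin\psi(\ZZ_K^\times)$ because $\overline{u_K}$ has odd order.
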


\begin{proof}
Under the assumptions of the lemma, the finite part $\pp$ of our
modulus $\mm = \pp\jj$ is a non-rational principal prime ideal whose
norm is an odd rational prime $p \equiv 3 \bmod 4$, and therefore
$\frac{p-1}{2} = \frac{n^2}{2}+1$ is odd and consequently since
$\ag{\pp} \iso \FF_p$,
\begin{equation}
\mg{\pp}  \cong  C_{\frac{p-1}{2}} \times C_2 . 
\end{equation}
By exactness, the image $\im\psi$ of $\psi$ in \eqref{snort} is
isomorphic to $\ZZ_K^\times/U_1^\mm$, where, as we explained in the
introduction to Section~\ref{sec:Gary1}, $\ZZ_K^\times = \langle u_K
\rangle \times \{\pm1\}$.  But the proof of Proposition~\ref{elluva}
below shows that $\ord{\pp}{u_K} = 3\ell$, which is odd, and hence:
\begin{align}\label{imgam}
\im\psi \cong C_{3\ell} \times C_2;
\end{align}
where $C_2$ maps onto the copy of $\{\pm1\}$ in \eqref{snort}
resulting from the single infinite place $\jj$ (that is, $\#\mm_\infty
= 1$); so finally in turn:
\begin{align}\label{cokgam}
\coker\psi \cong C_{\frac{p-1}{3\ell}} . 
\end{align}
But $\frac{p-1}{3\ell}$ equals $2$ times the odd integer
$\frac{p-1}{6\ell}$, so we may define $\Gamma$ as in the lemma.
\end{proof}
We shall revisit this in Lemma~\ref{seetoo} and in the proof of
Theorem~\ref{skyxi}.  There, we will identify the cyclic group
$\Gamma$ with a subgroup of the Galois group of $K^{\pp\jj}/K$.

Having introduced some of the basic exact sequences of global class field theory,
we now prove two results that will be key components in the proof of Theorem~\ref{skyxi}.

\begin{lemma}\label{ramlem}
Let $F$ be a number field and $\pp$ a prime of $F$ above $2$ of
absolute ramification index $e$.  If $u \in F$ is a $\pp$-unit and
$E=F(\sqrt{u})$ a non-trivial quadratic extension of $F$, then the
exponent of $\pp$ in the conductor of $E/F$ is at most $2e$.
\end{lemma}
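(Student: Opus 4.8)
The statement is purely local at $\pp$, so the plan is to pass to completions and then bound a discriminant exponent by exhibiting a convenient order. Write $v$ for the normalised valuation of the $2$-adic field $F_\pp$, so that $v(2)=e$ by definition of the absolute ramification index, and set $E_\pp = E\otimes_F F_\pp$. The exponent of $\pp$ in the conductor $\ff(E/F)$ is by definition the conductor exponent of the local extension $E_\pp/F_\pp$. If $\pp$ splits in $E$, then $E_\pp \cong F_\pp\times F_\pp$, the local extension is trivial, its conductor exponent is $0\le 2e$, and there is nothing to prove; so from now on I would assume $E_\pp$ is a field, i.e.\ a nontrivial quadratic extension of $F_\pp$.

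Next I would identify the conductor exponent with the discriminant exponent. By the conductor--discriminant formula applied to the quadratic (hence abelian) extension $E_\pp/F_\pp$, the relative discriminant $\mathfrak{d}(E_\pp/F_\pp)$ is the product of the conductors of the two characters of $\mathrm{Gal}(E_\pp/F_\pp)$: the trivial character contributes $\mathcal{O}_{F_\pp}$ and the nontrivial one contributes exactly $\ff(E_\pp/F_\pp)$. Hence $v\big(\mathfrak{d}(E_\pp/F_\pp)\big)$ is precisely the conductor exponent to be bounded. (Alternatively, a ramified quadratic extension has a single ramification break $t$, and one checks directly that both the discriminant exponent and the Artin conductor exponent equal $t+1$, so the same reduction works via the higher ramification filtration.)

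The remaining step is to bound $v\big(\mathfrak{d}(E_\pp/F_\pp)\big)$. Since $u\in\mathcal{O}_{F_\pp}^\times$, the element $\sqrt u$ is integral over $\mathcal{O}_{F_\pp}$ with minimal polynomial $X^2-u$, and because $E_\pp/F_\pp$ is a nontrivial quadratic extension $\{1,\sqrt u\}$ is an $F_\pp$-basis of $E_\pp$; thus $\mathcal{O}_{F_\pp}[\sqrt u]$ is a full $\mathcal{O}_{F_\pp}$-lattice in $E_\pp$ contained in $\mathcal{O}_{E_\pp}$. Its discriminant is $\mathrm{disc}(X^2-u)=4u$, and the transitivity relation $\mathrm{disc}\big(\mathcal{O}_{F_\pp}[\sqrt u]\big)=\big[\mathcal{O}_{E_\pp}:\mathcal{O}_{F_\pp}[\sqrt u]\big]^2\,\mathfrak{d}(E_\pp/F_\pp)$ shows $\mathfrak{d}(E_\pp/F_\pp)\mid(4u)$. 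As $u$ is a unit, $v(4u)=v(4)=2e$, whence $v\big(\mathfrak{d}(E_\pp/F_\pp)\big)\le 2e$; combined with the previous step this is the assertion.

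I do not anticipate a genuine obstacle: the lemma is essentially the observation that adjoining a square root of a unit changes the discriminant only by a divisor of $(4)$. The only points requiring care are the degenerate split case (handled at the outset) and reading ``conductor'' as the local Artin conductor exponent, so that the conductor--discriminant identity legitimately converts the problem into a statement about discriminants.
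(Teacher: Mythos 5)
Your proof is correct, but it takes a genuinely different route from the paper's. You reduce the conductor exponent to the discriminant exponent via the conductor--discriminant formula (for a quadratic extension the only nontrivial character has conductor equal to that of the extension, so $\mathfrak{d}(E_\pp/F_\pp)=\ff(E_\pp/F_\pp)$), and then bound the discriminant by observing that $\mathcal{O}_{F_\pp}[\sqrt{u}]$ is an order with discriminant $(4u)=(4)$, so that $v(\mathfrak{d})\le v(4)=2e$. The paper instead works directly with the class-field-theoretic definition of the conductor as the least $r$ with $U^{(r)}\subseteq \hbox{Norm}_{E_\PP/F_\pp}(E_\PP^\times)$, and verifies $U^{(2e)}\subseteq\hbox{Norm}$ by a Hilbert-symbol computation: any $v\equiv 1\bmod 4\ZZ_\pp$ is $\pp$-primary, so $F_\pp(\sqrt{v})/F_\pp$ is unramified, every unit is a norm from it, and symmetry of the symbol gives $(u,v)=(v,u)=1$. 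Your argument is more elementary and self-contained --- it needs no reciprocity for the Hilbert symbol and no facts about $\pp$-primary units, only the transitivity formula $\mathrm{disc}(\mathcal{O}_{F_\pp}[\sqrt u])=[\mathcal{O}_{E_\pp}:\mathcal{O}_{F_\pp}[\sqrt u]]^2\,\mathfrak{d}(E_\pp/F_\pp)$, which is valid here because $\mathcal{O}_{E_\pp}$ is free over the complete discrete valuation ring $\mathcal{O}_{F_\pp}$ --- at the modest cost of invoking the (local) conductor--discriminant identity, which your parenthetical remark on the single ramification break correctly justifies from scratch for quadratic extensions. The paper's norm-group argument, on the other hand, stays entirely inside the class-field-theoretic language used throughout Sections~\ref{sec:Gary2} and~\ref{sec:Gary3} and makes the role of the congruence $v\equiv 1\bmod 4\ZZ_\pp$ conceptually transparent. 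Both proofs are complete; yours correctly disposes of the split case at the outset and correctly uses that $u$ is a $\pp$-unit so that $v(4u)=2e$.
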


\begin{proof}
This is a local question which can be handled with local class field
theory (see, e.g., Chapter III of Ref.~\citenum{neukirch}). If $E/F$
is an abelian extension of local fields with conductor $\mathfrak{f}$,
then $\mathfrak{f} = \prod_\qq\mathfrak{f}_\qq$, where
$\mathfrak{f}_\qq$ is the (local) conductor of the abelian extension
of completions $E_\qq/F_\qq$ (see below), where $\qq$ runs over all
places of $F$ in the product and $E_\qq$ is the completion of $E$ at
any of the places of $E$ lying over $\qq$ (Prop.~IV.7.5 in
Ref.~\citenum{neukirch}).  Here, $\mathfrak{f}_\qq$ is a power $\qq^r,
r \ge 0$ of $\qq$, so it is the local (at $\qq$) part of the global
conductor. Thus, we can replace $F$ by the local field $F_\pp$ and $u$
by its image in that, which will be a unit.  If $u$ is a square in
$F_\pp$, then $\pp$ splits in $E/F$ and the exponent of the conductor
is $0$, since locally we get a trivial extension.  So we are reduced
to proving the statement of the lemma for a quadratic extension by the
square root of a unit $u$, of a $2$-adic local field of ramification
index $e$ over $\QQ_2$.  Let~$\PP$ be a prime of~$E$ lying over~$\pp$,
so we are considering~$E_\PP / F_\pp$.

We could prove the local version with an explicit case-by-case
analysis when $e =\text{$1$ or $2$}$, which are the only cases we
shall need later; however it is easier to prove the general case using
local class field theory and some properties of local Hilbert symbols.

Let $(x,y)$ denote the local Hilbert symbol for $m=2$ (see, e.g.,
\S~5, Ch.~III in Ref.~\citenum{neukirch}).  This is a (multiplicative)
bilinear pairing from $F_\pp^\times \times F_\pp^\times$ to
$\langle\pm1\rangle$. The basic properties we need are (Prop.~III.5.2 in Ref.~\citenum{neukirch})
\begin{enumerate}[(i)]
\item \label{itemi} $(x,y)=1$ iff $y \in \hbox{Norm}_{M/F_\pp}(M^\times)$, where $M=F_\pp(\sqrt{x})$.
\item \label{itemii} $(x,y)=(y,x)$.
\end{enumerate}
Notice (\ref{itemii}) can be viewed as a kind of local reciprocity
law, saying that $x$ is a norm from $F_\pp(\sqrt{y})$ iff $y$ is a
norm from $F_\pp(\sqrt{x})$.

As usual, let $U^{(0)} = U$, the units of $F_\pp$, and, for $n \ge 1$,
let $U^{(n)} = \{ u\colon u \in U | u \equiv 1\ \hbox{mod}\ \pp^n\}$.
In local class field theory, the conductor of $E_\PP/F_\pp$ is by
definition $\pp^r$ where $r$ is the smallest integer $\ge 0$ such that
$U^{(r)} \subseteq \hbox{Norm}_{E_\PP/F_\pp}(E_\PP^\times)$ (Def.~III.3.3
in Ref.~\citenum{neukirch}).  So the statement that we have to
prove is equivalent to $U^{(2e)} \subseteq
\hbox{Norm}_{E_\PP/F_\pp}(E_\PP^\times)$.  By the definition of $e$,
$U^{(2e)}$ consists of precisely those units of $F_\pp$ which are
congruent to $1$ modulo $4\ZZ_\pp$.  So equivalently, by
(\ref{itemi}), we have to show that if $v \in U$, $v \equiv 1\bmod
4\ZZ_\pp$, then $(u,v)=1$ which by~(\ref{itemii}) is true if and only
if~$(v,u)=1$.

But (e.g. by Ex.~2.12 of Ref.~\citenum{cassfro}), $v$ is {\it
  $\pp$-primary} and $F_\pp(\sqrt{v})/F_\pp$ is trivial or unramified:
so it has conductor $\pp^0$ (Prop.~III.3.4 in
Ref.~\citenum{neukirch}). Thus, $U \subseteq
\hbox{Norm}_{F_\pp(\sqrt{v})/F_\pp}(F_\pp(\sqrt{v})^\times)$, and for
any unit $w$, $(v,w)=1$.  In particular $(v,u)=1$, which is what we
had to prove.
\end{proof}

\begin{proposition}\label{tworam}
With the notation of Section~\ref{biquad}, if $L_1/K$ is a
subextension of $K^{\mm_0\jj}$ not lying in $H_K$ with $[L_1:K]$ a
power of two, where $\mm_0 = 4\ZZ_K$, then $L_1H_K = LH_K =
H_K(\qmuk)$.
\end{proposition}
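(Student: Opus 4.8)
The plan is to recast the statement through the Artin reciprocity isomorphism $\mathfrak{R}^{\mm}\cong\Gal{K^{\mm}}{K}$ for the modulus $\mm=\mm_0\jj=4\ZZ_K\cdot\jj$. Under this isomorphism the subfield $H_K$ corresponds to the quotient $\mathfrak{R}^{\mm}\twoheadrightarrow\clK$ of~\eqref{snortlet}, so that $\Gal{K^{\mm}}{H_K}\cong\coker\psi$ and fields between $H_K$ and $K^{\mm}$ correspond to subgroups of $\coker\psi$. For any $L_1$ as in the statement, $L_1H_K$ is the fixed field of $\Gal{K^{\mm}}{L_1}\cap\coker\psi$, so $\Gal{L_1H_K}{H_K}$ is at once a quotient of $\coker\psi$ and of the $2$-group $\Gal{L_1}{K}$; being a $2$-group quotient of the abelian group $\coker\psi$, it factors through the $2$-Sylow subgroup of $\coker\psi$.

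First I would reduce everything to a single group-theoretic claim: \emph{the $2$-Sylow subgroup of $\coker\psi$ is cyclic of order $2$.} Granting this, the maximal $2$-power subextension of $K^{\mm}/H_K$ is the unique quadratic extension of $H_K$ inside $K^{\mm}$; any $L_1\not\subseteq H_K$ then has $\Gal{L_1H_K}{H_K}$ a nontrivial quotient of $C_2$, forcing $[L_1H_K:H_K]=2$ and hence $L_1H_K$ equal to that unique field. It remains to identify the unique field as $LH_K=H_K(\qmuk)$. For this I would first confirm $L\subseteq K^{\mm}$: applying Lemma~\ref{ramlem} to $u=-u_K$ (a unit, so a $\pp$-unit at every $\pp\mid2$) bounds the exponent of $\pp$ in the conductor of $L/K$ by $2e=v_{\pp}(4\ZZ_K)$, while the signature $(-,+)$ of $-u_K$ recorded in Section~\ref{biquad} shows the infinite part of that conductor is exactly $\{\jj\}$; thus $\mathfrak{f}(L/K)\mid\mm$. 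Since $L/K$ ramifies at $\jj$ whereas $H_K/K$ is everywhere unramified, $L\not\subseteq H_K$ and $[LH_K:H_K]=2$, so $LH_K=H_K(\qmuk)$ is that unique field.

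The core computation is the $2$-Sylow claim, which I would attack through~\eqref{snort}, writing $\coker\psi=\bigl(\mg{4\ZZ_K}\times\{\pm1\}\bigr)/\langle\psi(u_K),\psi(-1)\rangle$ with $\psi(u_K)=(\overline{u_K},+1)$ and $\psi(-1)=(\overline{-1},-1)$. Dualising, the $2$-rank of $\coker\psi$ equals $3$ minus the $\FF_2$-rank of the pair of evaluation functionals $\chi\mapsto\chi(\psi(u_K))$ and $\chi\mapsto\chi(\psi(-1))$ on the space $\mathrm{Hom}\bigl(\mg{4\ZZ_K}\times\{\pm1\},\FF_2\bigr)$, which is three-dimensional because $\mg{4\ZZ_K}$ has $2$-rank $2$ in each of the split, ramified and inert cases. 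These two functionals are independent exactly when the image of $u_K$ in $\mg{4\ZZ_K}$ is a non-square (the differing signs at $\jj$ already make the two images distinct), so I would verify that $u_K$ is not a square modulo $4\ZZ_K$ uniformly from $u_K=(n+\sqrt{d+1})/2$, the relation $u_K^2=nu_K+1$ of Corollary~\ref{dee}, and the congruences of Lemma~\ref{spp}; this yields $2$-rank $1$, hence a cyclic $2$-Sylow subgroup.

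Finally I would upgrade ``cyclic'' to ``order $2$'' by the case split on the behaviour of $2$ in $K$ supplied by Lemma~\ref{spp}. In the split and inert cases the $2$-part of $\mg{4\ZZ_K}$ is elementary abelian, so a cyclic quotient is automatically $C_2$. The delicate case is the ramified one, where the $2$-part of $\mg{4\ZZ_K}$ is $C_4\times C_2$: there I would show that $\overline{u_K}$ generates the $C_4$ factor while $\overline{-1}$ supplies the complementary $C_2$, so that $\langle\psi(u_K),\psi(-1)\rangle$ surjects onto $\mg{4\ZZ_K}$ and the surviving $C_2$ in $\coker\psi$ comes solely from the infinite sign at $\jj$. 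Pinning down the image of $u_K$ in this $C_4\times C_2$ --- that is, computing $u_K\bmod4\ZZ_K$ precisely enough to see it has order $4$ --- is the main obstacle, and is precisely where the explicit shape $u_K=(n+\sqrt{d+1})/2$ and the class of $D\bmod8$ are indispensable.
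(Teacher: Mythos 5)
Your proposal is correct, and at the strategic level it is the paper's proof: the same identification $\Gal{K^{\mm_0\jj}}{H_K}\cong\coker\psi$ via \eqref{snort}--\eqref{snortlet}, the same application of Lemma~\ref{ramlem} to the unit $-u_K$ together with its signature $(-,+)$ to get $L\subseteq K^{\mm_0\jj}$ and $L\not\subseteq H_K$, and the same final target, namely that the $2$-primary part of $\coker\psi$ has order exactly $2$. Where you genuinely differ is in how that $2$-group computation is organized. The paper computes the image of $\langle -1,u_K\rangle$ in $\mg{4\ZZ_K}\times\{\pm1\}$ explicitly in all three classes $D\equiv 1,2,5\bmod 8$ (its cases (A), (B), (C)); you instead dualize, noting that $\mg{4\ZZ_K}\times\{\pm1\}$ has $2$-rank $3$ in every case, so cyclicity of the $2$-Sylow of $\coker\psi$ reduces to the single claim that $u_K$ is a non-square modulo $4\ZZ_K$. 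That claim, which you leave unproved, is true and admits exactly the uniform argument you hope for: the norm induces a well-defined homomorphism $\mg{4\ZZ_K}\rightarrow(\ZZ/4\ZZ)^\times$, and if $u_K\equiv w^2\bmod 4\ZZ_K$ then $-1=\nkq{u_K}\equiv\nkq{w}^2\equiv 1\bmod 4$ since $\nkq{w}$ is odd, a contradiction --- no case analysis needed. Your scheme then handles the split and inert cases softly (elementary abelian $2$-part; but do note that ``cyclic implies $C_2$'' also needs nontriviality, which you have because $LH_K\subseteq K^{\mm_0\jj}$ is quadratic over $H_K$), and concentrates all explicit work in the ramified case, where showing $u_K$ has order $4$ in $\mg{4\ZZ_K}\cong C_4\times C_2$ is unavoidable and is precisely the paper's case (C) computation ($u_K-1\in\wp\setminus\wp^2$ from $u_K=n_0+m_0\sqrt{D}$ with $n_0,m_0$ odd). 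So your route trades the paper's three explicit case computations for one, at the cost of the finer structural information (the exact unit images) that the paper extracts and then exploits in the Remark following the proposition, e.g.\ to identify $K^{\mm_0\jj}$ itself and the precise conductor of $L/K$.
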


\begin{proof}
By the second paragraph of Section~\ref{biquad}, $L/K$ is unramified at all primes
of $K$ except possibly those above $2$, and is ramified at $\jj$ but not at $\jj^\tau$.
Thus by Lemma~\ref{ramlem}, it has conductor dividing $\mm_0\jj$, so 
$L \subseteq K^{\mm_0\jj}$. Since $L/K$ has ramification at $\jj$, $L \not\subseteq H_K$.

The result
will then follow from that fact that $[K^{\mm_0\jj}:H_K]$ equals $2$
times an odd number.  From the exact sequence (\ref{snort}),
$\Gal{K^{\mm_0\jj}}{H_K}$ is isomorphic to the quotient of $G :=
(\ZZ_K/4\ZZ_K)^\times \times \langle\pm1\rangle$ by the image of the
units of $K$, $\langle\pm1\rangle\times\langle u_K\rangle$. We must
show that the two-primary part 
of this quotient group has order $2$. The proof
divides into a number of cases depending on the residue class of
$D\bmod 8$ and the form of $u_K$.  In all cases, under the map from the units into $G$, $-1
\mapsto (*,-1)$ and $u_K \mapsto (*,1)$, since $u_K > 0$ in the
embedding $\jj$ of $K$ into $\RR$.

We let $u_K = (n+m\rtD)/2$ where $n,m$ are positive integers with $n=n_1=\hbox{Tr}_{L/K}(u_K)$. 
Notice $\ZZ_K = \ZZ \oplus \ZZ\ta$ where $\ta = \rtD$ if $D$ is even, and $\ta = (-1+\rtD)/2$ if $D$ is odd.
Note also that $m^2D=n^2+4$ as $\hbox{Norm}_{L/K}(u_K)=-1$. Now, since $D$ is square-free, we can reduce
to a number of cases for $n,m$, enabling us to analyse the individual congruence classes for~$D$ modulo~$8$. 

\begin{enumerate}[(I)]
\item{\label{ONE} \mbox{$4\mid n$.} Let $n=4n_0$. From $D$ square-free, we
  find that $m=2m_0$, with $m_0$ odd, and $m_0^2D=1+4n_0^2$ as well as
  $u_K=2n_0+m_0\rtD=(2n_0+m_0)+2m_0\ta$.
\begin{enumerate}
\item{\label{one}
$n_0$ odd. Congruences modulo $8$ show that $D \equiv 5\bmod 8$
}
\item{\label{two}
$n_0$ even. Here $D \equiv 1\bmod 8$
}
\end{enumerate}
\item{\label{TWO}
\mbox{$2\parallel n$.} Let $n=2n_0$, $n_0$ odd. $D$
square-free $\Rightarrow$\ $m=2m_0$ with $m_0$ odd and $m_0^2D=n_0^2+1$ with
$D \equiv 2\bmod 8$. $u_K=n_0+m_0\rtD$
}
\item{\label{THREE}
$n$ odd. Here $m$ must also be odd and $D \equiv 5\bmod
8$. $u_K=(n_0+m_0)/2+m_0\ta$.
}
}
\end{enumerate}
We now address each possible case which arose in \ref{ONE}, \ref{TWO},
\ref{THREE} for the different residue classes of~$D$ modulo~$8$. 
The notation~$\sfc$ will denote (one of) the prime ideal(s) of~$\ZZ_K$ above~$2$. 

\noindent\bf{(A)} \rm
$D \equiv 5\bmod 8$.
\newline
Here, $(\ZZ_K/4\ZZ_K)^\times = (\ZZ_K/\sfc^2\ZZ_K)^\times$ is of order
$12$ with Sylow $2$-subgroup $\Sigma_2$ consisting of elements $\equiv 1\bmod
2\ZZ_K$ and isomorphic to { the additive group of $\ZZ_K/2\ZZ_K \cong \FF_4$, 
i.e. to the Klein four-group~$V_4$} under $1+2x \mapsfrom x$. The 
quotient $(\ZZ_K/4\ZZ_K)^\times/\Sigma_2$ identifies
with $(\ZZ_K/2\ZZ_K)^\times\cong \FF_4^\times$, a cyclic group of
order $3$.
\newline
The Sylow $2$-subgroup $G_2$ of $G$ is of order $8$ and it is isomorphic to
$\Sigma_2 \times\langle\pm1\rangle \cong C_2 \times C_2 \times C_2$.
Under the identification of $\Sigma_2$ with $\ZZ_K/2\ZZ_K$, $-1 \mapsto 1$
since $-1 \equiv 3\bmod 4$ and $3=1+2\times 1$.
\newline
In case~(\ref{one})
above, $u_K \equiv
1\bmod 2\ZZ_K$ and $u_K \mapsto \ta$ or $\ta+1$ in $\Sigma_2$.
\newline
In case~\ref{THREE}, 
$u_K$ maps to a generator of $(\ZZ_K/4\ZZ_K)^\times/\Sigma_2$
and $u_K^3\mapsto \ta$ or $\ta+1$ in $\Sigma_2$. To see this, note that
$u_K^2=nu_K+1$ which implies that $u_K^3-1 = 2(((n^2+1)/2)u_K+(n-1)/2)$,
the expression in brackets is congruent to $u_K$ or $u_K+1$ modulo $2\ZZ_K$,
and $u_K\equiv\ta \hbox{ or } \ta+1$ modulo $2\ZZ_K$.
\newline
In either case, it follows easily that the image of $\langle
-1,u_K\rangle$ in $(\ZZ_K/4\ZZ_K)^\times$ quotiented by its
Sylow $3$-subgroup is of order $4$, and this gives the result.

\noindent\bf{(B)} \rm
$D \equiv 1\bmod 8$.
\newline
Here, $2$ splits into 2 primes $\sfc$ and $\sfc^\tau$ in $K$. $\rtD\equiv
1$ modulo the square of one of these primes and $-1$ modulo the square of the other.
\newline
$(\ZZ_K/4\ZZ_K)^\times =
(\ZZ_K/\sfc^2\ZZ_K)^\times\times(\ZZ_K/{{\sfc^{\tau}}^2}\ZZ_K)^\times \cong
C_2 \times C_2$, where $(\ZZ_K/\sfc^2\ZZ_K)^\times$ is identified with
$(\ZZ_K/\sfc\ZZ_K) \cong \ZZ/2\ZZ$ under $1+\varpi x \mapsfrom x$, 
{
where
$\varpi$ is any \emph{uniformiser} --- that is, an element of $\ZZ_K$ in $\sfc\backslash\sfc^2$ (e.g. in this case, $2$) --- 
}
and similarly for the $(\ZZ_K/{{\sfc^{\tau}}^2}\ZZ_K)^\times$ factor.
\newline
We are in case~\ref{two} 
above, which shows that $u_K\equiv\pm\rtD$\ mod
$4\ZZ_K$. So we can choose $\sfc$ so that $u_K\equiv 1$\ mod $\sfc^2$ and
$\equiv -1$\ mod ${{\sfc^{\tau}}^2}$. This
translates into $u_K \mapsto (0,1)$ in $(\ZZ_K/4\ZZ_K)^\times$
identified with $\ZZ/2\ZZ\times\ZZ/2\ZZ$ as above.  Clearly $-1
\mapsto (1,1)$. This shows that the image of the units in
$(\ZZ_K/4\ZZ_K)^\times$ or $G$ is again of order 4 as required.
\newline
Note that in this case, if we consider the image of units in just
$(\ZZ_K/\sfc^2\ZZ_K)^\times\times\langle\pm1\rangle$ or
$(\ZZ_K/(\sfc^2)^\tau\ZZ_K)^\times\times\langle\pm1\rangle$, the above
explicit images show that $[K^{\sfc^2\jj}:H_K] = 2$ and $K^{(\sfc^2)^\tau\jj}=H_K$.

\noindent\bf{(C)} \rm
$D \equiv 2\bmod 8$.
\newline
In this case, $G_1 := (\ZZ_K/4\ZZ_K)^\times =
(\ZZ_K/\sfc^4\ZZ_K)^\times$ is of order $8$. To compute structure and
images, we write $\varpi$ for $\rtD$, which is a local uniformiser for $\sfc$, 
and note that the elements of $G_1$ have representatives of the form
$1+a_1\varpi+a_2\varpi^2+a_3\varpi^3$ with $a_i \in \{0,1\}$.  
Since $\varpi^2=2f$ for some $f \equiv 1\bmod 4$, 
we see that $2=\varpi^2$ in $G_1$ and that we can
reduce the product of two elements of $G_1$ back to representative
form in $G_1$ by equating all powers $\varpi^r, r \ge 4$ to $0$ and
replacing a term $b_i\varpi^r$, $b_i \in \ZZ$ by
$0,\varpi^r,\varpi^{r+2},\varpi^r+\varpi^{r+2}$ for $b_i \equiv 0,1,2,3\bmod 4$, respectively.
\newline
A simple computation then shows that $G_1 = \langle v_1\rangle\times\langle
v_2\rangle \cong C_4\times C_2$, where $v_1=1+\varpi$ is of order $4$ and
$v_2=1+\varpi^2$ is order $2$, and further $1+\varpi^3=v_1^2v_2$.
\newline
The image of $-1$ in $G_1$ is just $v_2$. We are in case~\ref{TWO} above,
which shows that $u_K-1$ is in $\sfc\backslash\sfc^2$, since this is true
of $\rtD$ and $n_0-1 \in 2\ZZ$. 
Thus $u_K$ maps to $v_1^rv_2^s$ with $r$
odd and the image of the units in $G_1$ is all of $G_1$. Since the
images of $-1$ and $u_K$ in the full group $G$ are also of orders $2$
and $4$, respectively, the image of the units in $G$ is a subgroup of
index $2$ which is isomorphic to $C_4\times C_2$.
\end{proof}

\begin{remark}\ 
\begin{enumerate}[(i)]
\item
The proof of the last proposition actually
shows that $K^{\mm_0\jj}$ equals $LH_K$, unless $D \equiv 5\bmod 8$ and
$u_K \equiv 1\bmod 2\ZZ_K$ when it is an abelian cubic extension of
$LH_K$.
\item
Further analysis in the various cases shows that
$L/K$ is actually ramified above $2$ with finite part of the conductor $4\ZZ_K$
when $D\equiv 2,5$\ mod $8$ and $\sfc^2$ when $D\equiv 1$\ mod $8$. Furthermore,
in the latter case, $m_0\equiv 1$\ mod $4$, so $\sfc$ is the prime with
$\rtD\equiv 1$\ mod $\sfc^2$.
\item
For the proof of Theorem~\ref{skyxi}, it is not hard to see that we
could get by with the weaker result that the $2$-rank of
$\Gal{K^{\mm_0\jj}}{H_K}$ is $1$ (i.e., its Sylow $2$-subgroup is
cyclic). For this, we could, for example, make use of a more general
result like Theorem~2.2 of Ref.~\citenum{kleine21}, which implies that
the $2$-rank of $\Gal{M}{H_K}$ is $\le 2$, resp. $3$, when $D\equiv
2,5$\ resp. $1$\ mod $8$, where $M$ is the ray class field of
conductor $\jj\jj^\tau$ times a large power of $2$. In the respective
cases, we can easily find $C_2$ and $C_2\times C_2$ subextensions of
$M/H_K$ which are linearly independent of $K^{\mm_0\jj}$ over $H_K$
(with ramification over $\jj^\tau$ and/or higher conductor
ramification over $2$) and this gives the result. However, we think
that the explicit proof of the more precise result is fairly concise
and clearer to the non-specialist.
\end{enumerate}
\end{remark}

\subsection{Splitting of the modulus $(d)$} \label{sec:splitting}
Define $\dd = 1+\sqrt{d+1}$: this is a key invariant. 
We will often just write the symbol $\dd$ for the principal ideal $( \dd ) = \dd\ZZ_K$. 
Notice $\dd^\tau = 1-\sqrt{d+1} = 2-\dd$ and $d\ZZ_K = \dd
{\dd^\tau}$.  
In particular, when $d=p$ is prime, $\dd = \pp$ and ${\dd^\tau} =
\pp^\tau$ are principal prime ideals which split completely in the
Hilbert class field extension $H_K / K$ (for definitions, see
Section~\ref{glob}).  
When it is important to distinguish different levels $\ell$ over the same
field $K$, we shall write $\de = 1+\sqrt{d_\ell+1}$.

Let $d = n^2+3$ be as above. Then $d \equiv 0$ or $3 \bmod 4$ and
moreover if $d$ is even then $d \equiv 4 \bmod 8$.  Hence $d$ is of
the form $4^{s}r$ for some odd $r \in \NN$ and $s \in \{ 0 , 1 \}$.
By Lemma~\ref{spp}, writing $r$ in terms of its distinct prime powers
as $ r = \prod_{q \mid r} q^{t_q}, $ each prime $q$ appearing in the
factorisation of $r$ inside $\ZZ$ splits into co-prime ideals as
$q\ZZ_K = \mathfrak{q}\mathfrak{q}^\tau$.  (In general, when one of
the primes $q$ equals $3$, it is possible to have $\gcd(d,D) = 3$
whence $3$ would be ramified with $\mathfrak{q} = \mathfrak{q}^\tau$.
It is easy to see however that this cannot occur when $d$ has the form
$n^2+3$).  In other words $q^{t_q}\ZZ_K =
\mathfrak{q}^{t_q}(\mathfrak{q}^\tau)^{t_q}$.  We remark that it is
shown in Ref.~\citenum{Anti1} that $q \equiv 1 \bmod 3$ for every one
of these primes $q>3$.

Consider the ideal $\dd = \dd\ZZ_K$: no odd rational prime can divide
it, and its norm is $d\ZZ_K$; so it follows that one but not both of
the factors $\mathfrak{q}^{t_q}$ and $(\mathfrak{q}^\tau)^{t_q}$ must
divide it.  We thus define a unique $\ZZ_K$-ideal factor
$\mathfrak{q}^{t_q}$ dividing into $\dd$ for each distinct rational
odd prime power factor $q^{t_q}$ of $r$.  Let $\mathfrak{r} = \prod_{q
  \mid r}\mathfrak{q}^{t_q}$, and similarly for $\mathfrak{r}^\tau$.
The factor $4$, if it appears, may be regarded for these purposes as
splitting into $2 \times 2$ so finally we see that $\dd =
2^s\mathfrak{r}$.

The individual prime divisors of $\dd$ and ${\dd^\tau}$ are not, in
general, principal ideals themselves, unless $d$ is prime.  For
example in the case of $d=6^2+3 = 39$, each of $3\ZZ_K$ and $13\ZZ_K$
splits into a pair of non-principal ideals; but since the class number
is $2$ the product of any pair of them is principal.

Incidentally, these quantities $1 \pm \sqrt{d_\ell+1}$ form a series
of intermediate \emph{`irrational dimensions'} which arise from
solving eq.~\eqref{twotimer} for $d_\ell$ whenever $\ell$ is odd:
\begin{equation}
0 = X^2 - 2X - d_\ell = (X-1)^2 - (1+d_\ell) .
\end{equation}
Namely, the case of norm $-1$ yields a series ``$d_{\frac{\ell}{2}}$''
corresponding to odd powers of $u_K$, using eq.~\eqref{deez} with
$u_K$ in place of $u_K^2$, viz.:
\begin{equation}\label{halves}
\de  =  1 + \sqrt{d_\ell+1}  =  u_K^\ell + u_K^{-\ell} + 1   =  \text{``$d_{\frac{\ell}{2}}$''}.
\end{equation}
When $\ell$ is even, they align with the series~\eqref{deez} defined in terms of $u_K^2$: so ``$\dd_{2k} = d_k$''. 

We now come to the main result of Section~\ref{sec:Gary2}. 
Write $\varphi \colon \NN \rightarrow \NN$ for the Euler phi-function,
and CRT for the Chinese remainder theorem. 

\begin{proposition}\label{elluva}
  Let $d_\ell\geq7$ be of the form $d_\ell=n_\ell^2+3$ as above.
\begin{enumerate}[\rm (I)]
\item\label{oddo} When $d_\ell$ is odd, $\Gal{K^{\de\jj}}{K}$ has
  order $h_K\varphi(d_\ell)/3\ell$.  In particular therefore in our
  case where $d_\ell=p$ is a prime, $\Gal{K^{\de\jj}}{K}$ has order
  $(p-1)h_K / 3\ell$.
\item\label{eve} When $d_\ell$ is even, so $d_\ell / 4$ is odd,
  $\Gal{K^{\de\jj}}{K}$ has order $h_K\varphi(\frac{d_\ell}{4})/\ell$.
\end{enumerate}
\end{proposition}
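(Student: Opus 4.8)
The plan is to extract the order of $\Gal{K^{\de\jj}}{K}$ from the exact sequence \eqref{globcft}, or rather from its short-exact consequences \eqref{snort} and \eqref{snortlet}, applied with modulus $\mm = \de\jj$. Since $\mathfrak{R}^{\mm} \cong \Gal{K^{\mm}}{K}$ by the Artin reciprocity map, and \eqref{snortlet} exhibits $\mathfrak{R}^{\mm}$ as an extension of $\clK$ (of order $h_K$) by $\coker\psi$, it suffices to compute $\#\coker\psi$. By \eqref{snort} this is $\#\big(\mg{\mm_0}\times\{\pm1\}\big)$ divided by $\#\big(\ZZ_K^\times/U_1^\mm\big) = \#\im\psi$, where I have used $\#\mm_\infty = 1$ because $\mm_\infty = \{\jj\}$.

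First I would handle $\mg{\mm_0}$, where $\mm_0 = \de\ZZ_K = (\de)$. In the odd case this is just $\varphi(d_\ell)$: indeed $\de\ZZ_K = 2^s\mathfrak{r}$ with $s=0$ when $d_\ell$ is odd, and by CRT $\ZZ_K/\mathfrak{r} \cong \prod_{q\mid r}\ZZ_K/\mathfrak{q}^{t_q}$; since each $\mathfrak{q}$ is a degree-one prime over $q$ (as $q$ splits, by Lemma~\ref{spp}), one has $\#\mg{\mathfrak{q}^{t_q}} = q^{t_q-1}(q-1) = \varphi(q^{t_q})$, and multiplying gives $\varphi(r) = \varphi(d_\ell)$. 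In the even case $d_\ell = 4r$ with $r$ odd, so $\de = 2\mathfrak{r}$ (the factor $4$ splitting as $2\times 2$, with $2$ inert by Lemma~\ref{spp}\eqref{soup} since $D\equiv5\bmod8$ here — recall $d_\ell$ even forces $n_\ell$ odd), hence $\mm_0 = (2\mathfrak{r})$, and CRT gives $\mg{\mm_0}\cong \mg{2\ZZ_K}\times\mg{\mathfrak{r}}$. Now $\#\mg{2\ZZ_K} = \#\FF_4^\times = 3$ and $\#\mg{\mathfrak{r}} = \varphi(r) = \varphi(d_\ell/4)$, so $\#\mg{\mm_0} = 3\varphi(d_\ell/4)$.

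Next I would compute $\#\im\psi$. Since $\ZZ_K^\times = \langle\pm1\rangle\times\langle u_K\rangle$, the image of $\psi$ is generated by the images of $-1$ and $u_K$ in $\mg{\mm_0}\times\{\pm1\}$. The element $-1$ maps to $(-1 \bmod \mm_0,\,-1)$: its second coordinate is $-1$ since $-1<0$ at $\jj$, so it has order $2$ in the $\{\pm1\}$ factor. The image of $u_K$ has second coordinate $+1$ (as $u_K>1$ at $\jj$), so its order equals its order in $\mg{\mm_0}$, i.e. $\ord{\mm_0}{u_K}$. The key input is that this order is $3\ell$: in the odd prime sub-case this is exactly Corollary~\ref{uforder} reinterpreted (there $\ord{d_\ell}{u_K} = 6\ell$ over $\ZZ$, but the relevant group here is $\mg{\pp}\cong\FF_p^\times$ of order $p-1$, cyclic, and the norm-$(-1)$ argument of Corollary~\ref{uforder} combined with Lemma~\ref{uDorder} pins $\ord{\pp}{u_K}=3\ell$ — this is precisely the fact invoked in the proof of Lemma~\ref{twoprim}). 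For the general odd composite $d_\ell$ and the even case one argues likewise that $\ord{\mm_0}{u_K} = \ord{d_\ell}{u_K^2}\cdot(\text{adjustment}) = 3\ell$ via Lemma~\ref{uDorder}, using that $u_K^2$ generates a group closed under norms down to $(\ZZ/d_\ell\ZZ)^\times$ while $-1\notin$ that image. One then checks that $\langle u_K\rangle$ (inside $\mg{\mm_0}$) has odd order $3\ell$, so it intersects the $C_2$ generated by the image of $-1$ trivially, whence $\#\im\psi = 2\cdot 3\ell = 6\ell$. Therefore $\#\coker\psi = \#\mg{\mm_0}\cdot 2 / (6\ell)$, giving $\varphi(d_\ell)/(3\ell)$ in the odd case and $3\varphi(d_\ell/4)\cdot2/(6\ell) = \varphi(d_\ell/4)/\ell$ in the even case. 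Multiplying by $h_K = \#\clK$ from \eqref{snortlet} yields the stated orders, and in the prime case $\varphi(d_\ell) = p-1$ gives the final assertion.

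The main obstacle is the order computation $\ord{\mm_0}{u_K} = 3\ell$ in the generality stated, especially disentangling it from the $\ZZ$-statement in Corollary~\ref{uforder} (where the extra factor of $2$ comes from the sign $\pm1$ at the place $\jj$, which is handled separately by the $\{\pm1\}$ slot in \eqref{globcft} rather than inside $\mg{\mm_0}$), and in the even case correctly accounting for the inert prime $2$ contributing the cyclic factor $\FF_4^\times$ of order $3$ to $\mg{\mm_0}$ while not contributing to the image of $u_K$ beyond what Lemma~\ref{uDorder} already controls. Everything else — the CRT decomposition of $\mg{\mm_0}$, the signatures of $\pm1$ and $u_K$ at $\jj$, and the reassembly via \eqref{snort}--\eqref{snortlet} — is bookkeeping once those order facts are in hand.
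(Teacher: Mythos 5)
Your overall framework is the same as the paper's: read off $\#\Gal{K^{\de\jj}}{K}$ from the exact sequence \eqref{globcft} with $\mm=\de\jj$, compute $\#\mg{\de}$ by CRT using the splitting from Lemma~\ref{spp}, observe that $\psi(u_K)$ has trivial sign component while $\psi(-1)$ does not, and reduce everything to the single quantity $w=\ord{\de}{u_K}$. That bookkeeping is correct. The genuine gap is the claim $w=3\ell$, which is the mathematical heart of the proposition and which you assert rather than prove. Lemma~\ref{uDorder} and Corollary~\ref{uforder} are statements about orders modulo the \emph{full} ideal $d_\ell\ZZ_K=\de\,\de^\tau$; via CRT they only give $\lcm(w,w_\tau)=6\ell$ (and $\lcm$ of the orders of $u_K^2$ equal to $3\ell$). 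A short analysis using $u_K^\tau=-1/u_K$ shows that these constraints force $\{w,w_\tau\}=\{3\ell,6\ell\}$, but they are completely symmetric in $\de$ and $\de^\tau$ and therefore cannot tell you \emph{which} of the two conjugate ideals carries the order $3\ell$. The distinction is not cosmetic: the two candidate answers $h_K\varphi(d_\ell)/3\ell$ and $h_K\varphi(d_\ell)/6\ell$ differ by a factor of $2$, and the paper itself points out (end of Section~\ref{sec:Stark_units}) that the order at $\de^\tau$ really is $6\ell$, not $3\ell$. Your appeal to ``the fact invoked in the proof of Lemma~\ref{twoprim}'' is moreover circular, since that proof explicitly cites the proof of Proposition~\ref{elluva} for exactly this fact.

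What is missing is an argument that sees the specific generator $\de=1+\sqrt{d_\ell+1}$. The paper supplies this via the shifted Chebyshev polynomials \eqref{tstar}: the composition identity gives $\dd_{3\ell}-3=\chsh{3}{\de}-\chsh{0}{\de}\equiv 0 \bmod \de$, i.\,e.\ $u_K^{3\ell}+u_K^{-3\ell}\equiv 2 \bmod \de$, which combined with $u_K^{6\ell}\equiv 1$ and the invertibility of $2$ yields $u_K^{3\ell}\equiv 1\bmod\de$, hence $w\mid 3\ell$; a separate contradiction argument (conjugating the congruence $u_K^w\equiv1$ by $\tau$ to get $w_\tau\mid 2w$ and comparing with $\lcm(w,w_\tau)=6\ell$) then rules out $w<3\ell$. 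Neither ingredient appears in your proposal, and your phrase ``$\ord{d_\ell}{u_K^2}\cdot(\text{adjustment})$'' for the composite and even cases does not substitute for them. Until you either reproduce this congruence argument or find another way to break the $\de\leftrightarrow\de^\tau$ symmetry, the proof is incomplete.
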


\begin{proof}
(\ref{oddo}) Expansion of the divisors $\de,\de^\tau$ above
in terms of the various $\mathfrak{q}$ affords, via the CRT, a split
of $\mg{d_\ell}$ into two isomorphic groups each of order
$\varphi(d_\ell)$:
\begin{equation}
\mg{d_\ell}  \cong  \mg{\de} \times \mg{{\de^\tau}} .
\end{equation}
Define $w = \ord{\de}{u_K}$ and ${w_\tau} = \ord{{\de^\tau}}{u_K}$.
By Corollary~\ref{uforder}, the order of the image of $u_K$ inside
$\mg{d_\ell}$ is $6\ell$, which must therefore be equal to $\lcm(w ,
w_\tau)$.  The exact sequence~\eqref{globcft}, evaluated in our case
where $\mm = \de \jj$ and so ${\# \mm_\infty } = 1$ then shows that
the order of the Galois group $\Gal{K^{\mm}}{K}$ is
$2h_K\varphi(d_\ell)$ divided by the order of the image of the global
units $\langle u_K \rangle \times \langle -1 \rangle$ inside $\mg{\de}
\times \{\pm1\}$, under the map denoted $\psi$ in~\eqref{globcft}.

We chose $u_K$ to have signature $(+,-)$, so since we are only
allowing ramification at $\jj$, its image is $\psi(u_K) =
(u_K+\de,1)$.  The torsion component $\langle -1 \rangle$ injects
diagonally into the right-hand side as the characteristic of the ring
$\ag{\de}$ is not $2$.  So the kernel of $\psi$ is torsion-free and of
rank one.  Indeed $u_K^w$ is a generator; hence the image of $\psi$
has order $2w$.  In summary, we have shown that the Galois group
$\Gal{K^{\de\jj}}{K}$ has order $h_K\varphi(d_\ell)/{w}$.  We must now
show that $w = 3\ell$.

In order to make use of the analogy between~\eqref{deez} and the
Chebyshev polynomials of the first kind $T_{n}$, a shifted version has
been defined in Ref.~\citenum{AFMY}:
\begin{equation}\label{tstar}
\chsh{n}{x} = 1+2\cheb{n}{\tfrac{x-1}{2}},
\end{equation}
which also satisfies the fundamental composition
relation $\chsh{m}{\chsh{n}{x}} = \chsh{mn}{x}$ for every $m,n\geq0$.
The first few polynomials are $\chsh{0}{x}=3$, $\chsh{1}{x}=x$,
$\chsh{2}{x}=x^2-2x$, $\chsh{3}{x}=x^3-3x^2+3$.  The $T^*_{n}$ are
independent of $D$ so given $d_0 = d_0(D) = 3$ and $d_1 = d_1(D)$, we
know all $d_k = d_k(D)=\chsh{k}{d_1}$.  The defining recursion
$\cheb{n}{x} = 2x\cheb{n-1}{x} - \cheb{n-2}{x}$ for the $T_{n}$ yields
for the $ T^*_{n}$:
\begin{equation}\label{shiftycur}
\chsh{n}{x} = x\chsh{n-1}{x} - x\chsh{n-2}{x} + \chsh{n-3}{x}.
\end{equation}
The definition above at~\eqref{deez} of the `dimensions' $d_\ell$ is
in terms of powers of $u_K^2$; but it could equally well be made in
terms of $u_K$ as in eq.~\eqref{halves} above.  Consequently for
any $\ell \geq 1$ (see equation C0 in Ref.~\citenum{AFMY})
\begin{equation}
\chsh{3}{\de} - \chsh{0}{\de}  =  \dd_{3\ell} - 3 \quad\text{is a $\ZZ_K$-multiple of $\de$,}
\end{equation}
hence by \eqref{halves}: 
\begin{equation}\label{fin}
\dd_{3\ell} - 3  =  u_K^{3\ell} + u_K^{-3\ell} + 1  -  3  \equiv  0   \bmod  \de. 
\end{equation}
But since $\ord{{d_\ell}\ZZ_K}{u_K} = 6\ell$, it follows that $u_K^{6\ell} \equiv 1 \bmod \de$ 
and thus in particular, $u_K^{3\ell}  \equiv   u_K^{-3\ell} \bmod \de$. 
(In fact more is true: it is possible using Lemmas~1 and 7 of
Ref.~\citenum{tbang} to show that for every odd $\ell$, $u_K^{3\ell}$
is actually congruent to $1$ or $-1$ mod $\dd_\ell$ (and so
$\dd_\ell^\tau$). 
This is important for non-prime $d$; although we do not need it here).
So, from \eqref{fin}, 
$
2u_K^{3\ell} \equiv 2 \bmod \de .
$
As~2 is invertible modulo $d_\ell$, it follows that $u_K^{3\ell} \equiv 1 \bmod \de$. 

Suppose $w\lneqq3\ell$: since $u_K^{\tau} = -1/u_K$ it follows by taking $\tau$-conjugates of the congruence $u_K^w \equiv 1 \bmod \de$, that $u_K^w \equiv (-1)^w \bmod \de^\tau$
and therefore that $u_K^{2w} \equiv 1 \bmod \de^\tau$. 
On the other hand, ${w_\tau}$ is defined to be minimal with the property that:
\begin{equation}
u_K^{{w_\tau}}  \equiv  1   \bmod \de^\tau. 
\end{equation}
So in fact ${w_\tau} \mid 2w$. 
But $\lcm(w,w_\tau) = 6\ell$, so by our assumption that $w < 3\ell$: 
\begin{equation} 
6\ell  =  \lcm( w , w_\tau )  \leq   \lcm( 2w , w_\tau )  =  2w  <  6\ell,  
\end{equation}
a contradiction. 
Incidentally, that $ w_\tau = 6\ell$ follows immediately from this. 

\noindent(\ref{eve}) This is an easy modification of the argument
for~(\ref{oddo}), using the fact that $D\equiv5\bmod8$ and so $\mg{4}
\cong C_2 \times C_2 \times C_3$. 
\end{proof}

So we have established the promised order of the main Galois group
(the degree of the ray class field extension $K^{\dd_\ell\jj} / K$).
Finally, for future reference, we extend Lemma~\ref{twoprim} as
follows.  For any prime number $p$ and any group $G$ let $\Syl{p}{G}$
denote the Sylow $p$-subgroup of $G$.

\begin{lemma}\label{seetoo}
We work once more with the notation and restrictive hypotheses of
Lemma~\ref{twoprim}: in particular, $d=p$ is a prime, $p\ZZ_K = \pp
\pp^\tau$, and $\pp = \dd\ZZ_K$.

The quadratic subextension of $K^{\pp\jj}/H_K$ --- that is, the field
$\calF_1 = (K^{\pp\jj})^\Gamma$ fixed by $\Gamma$ --- is generated
over $H_K$ by the polynomial $X^2+\dd$, where $\dd=1+\sqrt{p+1}$ as
defined above, with $\dd\dd^\tau = -p$.

Furthermore, the $2$-primary part of the extension in \eqref{snortlet} is \emph{split}. 
In other words, recalling that $\mathfrak{R}^{\pp\jj}  \cong \Gal{K^{\pp\jj}}{K} $: 
\begin{equation}
\Syl{2}{\Gal{K^{\pp\jj}}{K}}   \iso C_2  \times  \Syl{2}{\clK} .
\end{equation}
\end{lemma}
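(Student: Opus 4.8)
The plan is to reduce both assertions to the single claim that $\calF_1 = H_K(\sqrt{-\dd})$, i.e.\ that the quadratic subextension of $K^{\pp\jj}/H_K$ is cut out by $X^2+\dd$. By the exact sequence~\eqref{snortlet} together with Lemma~\ref{twoprim}, restriction to $H_K$ identifies $\Gal{K^{\pp\jj}}{H_K}$ with $\coker\psi\cong\Gamma\times C_2$, where $\Gamma$ is cyclic of odd order. An abelian group of this shape has a unique subgroup of index $2$, namely $\Gamma$ itself; hence $K^{\pp\jj}/H_K$ has a unique quadratic subextension, which is exactly $\calF_1=(K^{\pp\jj})^\Gamma$. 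So it suffices to produce one quadratic extension of $H_K$ inside $K^{\pp\jj}$, and $H_K(\sqrt{-\dd})$ is the natural candidate; note that $\dd\dd^\tau=(1+\sqrt{p+1})(1-\sqrt{p+1})=-p$ is immediate.

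First I would show that $K(\sqrt{-\dd})/K$ is abelian with conductor exactly $\pp\jj$, so that $H_K(\sqrt{-\dd})=H_K\cdot K(\sqrt{-\dd})\subseteq K^{\pp\jj}$ (as $H_K\subseteq K^{\pp\jj}$). The infinite places are handled by signs: by eq.~\eqref{halves}, under $\jj$ we have $\dd=u_K^\ell+u_K^{-\ell}+1>0$, so $-\dd<0$ and the extension is complex (ramified) at $\jj$, whereas $\dd^\tau=1-\sqrt{p+1}<0$ gives $-\dd^\tau>0$, so it stays real at $\jj^\tau$. At the odd prime $\pp=(\dd)$ the element $-\dd$ is a uniformiser, giving tame ramification of conductor exponent $1$. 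The hard part will be the primes above $2$: there $-\dd$ is a unit (its norm $-p$ is odd), and I must show the extension is \emph{unramified}. The clean route is to prove the congruence $-\dd\equiv u_K^{2\ell}\pmod{4\ZZ_K}$. Using $-\dd=(n_\ell-1)-2u_K^\ell$ (from Corollary~\ref{dee}) and $u_K^{2\ell}=n_\ell u_K^\ell+1$, this amounts to $(n_\ell+2)u_K^\ell+(2-n_\ell)\equiv0\pmod{4\ZZ_K}$; since $p$ is odd, $n_\ell$ is even, and a short case split on $D\bmod 8$ (equivalently $4\mid n_\ell$ versus $2\parallel n_\ell$, exactly the cases analysed in Proposition~\ref{tworam}) settles it, using that $u_K^\ell\equiv1\bmod2\ZZ_K$ in the $4\mid n_\ell$ cases. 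Granting the congruence, $-\dd\,u_K^{-2\ell}\equiv1\bmod4\ZZ_\wp$ is a $\wp$-primary unit for each $\wp\mid2$, so by the primary criterion underlying the proof of Lemma~\ref{ramlem} the extension $K_\wp(\sqrt{-\dd})=K_\wp(\sqrt{-\dd\,u_K^{-2\ell}})$ is unramified. Hence the conductor is $\pp\jj$, so $H_K(\sqrt{-\dd})\subseteq K^{\pp\jj}$; being quadratic over $H_K$ (it ramifies at $\jj$, while $H_K/K$ is everywhere unramified, so $\sqrt{-\dd}\notin H_K$), it must coincide with $\calF_1$.

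Finally I would deduce the splitting statement from this identification. Because $K(\sqrt{-\dd})/K$ ramifies at $\jj$ while $H_K/K$ is unramified at every place, $K(\sqrt{-\dd})\cap H_K=K$, so $\calF_1=H_K\cdot K(\sqrt{-\dd})$ is a compositum of linearly disjoint abelian extensions and $\Gal{\calF_1}{K}\cong\clK\times C_2$. The remaining piece $\Gal{K^{\pp\jj}}{\calF_1}=\Gamma$ has odd order, and for a finite abelian group the Sylow $2$-subgroup is unchanged on passing to the quotient by an odd subgroup; therefore $\Syl{2}{\Gal{K^{\pp\jj}}{K}}\cong\Syl{2}{\Gal{\calF_1}{K}}\cong C_2\times\Syl{2}{\clK}$, which is exactly the split form of the $2$-primary part of~\eqref{snortlet}. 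The main obstacle throughout is the $2$-adic primary computation of the previous paragraph; everything else is formal class-field-theoretic bookkeeping.
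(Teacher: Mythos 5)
Your proof is correct, and its skeleton is the same as the paper's: exhibit $H_K(\sqrt{-\dd})$ as a quadratic subextension of $K^{\pp\jj}/H_K$, use the structure $\Gal{K^{\pp\jj}}{H_K}\cong\coker\psi\cong\Gamma\times C_2$ (with $\Gamma$ of odd order) from Lemma~\ref{twoprim} to force it to equal $\calF_1$, and then transfer the product structure to the Sylow $2$-subgroup. Where you genuinely diverge is in how the two nontrivial steps are discharged. For the unramifiedness of $K(\sqrt{-\dd})/K$ above $2$, the paper does no computation at all: it simply cites a classical fact on relative quadratic extensions from Cohn's book. You instead prove it from scratch via the congruence $-\dd\equiv u_K^{2\ell}\bmod 4\ZZ_K$, and this checks out: writing $-\dd=(n_\ell-1)-2u_K^\ell$ and $u_K^{2\ell}=n_\ell u_K^\ell+1$, the difference is $(n_\ell-2)-(n_\ell+2)u_K^\ell$, which vanishes mod $4\ZZ_K$ trivially when $n_\ell\equiv 2\bmod 4$, and when $4\mid n_\ell$ reduces to $2(u_K^\ell+1)\equiv 0\bmod 4\ZZ_K$, i.e.\ to $u_K^\ell\equiv 1\bmod 2\ZZ_K$, which is exactly what case~(I) of Proposition~\ref{tworam} gives (applied with $u_K^\ell$, $n_\ell$ in place of $u_K$, $n_1$, as licensed by~\eqref{dagmar}); the $\pp$-primary criterion isolated in the proof of Lemma~\ref{ramlem} then finishes the local step. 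This makes the lemma self-contained and identifies the explicit square class of $-\dd$ at $2$, which the paper's citation leaves implicit. For the splitting assertion, the paper takes a different and slightly longer route: it passes to the normal closure of $\calF_1$ over $\QQ$, i.e.\ the compositum of $\calF_1$ with its $\tau$-conjugate inside the ray class field of conductor $p\jj\jj^\tau$, reads off a $\clK\times V_4$ Galois group there, and descends; you avoid this detour by noting $K(\sqrt{-\dd})\cap H_K=K$ (ramification at $\jj$ against none), so that $\Gal{\calF_1}{K}\cong\clK\times C_2$ by linear disjointness, and then using that quotienting an abelian group by the odd-order subgroup $\Gamma$ leaves its Sylow $2$-subgroup unchanged. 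Your version is more economical and never leaves $K^{\pp\jj}$; the paper's version yields as a by-product the Klein four-group structure of $\calF_1\calF_1^\tau$ over $H_K$, but for the stated lemma your bookkeeping is entirely sufficient.
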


\begin{proof}
Let $E = H_K(\sqrt{-\dd})$.  The discriminant of this extension
divides into $-4\pp$; and it will be complex only above the first
infinite place $\jj$, as $\jj(\sqrt{d+1}) > 1$.  So the first
assertion follows from the fact~\cite{harv0} that primes of $K$ above
$2$ do not ramify in $E/K$; whereas $\pp$ clearly does, so $E$ is
indeed a subfield of $K^{\pp\jj}$ properly containing $H_K$.

The normal closure of $\calF_1$ over $\QQ$ is the compositum of $\calF_1$ and
$(K^{\pp^\tau\jj^\tau})^{\tau\Gamma\tau^{-1}}$.  This compositum is
abelian over $K$ since it is contained in the ray class field of
modulus $p\jj\jj^\tau$; so its Galois group over $K$ is a direct
product of $C_K$ by the {Klein four-group~$V_4$}.  The result follows.
\end{proof}

\section{Stark units and zeta functions}\label{sec:Gary3}
\subsection{Dedekind ray class zeta functions \cite{lang}}
By taking the classical harmonic series $\sum_{n \geq 1} n^{-1}$ and
making it naturally into a function of a complex variable $s$,
initially with real part $\Re(s) > 1$, the \emph{Riemann zeta
function} $ \zeta(s) = \sum_{n \geq 1} n^{-s} $ encodes arithmetic
information about the distribution of the prime integers within $\ZZ$.
This is particularly evident upon rewriting it as an \emph{Euler
product}, with one~\emph{Euler factor} at each prime:
\begin{equation}
\zeta(s)  =  \prod_{p \hbox{\tiny\ prime}} \frac{1}{ 1 - p^{-s} }. 
\end{equation}
Both of these expressions are valid as written for $\Re(s) > 1$;
moreover $\zeta(s)$ has an analytic continuation to the whole complex
plane, which renders it a meromorphic function on $\CC$ with a single
pole of order one at $s=1$. 

By analogy we define a \emph{Dedekind zeta function} attached to any
number field $F$ by taking the natural analogue of a `harmonic series'
for $F$, this time over the non-zero \emph{ideals} of the ring of
integers $\ZZ_F$, and using the same approach to convert it into a
meromorphic function of $s$.  The \emph{norm of an ideal}
$\mathfrak{a}$ of $\ZZ_F$, denoted by $\norm{F}{\QQ}{\mathfrak{a}}$,
is defined to be the cardinality of the finite quotient ring $\ZZ_F /
\mathfrak{a}$.  As is the convention we shall simply write
$\norm{F}{\QQ}{\mathfrak{a}}^z$ for $(\norm{F}{\QQ}{\mathfrak{a}})^z$
for a complex exponent $z$.  The Dedekind zeta function attached to
$F$ is then:
\begin{equation}\label{zFsum}
\zeta_F(s)  =  \sum_{(0) \neq \mathfrak{a}  \hbox{\tiny\ ideals\ of\ }\ZZ_F} \norm{F}{\QQ}{\mathfrak{a}}^{-s}. 
\end{equation}
Once again this expression defines a function for $\Re(s) > 1$ which
may be analytically continued to the whole of $\CC$; it is also
meromorphic with a single simple pole at $s=1$.  By the unique
factorisation into prime ideals we also obtain an expression for
$\zeta_F(s)$ as an Euler product in the region $\Re(s) > 1$:
\begin{equation}\label{zFprod}
	\zeta_F(s)  =  \prod_{\substack{ { (0) \neq  \pp \hbox{\tiny\ prime}}\\{{\hbox{\tiny\ ideals\ of\ }\ZZ_F}}}}
					\frac{1}{ 1  -  \norm{F}{\QQ}{\pp}^{-s} }. 
\end{equation}

Now let $F$ be the ray class field $K^{\mm}$ of our base real
quadratic field $K$ corresponding to the modulus $\mm = \mm_0
\mm_\infty$.  To each element $\sigma \in G = \Gal{K^{\mm}}{K}$ there
corresponds under the inverse of the Artin reciprocity map, an
infinite coset of fractional ideals $\mathfrak{A}_\sigma$ of $\ZZ_K$,
characterised by their multiplicative class modulo the ray group
$\mathcal{P}_1^{\mm}$ defined above.  This is known as the \emph{ray
class} $[\mathfrak{A}_\sigma]$ of each $\mathfrak{A}_\sigma$.  Each
ray class contains a set of prime ideals of Dirichlet density
$\frac{1}{\#G}$.  Let us fix for each such $\sigma$ a representative
integral ideal $\mathfrak{A}_\sigma$.
 
We may then define for each $\sigma \in G$ a partial zeta function:
\begin{equation}\label{parz}
\zeta(s,\sigma)   =   \sum_{
\mathfrak{a}   \in [\mathfrak{A}_\sigma  ]  }\norm{K}{\QQ}{\mathfrak{a}}^{-s}, 
\end{equation}
and then observe from~(\ref{zFsum}),~(\ref{zFprod}) and~(\ref{parz}): 
\begin{equation}\label{zKlass}
\zeta_K(s)  =   \prod_{\pp \mid \mm } 
				\frac{1}{ 1  -  \norm{F}{\QQ}{\pp}^{-s} } \cdot 
					 \sum_{\sigma \in G} \zeta(s,\sigma).
\end{equation}
In other words, once we adjust for the Euler factors from the `bad
primes' $\pp \mid \mm$, we recover the original Dedekind zeta
function.

\begin{remark}\label{hidden}
It is important in what follows to bear in mind that $\mm$ --- and
ipso facto $\ell$ --- is implicit in each $\sigma$, by virtue of
$[\mathfrak{A}_\sigma]$ being a member of a specific ray class group
of that modulus; even though this is not explicit in the abbreviated
notation we use for the `Stark units' $\es$ below.
\end{remark}

\subsection{Stark units as special values of zeta functions}\label{sec:Stark_units}
Aside from References~\citenum{stark3,stark4,tate}, the material in
this section is explained in a down-to-earth manner in our context
in Refs. \citenum{Kopp} and \citenum{roblot}.  

Recall the definition $\dd = 1+\sqrt{d+1}$, so that in particular as ideals of~$\ZZ_K$,~$(\dd)(\dd^\tau) = (d)$. 
From now on our modulus will always be $\mm = \dd \jj$.  Each of the
partial zeta functions \eqref{parz} has an analytic continuation which
is meromorphic on the whole of $\CC$, with just a single simple pole
at $s=1$, the residue of which is independent of $\sigma$.  So the
differences $\zeta(s,\sigma) - \zeta(s,\sigma')$ are entire functions
(cf. Theorem 3.4 in Ref.~\citenum{Kopp}).  The \emph{Stark
units}~\cite{stark3} attached to the extension $K^{\dd\jj}/K$ are
defined from a particular set of these, as we now explain.

We use Stark's symbol $\mathfrak{c}_0$ for a new ideal subgroup of the
ray group $\mathcal{P}_1^{\dd}$, of index $1$ or $2$.
Namely, $\mathfrak{c}_0 \leq \mathcal{P}_1^{\dd}$ consists of the
principal ideals $(\alpha)$ where $\alpha \equiv 1 \bmod \dd$
and $\jj(\alpha) > 0$.  If the index $[ \mathcal{P}_1^{\dd} \colon
  \mathfrak{c}_0]$ is $1$, then the conjectures in
Ref.~\citenum{stark3} are not able to be formulated for the
modulus $\dd\jj$, because in this case~$\zeta(s,\sigma) = \zeta(s,\sigma')$.  
However, as we shall now show, this index is
always $2$ in our situation and so there exists a coset
of $\mathfrak{c}_0$ in $\mathcal{P}_1^{\dd}$, referred to as $T$ in
Ref.~\citenum{stark3} and as $R$ in the equivalent adaptation in
Section 3.2 of Ref.~\citenum{Kopp}, in which each principal
ideal $(\beta) \in \mathcal{P}_1^{\dd}$ satisfies $\jj(\beta) < 0$.

Indeed, as Stark points out, being of index $2$ is equivalent to
requiring that $\jj(u) > 0$ for any unit $u \in \ZZ_K^\times$
satisfying $u \equiv 1 \bmod \dd$.  By our choice of $\jj$, any power
of $u_K$ will in fact be positive under $\jj$.  Since the
characteristic of $\ZZ_K/\dd$ is strictly larger than $2$, we know
from Proposition~\ref{elluva} that the global units which are
congruent to $1$ modulo $\dd = \de$ are just generated by $\langle 1 ,
u_K^{3\ell} \rangle$, and so Stark's condition is fulfilled.
Incidentally, the same line of reasoning also implies that $K^{\dd} =
K^{\dd\jj^{\tau}}$ and $K^{\dd\jj} = K^{\dd\jj\jj^{\tau}}$, because
$\jj^{\tau}(u_K) < 0$ and so the order at the other place is $6\ell$
rather than $3\ell$.

By definition, $T$ is a ray ideal class in its own right, which for
consistency we could denote by $[\mathfrak{t}] \in
\mathfrak{R}^{\dd\jj}$ for some ideal $\mathfrak{t} \in T$.  We shall
denote the automorphism corresponding to this under the Artin map
by $\sigt \in G = \Gal{K^{\dd\jj}}{K}$; this is the non-trivial element
of $\Gal{K^{\dd\jj}}{K^{\dd}}$.  Multiplication by $[\mathfrak{t}] =
[\mathfrak{A}_\sigt]$ represents this same involution
inside $\mathfrak{R}^{\dd\jj}$.  In the first equation on page~65 of
Ref.~\citenum{stark3}, Stark effectively converts this action into a
character which splits the $\zeta(s,\sigma)$ into a positive and a
negative part, yielding a holomorphic difference we denote
by $\delta$:
\begin{equation}
\delta(s,\sigma)  =  \zeta(s,\sigma)  -  \zeta(s,\sigt\sigma). 
\end{equation}
This function can be analytically continued to the entire complex plane, and for every $\sigma \in G = \Gal{K^{\dd\jj}}{K}$ it satisfies: 
\begin{equation}\label{delflip}
\delta(s,\sigma)  =  -\delta(s,\sigt\sigma).
\end{equation}
We finally define the \emph{Stark units} attached to $K^{\dd\jj}/K$ as the quantities
\begin{equation}\label{eq:Stark_units}
\es  =  \exp(\delta'(0,\sigma)). 
\end{equation}
These numbers, as constructed, are \emph{real} numbers of which there are~$\#G$, arranged into $\#G/2$ inverse
pairs $\{ \es,\es^{-1} \}$ by virtue of \eqref{delflip}, see Hypothesis~\ref{nook} below. 
They are conjectured by Stark to be primitive for the extension~$K^{\dd\jj}/K$ 
and so may be viewed as $G$-conjugates of one another within one (any) fixed real embedding
of~$K^{\dd\jj}$. 
Naturally they have exactly the same number~$\#G$ of `conjugates' within the \emph{complex} embeddings 
of~$K^{\dd\jj}$, which lie on the unit circle as explained in Hypothesis~\ref{S1} below. 
These will be the numbers we will need for SIC fiducial vector construction in the succeeding sections.

\subsection{Key features of Stark's units for SIC fiducials}\label{hypotheses}
We summarise some key points of Stark's remarkable construction,
assuming his Conjectures~1 and 2 in Ref.~\citenum{stark3} to be true for
our base field $K$ and modulus $\dd \jj$.  In addition for
part~\ref{sqaw} below we assume Conjecture 1 in Ref.~\citenum{stark4},
together with what seems to be a widely-accepted slight strengthening
thereof, as in Conjecture 4.2 in Ref.~\citenum{tatetok}, for example; 
or for a clearer statement see the `conjecture' in the introductory section of~Ref.~\citenum{roblot2}.

As a dictionary for Stark's notation in Ref.~\citenum{stark3} in terms
of ours: his $k$ is our $K$; his $F$ is our $K^{\dd}$; his modulus
$\ff\pp_\infty^{(2)}$ is our $\dd \jj$; and his $K$ is our
$K^{\dd\jj}$.

\begin{hypothesis}
\label{nook} {\rm (Theorem 1 in Ref.~\citenum{stark3})}
  Let $\Sigma$ denote a set of fundamental units
  for $\ZZ_{K^{\dd}}^\times$.  The set $\Sigma \cup \{\es\colon
  \text{\small$ \sigma \in G$}\}$ generates a set of units
  of $K^{\dd\jj}$ of finite index in $\ZZ_{K^{\dd\jj}}^\times$.
  Notice per the foregoing discussion, that the rank of the additional
  units provided by the Stark units (in pairs) is exactly the
  `deficit' of~$\#G/2$.
 \end{hypothesis}
Note that Theorem 1 in Ref.~\citenum{stark3} applies in our case since
$K^\jj = H_K$ (which follows easily from \eqref{globcft}). Namely, the
characters $\phi$ appearing in Theorem 1, considered as Galois group
characters, are ones on $\Gal{K^{\dd\jj}}{K}$ that don't factor
through $\Gal{K^{\dd}}{K}$. Since $K^\jj = H_K\subseteq K^{\dd}$, all
of these characters must have prime ideal $\dd$ as the finite part of
their conductor.  We shall make use of this in Remark~\ref{BCD} below,
linking Conjecture~\ref{conj4} below to Conjectures~\ref{conj2}
and~\ref{conj3}.

The torsion-free $\ZZ$-rank of the group of units $\ZZ_{K^{\dd}}^\times$
of the maximal totally real subfield ${K^{\dd}}$ of $K^{\dd\jj}$ is
equal to $\#G - 1$ by Dirichlet's theorem, since the degree of the
field extension $K^{\dd} / \QQ$ is $\#G$.  On the other hand the rank
of $\ZZ_{K^{\dd\jj}}^\times$ is equal to $\frac{3\#G}{2} - 1$ as half
of the places are now complex: the extra units therefore have rank
exactly $\#G/2$.  For an explanation in a similar context see
the final part of Section~6 of Ref.~\citenum{AFMY}.

\begin{hypothesis}
\label{S1} {\rm (\S4, p.~74 in Ref.~\citenum{stark3})} The Galois
  element $\sigt$ induces complex conjugation in the complex
  embeddings of $K^{\dd\jj}$.  Since it is also algebraic inversion,
  it forces the Stark units in $K^{\dd\jj}$ to lie on the unit circle
  in their complex embeddings.
\end{hypothesis}
This justifies the
notation in the introduction, whereby the complex-valued Galois
conjugates of the Stark units $\es$ were referred to by the notation
$e^{i\vartheta}$. We will refer to these complex numbers as
\emph{Stark phase units}.

\begin{hypothesis}
\label{tood} 
  In their real embeddings, the $\es$  are all positive.
\end{hypothesis}
This statement holds for the chosen real embedding of $K^{\dd\jj}$
in which the $\es$ as defined above lie, since they are all given by the exponential
of a real value. Since the $\es$ are all $\Gal{K^{\dd\jj}}{K}$ conjugates of each other,
and this Galois group permutes the real embeddings of $K^{\dd\jj}$ transitively, the
positivity extends to all of the real embeddings.

\begin{hypothesis}
\label{sqaw}{\rm (Stark/Tate~\rm{`over-$\ZZ$': Conjecture in Ref.~\citenum{roblot2}}) }
  The extension $K^{\dd\jj}(\qea)$ of $K^{\dd\jj}$ obtained by
  adjoining the square root of any one of the $\es$ is
  itself an abelian extension of $K$.
\end{hypothesis}
As with Hypothesis~\ref{nook} above, we give a dictionary to go from Roblot's statement to our situation. 
Recall our definition of the ideal~$\pp$ by~$p\ZZ_K = \pp\pp^\tau$. 
His set~$S$ of places of~$K$ in our case consists of exactly three elements: 
the two infinite places, denoted by~$S_\infty$ in his notation, together with~$\pp$. 
The distinguished infinite place~$v$ in Roblot is our place~$\jj^\tau$, which as a real place 
`splits completely' in~$K^{\dd\jj}/K$, in the terminology of say Ref.~\citenum{gras}. 

Hence the conditions in part~(3) of Roblot's statement of Stark's conjecture are fulfilled and 
so his~$\varepsilon$ will be a unit. 

Moreover from equation (1) on page~66 of Ref.~\citenum{stark3}, with
$m=1$, the unit~$\varepsilon$ implicitly defined --- via orthogonality
relations for characters --- by statement (1) in the Conjecture in the
introductory section of Ref.~\citenum{roblot2} is equal to one of the
Stark units, up to a root of unity in~$K$ which in our case means up
to sign.  However we may ignore the sign ambiguity since even
extending by~$\sqrt{-1}$ still keeps us in an abelian extension of~$K$
--- that is, the compositum of~$K^{\dd\jj}(\qea)$ and~$K(\sqrt{-1})$
--- and so its subfields will also be abelian.

\subsection{Scaling the Stark phase units}\label{northernxi}
We now need to impose the restriction that our $d_\ell = n_\ell^2+3$
be a prime, and we write $\dd=\pp$ as before.  
We continue to 
assume that the numbers $\qea$ satisfy Hypotheses 1--4 in the previous section.

\subsubsection*{The {geometric} scaling factor $\xea = \sqrt{x_0}$}
As explained in the introduction to Section~\ref{biquad}, we are on a
quest to establish certain number-theoretical properties of the
scaling factor $\xi_\ell=\sqrt{x_0}$ which arises from eq.~\eqref{eq:unnormfid}. 
Notice from the definitions that $\xi_\ell\xi_\ell^\tau = -n_\ell$. 
For $j>0$, the components $x_j$ of the
un-normalized SIC fiducial vector candidate are $x_j=\sqrt{x_0
  e^{i\vartheta_j}}$. The complex numbers $\sqrt{e^{i\vartheta_j}}$ of
modulus one are obtained via the complex embeddings of the square
roots $\qea$ referred to in 
Hypothesis~\ref{sqaw}.  
With
our fixed embedding $\jj \colon K \rightarrow \RR$ under which
$\sqrt{D} > 0$, for $d_\ell = n_\ell^2+3$ for odd $\ell>0$, define a
scaling factor $\xea \in \RR\cdot\sqrt{-1}$ by:
\begin{equation}
\xea  =  \sqrt{x_0}  =   \sqrt{ - 2 - \sqrt{d_\ell + 1} }  .
\end{equation}
The absolute minimal polynomial of $\xea$ is $X^4 + 4X^2 - n_\ell^2$.
Note that $\xea^2 = -(\dd+1)$ and $(\xea^2)^\tau = -(\dd-3)$,
mimicking the $(d+1)(d-3)$ configuration.  Moreover $-\xea^2
(\xea^2)^\tau = d-3 = n^2$.

Recall the notation $L = K(\qmuk)$. 

\begin{lemma}\label{xiuk}
For every odd $\ell\geq1$: $K(\xea)  =  L$; and this is a
quadratic extension of $K$.  Moreover the prime $\pp$ is inert in
$L/K$; while the prime $\pp^\tau$ splits.
\end{lemma}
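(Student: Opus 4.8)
The plan is to reduce the statement to one algebraic identity for $x_0$ plus a short quadratic-residue computation in $\FF_p$.

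First I would rewrite $x_0$ in terms of the fundamental unit. Corollary~\ref{dee} gives $u_K^\ell=(n_\ell+\sqrt{d_\ell+1})/2$ and eq.~\eqref{hellsbells} gives $n_\ell=u_K^\ell-u_K^{-\ell}$, so $\sqrt{d_\ell+1}=u_K^\ell+u_K^{-\ell}$ and hence
\begin{equation*}
x_0 = -2-\sqrt{d_\ell+1} = -\bigl(u_K^\ell+2+u_K^{-\ell}\bigr) = -\,u_K^{-\ell}\,(u_K^\ell+1)^2 .
\end{equation*}
Since $\ell$ is odd, $(\ell+1)/2\in\ZZ$ and $-u_K^{-\ell}=(-u_K)\,u_K^{-(\ell+1)}$, so $\xea=\sqrt{x_0}$ equals, up to sign, $(u_K^\ell+1)\,u_K^{-(\ell+1)/2}\,\qmuk$; as $u_K^\ell+1\neq0$ this is a nonzero $K^\times$-multiple of $\qmuk$, whence $K(\xea)=K(\qmuk)=L$. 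That $L/K$ is quadratic follows because $-u_K$ has signature $(-,+)$, so $\jj(-u_K)<0$ and $-u_K$ is not a square in $K$; equivalently $X^2+u_K$ is irreducible over $K$, as already used in Section~\ref{biquad}.

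Next I would set up the decomposition of $\pp$ and $\pp^\tau$ in $L/K$. By the second paragraph of Section~\ref{biquad}, $L/K$ is unramified at every finite prime of $K$ not lying over $2$; and by Lemma~\ref{spp} the prime $p=d$ splits in $K$, so $\pp\neq\pp^\tau$, both have residue field $\FF_p$, and neither ramifies in $L/K$. By the standard splitting criterion for the quadratic extension $L=K(\qmuk)$, for $\mathfrak q\in\{\pp,\pp^\tau\}$ one has: $\mathfrak q$ splits in $L/K$ iff $-u_K$ is a square in $\ZZ_K/\mathfrak q\cong\FF_p$, and $\mathfrak q$ is inert otherwise. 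It then remains to evaluate these residues. Since $p=n^2+3$ is an odd prime $>3$, the integer $n$ must be even (else $n^2+3$ is even), so $p\equiv3\bmod4$ and $\tfrac{p-1}{2}$ is odd; consequently $-1$ is a non-residue mod $p$, and an element of $\FF_p^\times$ is a square precisely when its order is odd. The proof of Proposition~\ref{elluva} shows $\ord{\pp}{u_K}=3\ell$, which is odd, so $u_K$ is a square mod $\pp$; multiplying by the non-residue $-1$ gives that $-u_K$ is a non-residue mod $\pp$, i.e.\ $\pp$ is inert in $L/K$. For $\pp^\tau$ I would transport this along $\tau$: the ring isomorphism $\ZZ_K/\pp\xrightarrow{\ \sim\ }\ZZ_K/\pp^\tau$ induced by $\tau$ carries squares to squares and sends $u_K$ to $u_K^\tau=-u_K^{-1}$ (Lemma~\ref{nm1}), so $-u_K$ is a square mod $\pp^\tau$ iff its $\tau$-conjugate $(-u_K)^\tau=u_K^{-1}$ is a square mod $\pp$, which holds, since $u_K$ is. Hence $\pp^\tau$ splits in $L/K$. (Alternatively one can evaluate the residue at $\pp^\tau$ directly using $\ord{\pp^\tau}{u_K}=6\ell$, recorded at the end of the proof of Proposition~\ref{elluva}.)

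The computational content is light; the only delicate point is the residue bookkeeping, where one must combine $p\equiv3\bmod4$ (so that $-1$ is a non-residue) with the parity of $\ord{\pp}{u_K}=3\ell$ from Proposition~\ref{elluva}, and it is cleaner to deduce the behaviour at $\pp^\tau$ by applying $\tau$ than to recompute it from scratch. The one genuine subtlety is the sign: $\xea$ is a $K^\times$-multiple of $\qmuk$, not of $\sqrt{u_K}$, and it is exactly this sign that makes $\pp$ inert rather than split.
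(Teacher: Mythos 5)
Your proposal is correct and follows essentially the same route as the paper: the first part is the paper's identity $\xea=\bigl(n_\ell/(1-u_K^\ell)\bigr)\sqrt{-u_K^{\ell}}$ written in the equivalent form $x_0=-u_K^{-\ell}(u_K^{\ell}+1)^2$, and the second part is the same residue computation modulo $\pp$ and $\pp^\tau$ based on $p\equiv 3\bmod 4$ and $\ord{\pp}{u_K}=3\ell$ from Proposition~\ref{elluva}. The only cosmetic difference is that you deduce the behaviour at $\pp^\tau$ by transporting along $\tau$, whereas the paper directly notes that $-u_K$ has odd order $3\ell$ modulo $\pp^\tau$; these amount to the same thing.
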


\begin{proof}
That $K(\qmuk) / K$ is quadratic follows from the definition of $u_K$.
So it suffices to express $\qmuk$ $K$-linearly in terms of $\xea$, for
any odd positive integer $\ell$, using the definitions and
Corollary~\ref{dee}.  When $\ell = 1$, $\qmuk = (1-u_K)\xi_1/n_1$; or
equivalently $\xi_1 = n_1\qmuk/(1-u_K)$.  Replacing every instance of
$u_K$ with $u_K^\ell$ (and so $n_1$ with $n_\ell$) in this formula
gives similar expressions for $\xea$.

The second part follows from the proof of Proposition~\ref{elluva},
bearing in mind that $p \equiv 3 \bmod 4$: $-u_K$ has order $6\ell$
and so is not a square in the cyclic multiplicative group mod $\pp$ of
order $p-1$; conversely it is a square mod $\pp^\tau$ because its
order is $3\ell$.
\end{proof}

\begin{theorem}\label{skyxi}
Fix any $\est$.  Then with notation as above, the field $K^{\pp\jj}(\qea) \subseteq
K^{\pp\jj}(\xea)$ for every odd $\ell$.
\end{theorem}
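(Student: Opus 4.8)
The plan is to recast the inclusion as a uniqueness statement about quadratic extensions. Since $2$ is invertible in $K^{\pp\jj}$, every quadratic extension of $K^{\pp\jj}$ is Kummer, classified by a class in $(K^{\pp\jj})^\times/((K^{\pp\jj})^\times)^2$; the theorem asserts that the class of $\est$ lies in the subgroup generated by that of $\xea^2 = -(\dd+1)$. By Lemma~\ref{xiuk} we have $K(\xea) = K(\qmuk) = L$, so $\xea^2$ and $-u_K$ agree modulo squares in $K^\times$, whence $K^{\pp\jj}(\xea) = K^{\pp\jj}(\qmuk) = K^{\pp\jj}\cdot L$. It therefore suffices to prove that $K^{\pp\jj}(\qea)$ either coincides with $K^{\pp\jj}$ (then the inclusion is trivial) or equals this single field $K^{\pp\jj}(\qmuk)$.

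First I would record the three properties shared by $K^{\pp\jj}(\qea)$ and $K^{\pp\jj}(\qmuk)$. Each is at most quadratic over $K^{\pp\jj}$. Each is abelian over $K$: for $K^{\pp\jj}(\qea)$ this is exactly Lemma~\ref{sehrstark}(\ref{sqaw}) (the Stark/Tate ``over-$\ZZ$'' statement), while $K^{\pp\jj}(\qmuk) = K^{\pp\jj}\cdot L$ is a compositum of the abelian extensions $K^{\pp\jj}/K$ and $L/K$. Finally each is unramified over $K^{\pp\jj}$ away from the primes above $2$: both $\est$ and $-u_K$ are global units, so adjoining their square roots can ramify only above $2$, and there is no ramification at the infinite places because $K^{\pp\jj}$ is already totally imaginary, using $K^{\dd\jj} = K^{\dd\jj\jj^\tau}$ established in Section~\ref{sec:Stark_units}.

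The crux is then to show there is a \emph{unique} nontrivial quadratic extension of $K^{\pp\jj}$ with these three properties, forcing it to be $K^{\pp\jj}(\qmuk)$. Here I would bound the ramification at $2$: since $K^{\pp\jj}/K$ is unramified at $2$, a prime above $2$ has absolute ramification index $e\le 2$, so by Lemma~\ref{ramlem} the conductor exponent at $2$ of any such extension over $K$ is at most $2e\le 4$; moreover it is tame at $\pp$, so its conductor over $K$ divides $4\pp\jj\jj^\tau$ and the extension lies in $K^{4\pp\jj\jj^\tau}$. Because $K^{\pp\jj} = K^{\pp\jj\jj^\tau}$, such quadratic extensions are exactly the quadratic subextensions of $K^{4\pp\jj\jj^\tau}/K^{\pp\jj}$. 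Using coprimality of $4$ and $\pp$ (CRT on moduli), $K^{4\pp\jj\jj^\tau} = K^{4\jj\jj^\tau}\cdot K^{\pp\jj\jj^\tau}$ with intersection $K^{\jj\jj^\tau}$, giving $\Gal{K^{4\pp\jj\jj^\tau}}{K^{\pp\jj}} \cong \Gal{K^{4\jj\jj^\tau}}{K^{\jj\jj^\tau}}$, which measures only the effect of the factor $4$. The case analysis of Proposition~\ref{tworam} shows that the $2$-part of this relative group is cyclic of order $2$, generated by adjoining $\qmuk$ (the unique ramified-at-$2$ quadratic piece over $H_K$ being $H_K(\qmuk)$, compatibly with the split $\Syl{2}{\Gal{K^{\pp\jj}}{K}} \cong C_2 \times \Syl{2}{\clK}$ of Lemma~\ref{seetoo}). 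Hence the only nontrivial candidate is $K^{\pp\jj}(\qmuk) = K^{\pp\jj}(\xea)$, and the inclusion follows.

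The main obstacle is the $2$-adic ramification bookkeeping in this last step: one must be certain that nothing outside $2$ (in particular neither $\pp$ nor any prime dividing $n_\ell$) contributes a new quadratic piece, and that the $2$-part of the relevant relative Galois group is genuinely $C_2$ rather than larger. This is precisely the content packaged in Lemma~\ref{ramlem} and in the case-by-case computation of Proposition~\ref{tworam}; with those in hand, everything else is formal class field theory together with quadratic Kummer theory.
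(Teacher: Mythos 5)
There is a genuine gap, and it is concentrated in your treatment of the infinite places. The field $K^{\pp\jj}$ is \emph{not} totally imaginary: since the modulus $\pp\jj$ excludes $\jj^\tau$, every place of $K^{\pp\jj}$ lying above $\jj^\tau$ is real. The equality $K^{\pp\jj}=K^{\pp\jj\jj^\tau}$ that you cite says the opposite of what you want: allowing ramification at $\jj^\tau$ gains nothing, i.e.\ $K^{\pp\jj}$ is unramified over $\jj^\tau$ and therefore has real places there. Consequently your third property --- that $K^{\pp\jj}(\qea)/K^{\pp\jj}$ is unramified at the infinite places --- is never actually established. It is true, but it requires the positivity of the $\est$ in the real embeddings of $K^{\pp\jj}$ (Lemma~\ref{sehrstark} (\ref{tood}) together with Remark~\ref{starkchar2}), which you never invoke; this is precisely how the paper's proof handles the point.

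The omission is fatal to your uniqueness step, not merely cosmetic. Because you leave the behaviour at the real places open, you are forced to work with the modulus $4\pp\jj\jj^\tau$, and for that modulus the uniqueness claim is false. Indeed $H_K(\qmuk)$, $H_K(\sqrt{u_K})$ and $H_K(\sqrt{-1})$ are three distinct quadratic subextensions of $K^{4\ZZ_K\jj\jj^\tau}/H_K$ (note $K^{\jj\jj^\tau}=H_K$ here, since units of every signature exist): each is abelian over $K$, each is obtained by adjoining the square root of a unit and so is unramified at finite primes away from $2$, and they are distinguished only by their ramification at the real places --- which your three properties do not see. Hence $\Syl{2}{\Gal{K^{4\ZZ_K\jj\jj^\tau}}{H_K}}$ has $2$-rank at least two, and your assertion that ``the $2$-part of this relative group is cyclic of order $2$'' is wrong; Proposition~\ref{tworam} is a statement about the modulus $4\ZZ_K\jj$, with $\jj^\tau$ excluded, and its conclusion does not extend to $4\ZZ_K\jj\jj^\tau$. (A secondary problem: the CRT identity $K^{4\pp\jj\jj^\tau}=K^{4\jj\jj^\tau}\cdot K^{\pp\jj\jj^\tau}$ with intersection $K^{\jj\jj^\tau}$ is not justified --- for ray class fields of coprime moduli the compositum can be strictly smaller, the defect coming from global units --- though that discrepancy is of odd order and would not by itself break a purely $2$-primary argument.) The repair is exactly the paper's route: use Lemma~\ref{sehrstark} (\ref{tood}) to show that the extension is unramified at the real places, so that the conductor of $K^{\pp\jj}(\qea)/K$ has the form $\pp\mm_0\jj$ with $\mm_0\mid 4\ZZ_K$ and no $\jj^\tau$; then Proposition~\ref{tworam} applies as stated and pins the extension down to $K^{\pp\jj}(\qmuk)=K^{\pp\jj}(\xea)$ via Lemma~\ref{xiuk}.
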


\begin{proof}
See Remark~\ref{hidden} for the tacit link between $\ell$ and $\sigma$.

Since $K^{\pp\jj}(\qea)$ is Galois (and even abelian) over $K$ by
Hypothesis~\ref{sqaw}, 
Kummer theory tells us that any
two of the $\est$, which are Galois conjugates over $K$, differ
multiplicatively by a square in $K^{\pp\jj}$. Thus, $K_\sigma :=
K^{\pp\jj}(\qea)$ is independent of $\sigma$.

If the $\est$ are squares in $K^{\pp\jj}$ then $K_\sigma = K^{\pp\jj}$
and the theorem is trivial.  So, we assume that $K_\sigma$ is a
quadratic extension of $K$.

We consider the ramification of $K_\sigma/K^{\pp\jj}$. Since $\est$ is
a unit, $K_\sigma/K^{\pp\jj}$ is unramified at all primes of
$K^{\pp\jj}$ which do not lie over $2$ by the same argument as in the
second paragraph of Section~\ref{biquad}.  The real places of
$K^{\pp\jj}$ are those above $\jj^\tau$ and the $\est$ have positive
images under these real embeddings by Hypothesis~\ref{tood}, 
so $K_\sigma/K^{\pp\jj}$ is unramified at these places.

Since $K_\sigma$ is an abelian extension of $K$ containing
$K^{\pp\jj}$, and no finite or infinite places outside of those over
$2$ can ramify in $K_\sigma/K^{\pp\jj}$, the conductor of $K_\sigma/K$
must be $\pp\mm_0\jj$ where $\mm_0$ is some power product of primes of
$K$ over 2. As $K_\sigma \not\subseteq K^{\pp\jj}$, $\mm_0$ is
non-trivial.  If we take the fixed field $F$ of the inertia subgroup
at $\pp$ of $\Gal{K_\sigma}{K}$, then $F/K$ is an abelian
extension of conductor dividing $\mm_0\jj$ with $K_\sigma =
K^{\pp\jj}F$.  As all extensions are abelian, and
$[K_\sigma:K^{\pp\jj}] = 2$, we easily see that we can replace $F$ by
its maximal subfield over $K$ of $2$-power degree over $K$, and so can
assume that $[F:K]$ is a power of $2$.

We now show that $\mm_0\mid 4\ZZ_K$.

\def\gal{\hbox{\rm Gal}}
\newcommand{\gall}[2]{{\mathrm{Gal}^{s}_{#1/#2}}}
Let $\qq$ be any prime of $K$ lying over $2$ and $\QQQ$ any prime of
$K^{\pp\jj}$ lying over $\qq$.  From the fact that $K^{\pp\jj}/K$ is
unramified at $\qq$, the exponent of $\qq$ in $\mm_0$ is the exponent,
$r$ say, of $\QQQ$ in the conductor of the abelian extension
$K_\sigma/K^{\pp\jj}$. To show this, first note that this is a local
question and, as in the proof of Lemma~\ref{ramlem}, we can reduce to looking at the
conductors of $2$-adic fields in the tower of $K_\sigma/K^{\pp\jj}/K$
completed at primes over $\qq$. Denote this completed tower of
fields by $A/B/C$. Then the statement follows from two facts from
local CFT which relate local conductors to higher ramification groups.
The first fact is that, if $B/C$ is unramified, then the upper-numbered
higher ramification groups $\gall{A}{B}$ and $\gall{A}{C}$ coincide
(as subgroups of $\Gal{A}{C}$) for $s \ge 0$. This is trivially true for higher
ramification groups with the lower numbering (Definition~8.1, Chapter III of
Ref.~\citenum{neukirch}). Then the $\eta_{A/B}(s)$, $\eta_{A/C}(s)$ functions as defined on
page 66 of Ref.~\citenum{neukirch} are the same for $s \ge 0$, and as these
define the translation between the upper and lower-numbered groups
(by the definition on page 67 of Ref.~\citenum{neukirch}),
the upper-numbered ramification groups are also the same for $s \ge 0$.
Secondly, the exponent of the conductor in an abelian extension
of local fields is the index of the first trivial higher ramification group
with the upper numbering.
This comes from the definition of the local conductor (see the proof of
Lemma~\ref{ramlem}), the fact that the kernel of the local norm residue
map is the norm group of the top field (see the paragraph after Thm.~2.1, Ch. III
of Ref.~\citenum{neukirch}) and Thm.~8.10, Ch. III of Ref.~\citenum{neukirch}.

Since $\est$ is a unit in $K^{\pp\jj}$, Lemma~\ref{ramlem} shows that $r \le 2e$,
where $e$ is the absolute ramification index of $\QQQ$, which
equals the absolute ramification index of $\qq$. So $r$ does not
exceed the exponent of $\qq$ in $4\ZZ_K$. Since this is true for all
$\qq$, $\mm_0\mid 4\ZZ_K$.

Thus, $F \subseteq K^{\mm_0\jj} \subseteq K^{4\ZZ_K\jj}$ and $F
\not\subseteq H_K$, since $H_K \subseteq K^{\pp\jj}$. By
Proposition~\ref{tworam}, $FH_K=H_K(\qmuk) \Rightarrow
K_\sigma=FK^{\pp\jj}=K^{\pp\jj}(\qmuk)$. The statement then follows
from Lemma~\ref{xiuk}.
\end{proof}

It is a difficult question to determine general conditions for whether
or not the Stark units $\est$ are already squares in their respective
ray class fields, see for example Refs. \citenum{arakawa},
\citenum{roblot2}, or Chapter 10 of Ref.~\citenum{wash}.  Indeed, this
is the question of the `index' referred to in
Hypothesis~\ref{nook}. 
However, based upon observations in the following prime dimensions in
our series:
\begin{alignat*}{5}
& 7, 19, 67, 103, 199, 487, 787, 1447, 2503, 2707, 3847,\\
& 4099, 5779, 8467, 19603, 132499
\end{alignat*}
we venture to make the following conjecture, which would make the
geometry of the subsequent sections of the paper align neatly with the
number theory we have just been studying.  Strictly speaking, this
arithmetic conjecture is necessary for the three main conjectures in
Section~\ref{sec:intro} to hold as stated.  However, should it turn
out not to be true, those three conjectures would simply take on a
somewhat less concise form.

\begin{conjecture}
[\textbf{Stark units are non-squares}]
\label{conj4}
  The Stark unit $\est$ of Theorem~\ref{skyxi} is not a square.  So
  the statement in Theorem~\ref{skyxi} is an equality, viz.:
  \begin{equation}\label{nxi1}
    K^{\pp\jj}(\qea) = K^{\pp\jj}(\xea);
  \end{equation}
  and hence:
  \begin{equation}\label{nxi2}
    \qea\xi_\ell=\sqrt{x_0\es} \in K^{\pp\jj}.
  \end{equation}
\end{conjecture}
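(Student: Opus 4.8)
The plan is to peel off a purely formal part and isolate the one genuinely arithmetic input. Write $M=K^{\pp\jj}$. Granting that $\est$ is not a square in $M$, the field $M(\qea)$ is a quadratic extension of $M$, and the proof of Theorem~\ref{skyxi} then shows directly that $M(\qea)=M(\qmuk)$; by Lemma~\ref{xiuk} this equals $M(\xea)$, which is exactly \eqref{nxi1}. Since $\qea^2=\est$ and $\xea^2=x_0$, we have $(\qea\xea)^2=x_0\est\in M$, and Kummer theory applied to the equality $M(\sqrt{\est})=M(\sqrt{x_0})$ forces $x_0\est\in(M^\times)^2$, whence $\qea\xea=\sqrt{x_0\est}\in M$, i.e.\ \eqref{nxi2}. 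For the converse I would note that, by Section~\ref{biquad} and the remark following Proposition~\ref{tworam}, $L/K$ is ramified above $2$, so its conductor cannot divide the odd ideal $\pp$; hence $L\not\subseteq M$, i.e.\ $x_0$ is itself a non-square in $M$, and then either of \eqref{nxi1}, \eqref{nxi2} forces $\est\notin(M^\times)^2$. Thus the conjecture is \emph{equivalent} to the bare statement that $\est$ is not a square in $K^{\pp\jj}$.

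It remains to attack that non-squareness. First note what is \emph{not} available: since $\est$ is a unit, $M(\qea)/M$ is unramified away from the primes over $2$ (second paragraph of Section~\ref{biquad}), and since $\est>0$ at the real places by Lemma~\ref{sehrstark}\,(\ref{tood}) it is unramified there too --- so there is no archimedean or odd-prime local obstruction, and the square class of $\est$ in $M$ is a delicate $2$-adic and genus-theoretic invariant. I would pursue it along one of three routes: (i) a sufficiently sharp integral refinement of the Stark conjectures --- the ``over-$\ZZ$'' statements of Stark and Tate already invoked in Lemma~\ref{sehrstark}\,(\ref{sqaw}), or the strengthening of Ref.~\citenum{tatetok} --- that pins $\est$ down modulo squares, e.g.\ via a sign or a $2$-adic valuation of an $L$-value; (ii) a direct genus-theory computation of the $2$-parts of the class group and unit group of $K^{\pp\jj}$ together with an explicit enough grip on $\est\bmod(M^\times)^2$; or (iii) importing and adapting the criteria of Arakawa~\citenum{arakawa} and Roblot~\citenum{roblot2} (cf.\ Chapter~10 of Ref.~\citenum{wash}), which address precisely when a Stark unit is a square or a higher power.

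The hard part --- and the reason this is stated only as a conjecture --- is that $\est=\exp(\delta'(0,\sigma))$ is defined analytically, so beyond the properties guaranteed by Stark's conjectures (algebraicity, unit, positivity at real places, abelianness of $K^{\pp\jj}(\qea)/K$) we have essentially no leverage on its $2$-adic square class; and since Stark units can genuinely be squares in other settings, no purely formal argument can suffice. Our confidence in the family $d=n_\ell^2+3$ rests on the direct numerical checks in the listed dimensions, and on the fact that the geometry predicts it: if the conjectures of Section~\ref{sec:intro} hold, the components $x_j$ of the almost-flat fiducial are the embeddings of $\sqrt{x_0\est}$ and must lie in the SIC field, forcing $\sqrt{x_0\est}\in K^{\pp\jj}$ and hence, by the non-squareness of $x_0$, that of $\est$. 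A rigorous proof would most plausibly come either by making one of routes (i)--(iii) unconditional in this family, or by establishing the geometric input independently.
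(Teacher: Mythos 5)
Your proposal matches the paper's treatment: the statement is a genuine conjecture, and the only provable content — that non-squareness of $\est$ upgrades the containment of Theorem~\ref{skyxi} to the equality \eqref{nxi1}, and that \eqref{nxi2} then follows because $\xi_\ell^2$ and $\est$ generate the same quadratic extension of $K^{\pp\jj}$ so their ratio lies in $({K^{\pp\jj}}^\times)^2$ — is exactly the argument the paper gives in the paragraph following the conjecture, with the core non-squareness supported only by the numerical checks in the listed dimensions and by the link to the geometric Conjectures~\ref{conj2} and~\ref{conj3} spelled out in Remark~\ref{BCD}. Your observation that the conjecture is in fact \emph{equivalent} to the bare non-squareness (since $L/K$ is ramified above $2$, so $x_0$ is a non-square in $K^{\pp\jj}$) is a correct small addition that the paper leaves implicit, and your survey of possible attack routes coincides with the references the paper itself cites (Arakawa, Roblot, Washington).
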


The fact that \eqref{nxi2} follows from \eqref{nxi1} is simply that
$\xi_\ell^2$ and $\est$ are numbers whose square roots generate the
same quadratic extension of $K^{\pp\jj}$; hence their ratio is an 
element of $({K^{\pp\jj}}^\times)^2$, i.\,e., a square in $K^{\pp\jj}$.

As we explain in Section~\ref{sec:Markus}, by taking generators of the
(abelian) Galois group we are given a canonical ordering on these
$\qea$.  This is the starting-point for the calculations in this
paper.

\begin{remark}\label{BCD}
As noted in the explanations following Hypothesis~\ref{nook},
Theorem~1 in Ref.~\citenum{stark3} applies in our case. 
Hence the $\Gal{K^{\dd\jj}}{K}$-conjugates of any one of our Stark units
are distinct, which in turn is equivalent to any one of them generating the `small'
ray class field $K^{\dd\jj}$ over $K$.
Conjecture \ref{conj4} says that the components of our fiducial vector all lie in that same field
--- and so generate it --- since their squares give the Stark units up to an element of $K$.
Therefore Conjecture \ref{conj2} is a consequence of Conjectures \ref{conj3} and \ref{conj4} combined. 
\newline
On the other hand, when Stark's conjecture is true and allows our
construction of a fiducial --- i.e. when Conjecture \ref{conj3} is
true --- it is clear that Conjecture \ref{conj2} implies Conjecture
\ref{conj4}, because if the Stark units were squares in $K^{\dd\jj}$,
the fiducial vector components would not lie in that field. Thus,
given Conjecture \ref{conj3}, Conjectures \ref{conj2} and \ref{conj4}
are essentially equivalent.
\end{remark}

\section{SIC components}\label{sec:IB1}

We now go to a Hilbert space of dimension $d = n^2+3$, and make the restriction that $d$ is odd. Our first task is to explain 
why we expect all these Hilbert spaces to contain a SIC fiducial vector of the general form given in eqs. (\ref{Psi1})--(\ref{eq:unnormfid}). 
We will then explain what symmetries this fiducial vector is expected to have, and how a certain cyclic subgroup of the Clifford group is 
expected to permute its components. All our expectations will eventually receive considerable support from the results we report. 

The background we need on the Clifford group is given in Appendix \ref{sec:AppB}. The Clifford group contains a copy of the symplectic 
group $SL(2, \ZZ_d)$ as a factor group.  The unitary representatives of the symplectic group are called symplectic unitaries. In general 
a symplectic unitary transforms a SIC vector into another SIC vector, usually in a different but unitarily equivalent SIC. A symplectic unitary 
that leaves some SIC vectors invariant and permutes the remaining vectors in that SIC among themselves is said to give rise to a symmetry 
of that SIC. An important example of a symplectic unitary (not giving a symmetry) is the Fourier matrix $U_F$, arising as 
\begin{equation}
  F= \left(
  \begin{array}{rr}
    0 & -1 \\
    1 & 0
  \end{array}
  \right)
  \hspace{5mm} \Longrightarrow \hspace{6mm}
  (U_F)_{r,s} =  \frac{\;e^{\frac{2\pi i rs}{d}}\;}{\sqrt{-d}}, \hspace{5mm} r,s \in \{0, \dots , d-1\},\label{eq:Fourier}
\end{equation}
where the global phase factor was chosen so that the matrix elements
belong to the cyclotomic field generated by the $d$th roots of
unity. It can be used to interchange the generators of the
Weyl--Heisenberg group, $U_FXU_F^{-1}= Z$.

For our first task we will rely on a conjecture based on evidence from
numerical searches, saying that when $d = n^2+3$ is odd then it is
possible to find a SIC fiducial vector $\Psi_R$ that is left invariant
by complex conjugation \cite{Andrew}. Complex conjugation is in fact
an anti-unitary operation appearing in our representation of the
extended Clifford group, but for the moment all we need to know is
that $\Psi_R$ is a real vector. Because the operator $X$ is real, this
implies that the overlap $\langle \Psi_R|X|\Psi_R\rangle$ is also
real. Its absolute value is determined by the SIC condition, but we
will make an additional assumption about its sign. The existence of an
almost flat fiducial vector then follows from a known theorem
\cite{Roy, Mahdad} whose proof we will now repeat.

\begin{theorem}\label{litetteorem}
  If there exists a real SIC fiducial vector $\Psi_R$ such that
  \begin{alignat*}{5}
    \sqrt{d+1}\langle \Psi_R|X^i|\Psi_R\rangle = +1
  \end{alignat*}
for all $i = 1, \dots, d-1$ then there exists an almost flat SIC
fiducial vector of the form
\begin{alignat*}{5}
  \Psi_C = N^\prime (\sqrt{|x_0|}, e^{i\alpha_1}, \dots , e^{i\alpha_{d-1}}), 
\end{alignat*}
where $|x_0| = 2 + \sqrt{d-1}$ and $N^\prime$ is a normalizing factor. 
\end{theorem}
\begin{proof}
Denoting the real fiducial vector by $|\Psi_R\rangle$
we define $|\Psi_C\rangle = U_F|\Psi_R\rangle$ and note that
\begin{equation}
  \langle \Psi_R|X^i|\Psi_R\rangle = \langle \Psi_C|U_FX^iU_F^{-1}|\Psi_C\rangle
  = \langle \Psi_C|Z^i|\Psi_C\rangle = \sum_{k=0}^{d-1}\omega^{ik}|a_k|^2,
\end{equation}
where $\{ a_k\}_{k=0}^{d-1}$ are the components of the complex fiducial vector.
It follows that
\begin{equation}
  | a_k|^2 = \frac{1}{d}\sum_{i=0}^{d-1}\omega^{-ki}\langle \Psi_R|X^i|\Psi_R \rangle. \label{Wienerny}
\end{equation}
In effect, keeping the explicit form of the permutation matrix $X$ in mind, we have used a theorem 
relating the autocorrelation in a time series to its power spectrum\cite{Einstein}. By assumption
\begin{equation}
  \sqrt{d+1}\langle \Psi_R|X^i|\Psi_R \rangle
  = \left\{
  \begin{array}{lll}
    \sqrt{d+1} & \text{if $i = 0$,} \\
    1 & \text{if $i \neq 0$.}
  \end{array} \right.
\end{equation}
From equation \eqref{Wienerny} it then follows that
\begin{equation}
  |a_0|^2 = \frac{\sqrt{d+1} + d-1}{d\sqrt{d+1}}, \hspace{8mm}
  |a_1|^2 = \dots = |a_{d-1}|^2 = \frac{\sqrt{d+1}-1}{d\sqrt{d+1}}.
\end{equation}
\noindent Finally we verify that $|a_0|^2/|a_1|^2 = |x_0|$.
\end{proof}

To arrive at the form of the fiducial vector that we postulated in the Introduction we perform a rescaling 
of the components. A short calculation shows that Theorem \ref{litetteorem} implies equation \eqref{Psi2} 
provided we choose
\begin{equation}
  N^2 = \frac{d+3-3\sqrt{d+1}}{d(d-3)\sqrt{d+1}} .
\end{equation}
This choice is dictated by our insistence that the components of the
vector belong to the ray class field, up to an overall factor whose square must
also be in the field.

One more remark is useful. In the notation of Appendix \ref{sec:AppB} there holds
\begin{equation}
  FJF^{-1} = PJ .
\end{equation}
Given that the complex fiducial vector $\Psi_C$ is related to a real vector by a Fourier transformation,
given that $J$ is represented by complex conjugation, and given the representation \eqref{parity}
of the parity operator $P$, it follows that the components of $\Psi_C$ obey
\begin{equation}
  \bar{a}_r = a_{-r}, \label{cc}
\end{equation}
where the bar denotes complex conjugation and the indexing, as always, is modulo $d$. 

We now turn to the symmetries of our SICs. For convenience we restrict the discussion 
to dimensions of the form $d = p$ where $p$ is a prime number equal to 1 modulo 3. This is 
conceptually the simplest case, and indeed we do focus on such prime dimensions in this paper. 

A battle tested conjecture by Zauner \cite{Zauner} implies 
that in every dimension there exists a SIC fiducial vector invariant under a symplectic unitary of
order three. Zauner's conjecture has been refined and extended over
the years \cite{Marcus, Scott, Fibonacci, Andrew}. In particular we
expect that every dimension $d_\ell$ houses a SIC that has a symmetry
of order $3\ell$, where $\ell$ refers to the position of $d_\ell$ in
the dimension towers described in \eqref{deez} of
Section~\ref{sec:towers}. These SICs were called {\it minimal SICs} in
Ref. \citenum{AFMY}, and in this paper our sole concern is with
them. What is special about dimensions equal to $1$ modulo $3$ is that
the conjugacy class of symplectic unitaries that leave some SIC vector
invariant contains a representative that derives from a diagonal
symplectic matrix. This is important because, in the standard Weyl
representation, a diagonal symplectic matrix is represented by a
permutation matrix \cite{Marcus}.  Accordingly, we choose an integer
$\alpha$ such that $\alpha^{3\ell} = 1$, and a symplectic matrix
\begin{equation}
  S = \left(
  \begin{array}{ll}
    \alpha^{-1} & 0 \\
    0 & \alpha
  \end{array}
  \right)
  \hspace{5mm} \Longrightarrow \hspace{5mm}
  (U_S)_{r,s} = \delta_{\alpha r,s}  .
\end{equation}
As advertised the unitary matrix $U_S$ is a permutation matrix, and the by now
standard conjecture says that there exists a SIC fiducial vector $\Psi$
left invariant by $U_S$.

The centralizer of $S$ within the symplectic group is an abelian group generated by a matrix
\begin{equation}
  G = \left(
  \begin{array}{ll}
    \theta^{-1} & 0 \\
    0 & \theta
  \end{array}
  \right)
  \hspace{5mm} \Longrightarrow \hspace{5mm}
  (U_G)_{r,s} = \delta_{\theta r,s},
\end{equation}
where $\theta$ is a generator of the multiplicative group $\ZZ_d^\times$, so that the list $\theta,
\theta^2, \dots , \theta^{p-1} = 1$ runs through all the non-zero integers counted
modulo $p$. In particular
\begin{equation}
\alpha = \theta^\frac{d-1}{3\ell} .
\end{equation}
The unitary $U_G$ is
a permutation matrix leaving the first (or `zeroth') component of our vector invariant,
and permuting the others according to
\begin{equation}
  \Psi_{\theta^j} \rightarrow \Psi_{\theta^{j+1}}.
\end{equation}
Because of the symmetry that we have postulated
\begin{equation}
  \Psi_{\theta^{j+(d-1)/3\ell}} = \Psi_{\alpha \theta^j} = \Psi_{\theta^j}. \label{Zaunersym}
\end{equation}
The SIC fiducials have $d$ components. The first (zeroth) component is $x_0 = \xi^2$ in the notation
used in Section 4.3. The remaining $d-1$ components can be expressed
in terms of a smaller set of independent numbers $z_0, z_1, \dots , z_{(d-1)/(3\ell )-1 }$ by
\begin{equation}
  \hat{\Psi}_{\theta^j \bmod d} = z_{j \bmod (d-1)/3\ell} .
  \label{primetikett}
\end{equation}
(Here we find it convenient to work with the unnormalized vector
$\hat{\Psi}$. See eq.~\eqref{Psi1}.  The point is that the normalizing
factor $N$ does not belong to the ray class field, although its square
does.)  Thus the fiducial vector is made up from $3\ell$ copies of an
ordered set of $(d-1)/3\ell$ complex numbers. The generator of the
symplectic centralizer, $U_G$, permutes these numbers cyclically and
gives rise to Clifford equivalent SICs. 

We now impose both restrictions, so that $d = n^2+3 = p$ where $p$ is a prime number 
(necessarily equal to 1 modulo 3). It is not difficult to see that a real SIC fiducial vector may 
exist in such dimensions. If we represent the symmetry operators with 
permutation matrices the corresponding symplectic matrices are diagonal, and the extended 
centralizer contains an anti-symplectic matrix $J$ represented by pure complex conjugation \cite{Marcus}. 
All the available evidence suggests that a minimal SIC invariant under such an 
anti-unitary symmetry exists if and only if $d = n^2+3$ and $d$ is odd. Theorem 
\ref{litetteorem} then implies that an 
almost flat SIC fiducial vector invariant under the anti-unitary corresponding to the diagonal 
matrix $PJ$ exists as well. The real and the almost flat fiducial 
vectors are in fact left invariant by the same symplectic unitary $U_S$, because they 
are related by the discrete Fourier transformation and there holds that $FSF^{-1} = S^{-1}$. 

To avoid any misunderstanding: At the present 
time the only dimensions for which it
has been proved that a SIC exists are those for which exact solutions have been found.
What we have done in this section is to give a plausibility argument suggesting that
a SIC fiducial vector of the form given in equations \eqref{Psi1}--\eqref{eq:unnormfid} should
exist whenever $d = n^2+3$. Before the end of the paper we will have used this special
form to prove that SICs exist in a number of dimensions where, previously, they were
not even known in numerical form.

\section{Decoupling of the cyclotomic field}\label{sec:IB2}

\noindent The fact that a small subfield of the full ray class field
sometimes suffices to write down a SIC fiducial is not exclusive to
the dimensions we consider here. It was in fact discovered when the
exact solution for $d=323$ was constructed \cite{Fibonacci}, and it
has been noticed in many dimensions since then \cite{MG, MG2}. It is
however especially pronounced\cite{Anti1} when $d = n^2+3$. An
assumption that underlies our recipe is that the fiducial vector in
eqs.~\eqref{Psi1}--\eqref{eq:unnormfid} lies in the small ray
class field $K^\mm$ that was identified in
Section~\ref{sec:splitting}.  This means that the cyclotomic field and
the fiducial vector decouple completely.

We will now explore some consequences of that assumption. The
symplectic group $SL(2,\ZZ_d)$ can be extended to the general linear
group $GL(2,\ZZ_d)$ in a natural way. Consider the Galois
transformation that takes the root of unity
$\zeta_{2d}=e^{\frac{i\pi}{d}}$ to $\zeta_{2d}^k$.  We can represent
this by the $GL$ matrix
\begin{equation}
  H = \left( \begin{array}{cc}
    1 & 0 \\
    0 & k
  \end{array}
  \right) . \label{cycloH}
\end{equation}
This is a natural extension of the definition \eqref{J} for complex
conjugation \cite{AFMY}. It has led to a conjecture, well supported by
evidence, stating that the Galois group that acts on the SIC overlaps
is isomorphic to the quotient of two matrix groups \cite{AYAZ, AFMY,
  Kopp, ACFW, Salamon}. We allow us to state it in the somewhat
simplified form that it assumes in our case (where we deal with ray
class SICs and, since the dimension is a prime, with
centred\cite{AFMY} fiducial vectors of type $F_z$). Then the statement
is that
\begin{equation}
  \mbox{Gal}_{K^{p\jj}/H_K} \cong C/S ,\label{eq:Galois_mat}
\end{equation}
where $K^{p\jj}$ is the field generated by the SIC overlaps, $H_K$ is the Hilbert class field,
$S$ is the symplectic stabilizer of the centred fiducial, and $C$ is
the centralizer of $S$ within $GL(2,\ZZ_d)$. The `overlap field'
$K^{p\jj}$ was denoted by $\mathfrak{R}_1$ when it was first
introduced \cite{AFMY}.  The Galois transformations of the overlaps
take the form
\begin{equation}
  \sigma (\langle \Psi|D_{\bf p}|\Psi \rangle ) = \langle \Psi |D_{G{\bf p}}|\Psi \rangle , \label{C/S}
\end{equation}
where $G \in C/S$ (i.\,e., $G$ is a representative of a coset of $S$
in $C$).

There is another way of looking at this. When $d$ is a prime equal to
$2$ modulo $3$ the non-trivial overlaps form a single orbit under this
Galois group. This is a key ingredient in the work published by Kopp
\cite{Kopp}.  A key ingredient in our work is that for the dimensions
we consider there exists a fiducial for which the stabilizer $S$
consists of diagonal matrices\cite{Marcus}.  It follows that the
centralizer also consists of diagonal symplectic matrices.  This has
two important consequences.  In the first place it means the set of
overlaps of the form $\langle \Psi|X^j|\Psi \rangle$ form an orbit by
themselves. In the second place it means that the Galois group
permutes the components of the fiducial vector, and in the prime
dimensional case considered in this paper it permutes them
cyclically. The fact that the matrix elements of the displacement
operator $X$ are natural numbers (either $1$ or $0$) means that the
overlaps $\langle \Psi|X^j|\Psi \rangle$ are in the same field as the
fiducial components.  The fact that the Galois group cycles through
these comparatively small sets of numbers means that this field is of
much lower degree than the full ray class field---which takes us back
to the point with which we began this section.

Let us change the subject slightly. If we are given a vector in a definite number
field it is possible to check whether it is indeed a SIC fiducial without venturing
outside that number field. This is the content of an interesting observation
\cite{Mahdad, ADF, FHS} that we repeat here because it is not widely known. Let the
components of the vector $\Psi$ be denoted by $\{ a_r\}_{r=0}^{d-1}$. A short calculation
using the representation of the Weyl--Heisenberg group given in
Appendix \ref{sec:AppB} suffices to verify that
\begin{equation}
  G(i,k) \equiv \frac{1}{d}\sum_j\omega^{kj}|\langle \Psi |X^iZ^j|\Psi \rangle |^2
  = \sum_{r=0}^{d-1} \bar{a}_r\bar{a}_{r+k-i} a_{r-i}a_{r+k} ,
\end{equation}
where $X$ and $Z$ are the group generators. Thus, if
\begin{equation}
  G(i,k) = \sum_{r=0}^{d-1} \bar{a}_{r+i}\bar{a}_{r+k} a_{r}a_{r+i + k} , \label{eq:Gik1}
\end{equation}
then it follows from the invertibility of the discrete Fourier transform that
\begin{equation}
  |\langle \Psi |X^iZ^j|\Psi \rangle |^2 = \frac{d\delta_{i,0}\delta_{j,0} + 1}{d+1}
  \quad \Longleftrightarrow \quad
  G(i,k) = \frac{\delta_{i,0} + \delta_{k,0}}{d+1} . \label{eq:SIC_conditions2}
\end{equation}
Every reference to the roots of unity has been made to disappear.

For the form of the fiducial vector that we are using it 
is the case that $\bar{a}_r = a_{-r}$, with
labels counted modulo $d$ (see eq.~\eqref{cc}). Hence the conditions become
\begin{equation}
  G(i,k) = \sum_{r=0}^{d-1} a_{-r-i}a_{-r-k} a_{r}a_{r+i + k} = 0 , \quad 0 < i \leq k < d . \label{Gik}
\end{equation}
This defines an algebraic variety over $\CC$. By changing the
range of the summation index we find that
\begin{equation}
  G(i,k) = G(i,-k) = G(-i,k) = G(-i,-k) .
\end{equation}
Moreover the conditions on $G(0,i)$ and $G(i,0)$ are automatically
obeyed for our almost flat fiducial vectors.
This allows us to decrease the number of equations to be checked. More may be true
\cite{FHS}. Be that as it may, the alternative form \eqref{Gik} of the SIC conditions
offers a significant computational speed-up when checking the SIC property, in particular
for exact solutions.

\section{Our recipe}\label{sec:Markus}
\noindent With this preparation we are ready to introduce our recipe
for how to construct a SIC in a prime dimension of the form $d =
n^2+3$.  As before, $D$ is the square free part of $(d+1)(d-3)$ and
$K=\QQ(\sqrt{D})$. The Hilbert class field over $K$ is denoted by
$H_K$, and its degree over $K$ equals the class number $h=h_K$.  The
degree of the ray class field $K^{\pp\jj}$ over the Hilbert class field
equals $m=(d-1)/3\ell$, and the Galois group $K^{\pp\jj}/H_K$ is cyclic
of order $m$.

Recall that the entries of the fiducial vector are complex numbers
which are, with the exception of $x_0$, proportional to square roots
of Stark units with respect to their complex embeddings.  By the Stark
conjectures, special values of zeta functions provide us a way to
compute real approximations of Stark units, i.\,e, with respect to
their real embeddings.  In order to directly switch between the real
and the complex embeddings, we would need exact, algebraic
expressions.  For this, we would first have to compute the exact
minimal polynomial of the Stark units to define the corresponding
number field.  Then we would have to compute the roots of the defining
polynomial in that number field.  While there is no theoretical
obstacle, these calculations are only feasible for small examples such
as those in Section \ref{sec:IB3}.  When the dimension and the degree
of the ray class fields grow, those direct calculations quickly become
infeasible.  Instead, we are using a combination of numerical and
algebraic techniques that allow us to carry out the calculations for
much larger cases.  Additional refinements might allow to push the
computational limits even further.

In all numerical steps, we fix the embedding $\jj\colon K\rightarrow
\RR$ with $\jj(\sqrt{D})>0$.
Our recipe to compute a fiducial vector in prime dimension $d=n^2+3$
{may then be} summarized as follows:
\begin{enumerate}
  \item compute the sequence of numerical Stark units and their exact
    minimal polynomial $p_1(t)\in K[t]$
  \item compute the exact Galois polynomial $g_1(t)\in K[t]$ corresponding
    to an automorphism $\sigma_m$ of order $m$ that fixes the Hilbert
    class field
  \item apply the automorphism $\tau\colon\sqrt{D}\mapsto-\sqrt{D}$
    to obtain the polynomials $p_2(t)=p_1(t)^\tau$ and
    $g_2(t)=g_1(t)^\tau$
  \item {factorise} the polynomial $p_2(t)$ over the Hilbert class field
    and pick a factor $p_3(t)$
  \item compute numerical approximations $\tilde{y}_j$ of the roots {$y_j$} of
    $p_3(t)$ and order them using {repeated iterations of} the Galois polynomial $g_2(t)$
  \item {factorise} the polynomial $p_3(t^2/x_0)$ over the Hilbert class
    field and pick a factor $p_4(t)$
  \item compute the square roots $z_j=\pm\sqrt{x_0 y_j}$ and choose the
    sign such that $p_4(z_j)=0$
  \item search for a primitive element $\theta$ of $\ZZ_d$ together
   with a global sign for the square roots such
    that eq.~\eqref{primetikett} yields a fiducial vector
  \item compute the exact Galois polynomial $g_4(t)\in H_K[t]$ from
    the numerical square roots $z_j$ and compute an exact fiducial
    vector over the field
    $\LL\cong H_K[t]/(p_4(t))$\label{step:exact_fiducial}
  \item check the SIC-POVM conditions \eqref{eq:SIC_conditions2}.
\end{enumerate}
The roots $y_j$ are in fact the sought-after Stark phase units, $y_j =
e^{i\vartheta_j}$, but we use a different notation because here we
treat them strictly as algebraic numbers. Clearly, step
\eqref{step:exact_fiducial} can be omitted if one is only interested
in a numerical solution.

In the following, we will discuss the steps of this recipe in some
detail.

\subsection{Numerical Stark units and their exact minimal polynomial}

\noindent The first and currently most time-consuming step in our
recipe is the numerical calculation of the Stark units to sufficient
precision.  As in Section \ref{sec:splitting}, let $K^{\pp\jj}/K$
denote the ray class field over $K=\QQ(\sqrt{D})$ and
$G=\Gal{K^{\pp\jj}}{K}$. For every automorphism $\sigma\in G$, the
corresponding Stark unit \eqref{eq:Stark_units} is given by
\begin{equation}
  \epsilon_\sigma=\exp(\delta'(0,\sigma)),
\end{equation}
where $\delta'(s,\sigma)$ is the first derivative with respect to $s$
of the difference of partial zeta functions
\begin{equation}
  \delta(s,\sigma)=\zeta(s,\sigma)-\zeta(s,\sigma_T\sigma)
  \label{eq:zeta_diff}
\end{equation}
In order to compute the values $\delta'(0,\sigma)$, we are using
built-in functions {in} Magma \cite{Magma} (or PARI/GP \cite{PARI2})
for Hecke $L$-functions (see also Definition 3.6 in
Ref.~\citenum{Kopp}).  The partial zeta functions $\delta(s,\sigma)$
are obtained via finite Fourier transformation with respect to the
abelian group $G$ (i.\,e., using the orthogonal characters of $G$)
from the Hecke $L$-functions.  There is a sign symmetry of the partial
zeta function with respect of the action of $\sigma_T$ which implies
that half of the first derivatives of the $L$-functions vanish.
Moreover, the fact that the derivatives of the zeta function are real
implies that the derivatives of the $L$-functions come in complex
conjugate pairs (or have real values). Using those symmetries, it
suffices to compute only about one quarter of the derivatives of the
$L$-functions.  We have implemented functions in Magma that make use
of those symmetries.  First, we compute the Stark units with, say,
$20$ digits {of precision, allowing us} to identify the complex
conjugate pairs.  Zeros of the derivative of the $L$-functions can be
easily {calculated} from the corresponding Hecke character.

The numerical approximation of the minimal polynomial of the Stark
units is given as
\begin{equation}
  \tilde{p}_1(t)=\prod_{\sigma\in G}(t-\tilde{\epsilon}_\sigma)=\sum_{i=0}^{|G|} \tilde{c}_i t^i,\label{eq:minpoly}
\end{equation}
where the tilde indicates numerical approximation.  The coefficients
$c_i$ of the exact minimal polynomial are integers in the field
$K=\QQ(\sqrt{D})$. They can be expressed as $c_i = c_{i,0}+c_{i,1}
u_K$ with $c_{i,j}\in\ZZ$, where $u_K$ is a fundamental unit of $K$.
In order to obtain the coefficients $c_{i,j}$, we apply an integer
relation algorithm to $(1,u_K,\tilde{c}_i)$.  More precisely, we fix
an embedding $\jj\colon K\rightarrow \RR$ and consider
$(1,\mathfrak{j}(u_K),\tilde{c}_i)$.  Note that \emph{a priori}, we do
not know which precision of the numerical Stark units is sufficient to
find the correct coefficients $c_{i,j}$.  It turns out that the
absolute error of the coefficients $\tilde{c}_i$ is approximately
bounded by the absolute error of the numerical Stark units
$\tilde{\epsilon}_\sigma$ multiplied by the maximal absolute value of
the coefficients $\tilde{c}_i$ (see Theorem 2.12 in
Ref.~\citenum{ReIp11}). 

Hence the required precision appears to be at
least twice the size of the largest coefficient, which is the height
of the polynomial.  Moreover, as we are looking for an integer
relation with three components, one would expect that the required
precision {would have} to be tripled.  Initially, we calculated the
numerical Stark units with higher precision and checked whether the
integer relations were stable when reducing the precision.  Luckily,
we {observed} that a precision {requirement} of a bit more than twice the size of the
largest coefficient $\tilde{c}_i$ was sufficient for our examples.

This aspect is important, as for high precision the run-time of the
algorithm we use to compute the numerical Stark units appears to scale
like $(\text{\#digits})^{3.3}$.  Hence doubling the precision
increases the run-time by about a factor of ten.  In Table
\ref{tab:timings_$L$-function} we indicate the CPU time for computing
the numerical derivatives of the $L$-functions.  For larger examples,
we used separate processes for the individual values.  This
allowed us, for example, to compute the numerical Stark units for
$d=2707$ in about one calendar month instead of about $2.5$ years.
Eventually, we used the implementation of $L$-functions in
PARI/GP \cite{PARI2}, which appears to be more efficient than our
implementation in Magma and {which moreover} allows the use of multiple CPU cores
on {an} HPC cluster.

\begin{table}[hbt]
  \caption{Run-time to compute the derivatives of the $L$-functions at
    $0$. Note that we have used different computers and different
    versions of our Magma programme so that the values are just
    indicators for the complexity. For the last three dimensions we have
    additionally used PARI/GP.}
  \medskip
  
  \def\arraystretch{1.1}\tabcolsep2\tabcolsep
  \begin{tabular}{|r|rr|rr|r|r@{ }l|}
    \hline
    \multicolumn{1}{|c|}{$d$} &
    \multicolumn{2}{c|}{$\deg(K^{\pp\jj}/K)$} &
    \multicolumn{2}{c|}{log height} &
    \multicolumn{1}{c|}{precision} &
    \multicolumn{2}{c|}{CPU time}\\
    \hline
    &&&&&&&\\[-3ex]
    \hline
      $487$ &\kern2em $324$ &&\kern1.8em  $424$ && $1000$ digits & $251$ & hours\\
      $787$ &  $262$ &&  $299$ && $1000$ digits & $118$ & hours\\
     $2707$ &  $902$ && $1861$ && $3800$ digits & $900$ & days\\
     $4099$ & $1366$ &&  $974$ && $2000$ digits & $170$ & days\\
     $5779$ &  $214$ &&  $127$ &&  $300$ digits &  $18$ & min\\[0.5ex]
     \hline
     $1447$ &  $964$ && $2158$ && $4600$ digits & $111$ & days\\
     $2503$ & $3336$ && $6464$ && $13100$ digits & $60.5$ & years\\
    $19603$ & $2178$ && $1754$ && $4000$ digits &  $82$ & days\\
    \hline
  \end{tabular}
  \label{tab:timings_$L$-function}
\end{table}

\subsection{Exact Galois polynomial of the Stark units}

\noindent The numerical Stark units $\tilde{\epsilon}_\sigma$ are
indexed by the elements of the Galois group $G$ of $K^{\pp\jj}/K$. Hence we
know how the Galois group permutes them.  Moreover, our implementation
in Magma allows us to identify an automorphism $\sigma_m$ of order $m$
that fixes the Hilbert class field. We choose an indexing of the $hm$
Stark units $\epsilon_\sigma=\epsilon_j$ such that the action of
$\sigma_m$ on them is given by the permutation
\begin{equation}
  \pi_m =
  \bigl(0\;1\,\ldots\,m-1\bigr)
  \bigl(m\;m+1\,\ldots\,2m-1\bigr)\ldots
  \bigl(m(h-1)\;m(h-1)+1\,\ldots\,mh-1\bigr)
\end{equation}
consisting of $h$ cycles of length $m$ each.  With this indexing,
\begin{equation}
  \epsilon_j^{\sigma_m} = \epsilon_{j^{\pi_m}}.\label{eq:action_Stark}
\end{equation}
The mapping of \eqref{eq:action_Stark} can also be realized as a
polynomial function, i.\,e., we are looking for a polynomial $g_1(t)$
such that
\begin{equation}
  g_1(\epsilon_j) =\epsilon_{j^{\pi_m}}
    \qquad\text{for $j=0,\ldots,mh-1$.}\label{eq:interpolation}
\end{equation}
We refer to the unique such polynomial $g_1(t)$ of degree at most $mh-1$ as
Galois polynomial.  The conditions \eqref{eq:interpolation} are
invariant under the action of the whole Galois group $G$.  Therefore,
$g_1(t)$ is a polynomial with coefficients in $K$, which is fixed by
$G$.  Note that a similar approach is discussed in Ref.~\citenum{Kopp}
where the coefficients of the Galois polynomial are computed by
solving a linear system with a Vandermonde matrix.  Using standard
algorithms, this has cubic complexity in the number $mh$ of Stark
units. We compute a numerical approximation $\tilde{g_1}(t)$ via
polynomial interpolation, which has quadratic complexity in the
implementation in Magma. There are more sophisticated
methods \cite{Kung} using fast Fourier transformations with complexity
$O(mh \log^2(mh))$ which we have not implemented.

The coefficients of the exact polynomial $g_1(t)\in K[t]$ are again
determined via an integer relation algorithm.  Unlike the situation of
the exact minimal polynomial of the numerical Stark units, we do not
yet have a good heuristic for the required precision.  This is not
necessarily a severe problem, as we can increase the precision of the
numerical Stark units using Newton's method, which has quadratic
convergence.  For Newton's method, we use numerical approximations of
the exact polynomial $p_1(t)$ and its derivative to arbitrary
precision.  However, it turns out that, in general, the coefficients
of $g_1(t)$ have very large numerators/denominators.  For $d=2707$, we
used a precision of about two million digits.  Computing
$\tilde{g}_1(t)$ of degree $901$ using numerical polynomial
interpolation took about $18.5$ hours.  Determining the $902$ exact
coefficients took about $420$ CPU hours in total, which was done in
parallel on multiple CPU cores. For dimension $d=19603$, the
coefficients of the exact Galois polynomial have denominators that are
integers with more than one million digits.

In the end this step has turned out to be a potential computational
bottleneck.  For dimension $d=2503$, we have been able to compute the
(conjectural) exact minimal polynomial $p_1(t)$ of degree $3336$ using
a precision of $13100$ digits.  This took in total about $60.5$ CPU
years on the HPC cluster.  Using a precision of ten million digits
appears to be insufficient to obtain the exact Galois polynomial
$p_1(t)$. Since the initial version of this manuscript, however, we
have developed an alternative approach that allowed us to compute a
numerical fiducial vector for dimension $2503$.  We will discuss this
approach elsewhere when addressing composite dimensions $n^2+3$.

\subsection{Flipping the sign}
\noindent The calculation of the exact minimal polynomial $p_1(t)$ of
the Stark units starts with their real approximations, corresponding
to the real embeddings of the ray class field $K^{\pp\jj}$.  For the
fiducial vector we need the complex Stark phase units, i.\,e., their
values on the unit circle in the complex embeddings.  The embeddings
of $K^{\pp\jj}$ are defined by the roots of the absolute minimal
polynomial of the Stark units over $\QQ$, which lie in the normal
closure of $K^{\pp\jj}$.  Instead of computing the exact roots in the
extension field, we apply the automorphism
$\tau\colon\sqrt{D}\mapsto-\sqrt{D}$ to the coefficients of the exact
minimal polynomial $p_1(t)$ to obtain the polynomial
$p_2(t)=p_1(t)^\tau$. Note that the polynomial $p_1(t)p_2(t)\in\QQ[t]$
is the absolute minimal polynomial of the Stark units over $\QQ$.
When computing the coefficients of the exact polynomial $p_1(t)$, we
have used the embedding $\jj$.  Now, with respect to the same
embedding, the roots of the polynomial $p_2(t)$ will be the desired
complex Stark phase units.  The exact Galois polynomial
$g_2(t)=g_1(t)^\tau$ has the property that it permutes the Stark phase
units in the same way as the polynomial $g_1(t)$ permutes the real
Stark units.

\subsection{Minimal polynomial over the Hilbert class field}

\noindent When the class number $h$ is larger than one, the Hilbert
class field is a proper extension of degree $h$ of $K$. Then the exact
minimal polynomial $p_2(t)\in K[t]$ factorises as
\begin{equation}
  p_2(t)=p_2^{(1)}(t)\cdot p_2^{(2)}(t)\cdot\ldots\cdot p_2^{(h)}(t),
\end{equation}
with $h$ factors $p_2^{(j)}\in H_K[t]$ of degree $m$ each. In our
examples, the class number is small, and hence exact factorisation of
$p_2(t)$ over the Hilbert class field does not take much time.  We
pick any of the factors, e.g., $p_3(t)=p_2^{(1)}(t)$. Other factors
yield fiducial vectors that are related by a Galois automorphism
fixing $K$, but acting non-trivially on $H_K$.  When $h=1$, we have
$p_3(t)=p_2(t)$.

\subsection{Numerical Stark phase units}
\noindent
The next step is to compute $m$ numerical Stark phase units
$\tilde{y}_j$ as roots of the exact polynomial $p_3(t)$ with respect
to the fixed embedding $\jj$.  At the same time, we want to order them
according to the action of the Galois automorphism $\sigma_m$.  For
this, we compute one complex root $\tilde{y}_0$ to moderate precision,
which again can be increased using Newton's method. The other roots
are computed using the exact Galois polynomial $g_2(t)$ via
\begin{equation}
  \tilde{y}_{j+1}=g_2(\tilde{y}_j).\label{eq:galois_cycle}
\end{equation}
To compensate for precision loss in \eqref{eq:galois_cycle}, we apply
{Newton's} method to $\tilde{y}_{j+1}$ before computing
$\tilde{y}_{j+2}$. Note that at this point, we can check whether
$\tilde{y}_j$ is an approximate root of $p_3(t)$ and whether it lies
on the unit circle.  
Should this not be the case, then either the precision in one of
the previous steps must have been too low; or else one of the conjectures on which our
recipe is based must be false.

Note that it does not matter which of the roots gets the label $y_0$,
as long as the labelling is consistent with the cyclic permutation via
the Galois automorphism $\sigma_m$. This is ensured by using the
Galois polynomial $g_2(t)$ in \eqref{eq:galois_cycle} to order the
roots.  Different choices for $y_0$ lead to fiducial vectors that are
related by Clifford transformations.

\subsection{Minimal polynomial of the square roots}

\noindent
Recall from \eqref{eq:unnormfid} that the components of our
non-normalized fiducial vector are $x_0 = -2-\sqrt{d+1}$ or of the
form $z_j=\sqrt{x_0 y_j}$. When $y_j$ is a root of $p_3(t)$, then
$z_j$ is a root of the polynomial $p_3(t^2/x_0)$. For all our
examples, it turns out that with the particular choice of $x_0$, the
latter polynomial factorises in $H_K[t]$ as
\begin{equation}
  x_0^m p_3(t^2/x_0) = p_4(t)\cdot p_4(-t).\label{eq:square_substitution}
\end{equation}
We pick $p_4(t)$ as the minimal polynomial of the square roots over
the Hilbert class field $H_k$.  The factorisation
\eqref{eq:square_substitution} supports Conjecture~\ref{conj4}, as the
roots $z_j=\sqrt{x_0 y_j}$ of $p_4(t)$ lie in the same field as the
roots $y_j$ of $p_3(t)$.  This also implies that the action of the
Galois automorphism $\sigma_m$ on the roots $z_j$ is cyclic.

\subsection{Fixing the signs}

\noindent
Instead of directly computing the roots of the polynomial $p_4(t)$, we
compute the numerical square roots
$\tilde{z}_j=\sqrt{x_0\tilde{y}_j}$.  This has the advantage that we
automatically get the correct indices corresponding to the cyclic
action of $\sigma_m$.  There is, however, an ambiguity concerning the
sign of the square root. This can be fixed by the observation that
each of the square roots is a factor of only one of the factors in
\eqref{eq:square_substitution}. So we computed
$|p_4(\pm\sqrt{x_0\tilde{y}_j})|$ for both possibilities and chose the
sign which gave the smaller absolute value.  It turned out that a
moderate precision was sufficient for all examples.  As the choice
between $p_4(t)$ and $p_4(-t)$ was arbitrary, we are permitted to flip the sign of
all roots $\tilde{z}_j$. This yields two alternatives:
$\mathbf{z}=(\tilde{z}_0,\dots,\tilde{z}_{m-1})$ and $-\mathbf{z}$.

\subsection{Combinatorial search}
\noindent
According to \eqref{eq:unnormfid}, the first component of the
non-normal\-ized fiducial vector is set to $x_0=-2-\sqrt{d+1}$, while
by \eqref{primetikett}, the other components are computed as
$\hat{\Psi}_{\theta^j \bmod d} = z_{j \bmod m}$. Here $\theta$ is a
primitive element of $\ZZ_d$, i.\,e., it generates the group
$\ZZ_d^\times$ of invertible elements in $\ZZ_d$. When $d$ is prime,
$\ZZ_d^\times$ is cyclic of order $d-1$.  As we do not know the exact
correspondence \eqref{eq:Galois_mat} between elements of the Galois
group and matrices in $GL(2,\ZZ_d)$, we do not know which primitive
element $\theta$ is the correct one.  Hence we try all possibilities,
together with the choice between $\mathbf{z}$ and $-\mathbf{z}$.
Instead of checking all SIC-POVM conditions, it appears to be
sufficient to compute only $|\langle\Psi|X|\Psi\rangle|^2$ and check
whether it is close to $1/(d+1)$. There are $\Phi(d-1)<d/2$ candidates
for $\theta$ (as $d$ is odd) and two choices for the sign, so that we
have to test no more than $d$ candidates for the fiducial vector.

\subsection{Computing an exact fiducial vector}

\noindent
It has to be noted that all exact calculations in the previous steps
of our recipe are carried out in the Hilbert class field{, which in our cases all have sufficiently small degree}. 
We do not have to compute exact roots of the polynomials or
explicit Galois groups of number fields.  Nonetheless, we can obtain
an exact fiducial vector. The first component is $-2-\sqrt{d+1}\in
K$. The other components are roots of the exact polynomial
$p_4(t)$. This allows us to define the number field
\begin{equation}
  \LL=H_K(\gamma), \qquad\text{where $p_4(\gamma)=0$.}\label{eq:exact_field}
\end{equation}
In all examples, the field $\LL$ is the same as the field
${(K^{\pp\jj})}^\tau$ which is defined via $p_2(t)=p_1(t)^\tau$, where
$p_1(t)$ is the minimal polynomial of the Stark units over $K$.  The
mapping
\begin{equation}
  \iota\colon \LL \hookrightarrow \CC, \gamma\mapsto \tilde{z}_0\label{eq:embedding_L}
\end{equation}
extends the chosen embedding of the Hilbert class field $H_K$ to an
embedding of $\LL$.

Similar to our treatment of the Galois polynomial $g_1(t)$ for the
real Stark units, we compute a Galois polynomial $g_4(t)\in H_K[t]$
from the ordered numerical roots of $p_4(t)\in H_K[t]$.  In contrast
to the real Stark units, the numerical values are now complex numbers
so that, for example, the multiplication operation is about three
times slower than that of real numbers.  As the coefficients of
$g_4(t)$ are in the Hilbert class field, we have to find integer
relations among the $2h$ elements of a basis of $H_K/\QQ$, and the
approximation of the coefficient.  This requires that we increase the
precision; but at the same time, there are only $m$ {(instead of
  $mh$)} coefficients.

Setting $z_0=\gamma$, we use the exact Galois polynomial $g_4(t)$ to
compute the exact values
\begin{equation}
  z_{j+1}=g_4(z_j)
\end{equation}
which are components of the fiducial vector.  This yields all exact
roots of $p_4(t)$ without directly factorising the polynomial over the
number field $\LL$ of absolute degree $2mh$ over $\QQ$.  At the same
time, the Galois automorphism $\sigma_m$ is fully specified by
\begin{equation}
  \sigma_m\colon \gamma\mapsto g_4(\gamma).\label{eq:exact_galois}
\end{equation}
As the Galois group of the extension $\LL$ of the Hilbert class field
$H_K$ is cyclic (see also Lemma~\ref{twoprim}) and $H_K$ is totally
real, $\sigma_m^{m/2}$ is the unique automorphism of order $2$ that
corresponds to complex conjugation with respect to the embedding
\eqref{eq:embedding_L}.

When the ray class field has subfields of small degree, one might be
able to obtain a somewhat nicer representation of the field. However,
note that the {degrees} of the fields in Table \ref{tab:prim} have
relatively large prime power factors. When working with the field $\LL$
defined in \eqref{eq:exact_field} via $p_4(t)$ and the Galois
automorphism defined in \eqref{eq:exact_galois}, the arithmetic gets
very slow {as} the degree of the field increases.  That is the reason
why we have not yet computed exact solutions for dimensions $d=1447$,
$2707$, $4099$ {or} $19603$. For dimension $5779$, computing the exact
fiducial vector based on \eqref{eq:exact_galois} takes about two hours.

\subsection{Verifying the solution}
\noindent
Rewriting the SIC-POVM conditions \eqref{eq:sicdefeq} in the form
\eqref{eq:SIC_conditions2} avoids complex $d$-th roots of unity which
would require an additional field extension of degree $\varphi(d)=d-1$
when $d$ is prime. Furthermore, we do not have to normalize the
fiducial vector, which in general requires an additional field
extension of degree two for the square root of the squared norm
$|\langle\hat\Psi|\hat\Psi\rangle|^2$ of the un-normalized
vector. Instead, we work with the un-normalized vector when computing
the sum for $G(i,k)$ in \eqref{eq:Gik1} and divide by
$|\langle\hat\Psi|\hat\Psi\rangle|^2$.  When checking \eqref{Gik} to be zero,
we do not need any normalization.

Nonetheless, full verification of the exact fiducial vector is only
feasible for moderate dimensions, as the number of arithmetic
operations scales with the cube of the dimension.  As already mentioned, the
complexity of the arithmetic in the ray class field $K^{\pp\jj}$
depends on the degree and the representation of the number field. When
the ray class group decomposes into small cyclic factors of prime
power order, one can find somewhat nicer representations of the field
as a compositum of extensions of the Hilbert class field.

For all dimensions for which we have computed an exact solution, we
have computed at least a few of the values $G(i,k)$ in \eqref{eq:Gik1}
or \eqref{Gik} using exact arithmetic.  Timings for {a} single {such} evaluation
are listed in Table \ref{tab:timings_verification}.  When full exact
verification was not feasible, we carried out a numerical check
to the precision given in Table \ref{tab:timings_verification}.

\begin{table}[hbt]
  \caption{Run-time for the verification of the solutions. We list the
    degree of the ray class field $K^{\pp\jj}$ over $K$ as a product
    of prime powers corresponding to the structure of the ray class
    group. The additional factor $2$ for $d=103,487,1447$ is due {to
      the class number $h$ being~$2$.} For $d=2503$, the class number
    is $h=4$. The column $G(i,k)$ provides information on the time
    required to compute one value $G(i,k)$ using exact arithmetic.}
  \medskip
  
  \def\arraystretch{1.1}\tabcolsep2\tabcolsep
  \begin{tabular}{|r|r|r@{ }l|r@{ }l|c|}
    \hline
    \multicolumn{1}{|c|}{$d$} &
    \multicolumn{1}{c|}{$\deg(K^{\pp\jj}/K)$}&
    \multicolumn{2}{c|}{precision} &
    \multicolumn{2}{c|}{CPU time} &
    $G(i,k)$\\
    \hline
     &&&&&&\\[-3ex]
    \hline
     $103$ & $2\times 2\times 17$  &\multicolumn{2}{c|}{exact}  & $440$ & s      & $1.3$ s\\
     $199$ & $2\times 11$          &\multicolumn{2}{c|}{exact}  & $310$ & s      & $0.3$ s\\
     $487$ & $2\times 2\times 3^4$ &\multicolumn{2}{c|}{exact}  & $31$ & days    & $315$ s\\
     $787$ & $2\times 131$           & $10000$ & digits &  $3$ & hours & $65$ min \\
    $1447$ & $2\times 2\times 241$   & $10000$ & digits & $17.0$ & hours & \\
    $2503$ & $4\times2\times3\times139$   &  $10000$ & digits & $87.8$ & hours & \\
    $2707$ & $2\times 11\times 41$   &  $2000$ & digits & $11.2$ & hours & \\
    $4099$ & $2\times 683$           &  $2000$ & digits & $36.5$ & hours & \\
    $5779$ & $2\times 107$           &  $2000$ & digits & $100$  & hours & $88$  min \\
   $19603$ & $2\times 3^2\times 11^2$ & $1000$ & digits & $1367$ & days & \\
    \hline
  \end{tabular}
  \label{tab:timings_verification}
\end{table}

\section{Two detailed examples}\label{sec:IB3}
\noindent 
We will take a closer look at dimensions $d = 7$ and $d =199$. For the
first example $\ell = 1$, for the second example $\ell = 3$, and they
have the same quadratic base field $K = \QQ(\sqrt{2})$ with class
number $h = 1$.  Hence the degree of the small ray class field $K^\mm$
over $K$ is
\begin{equation}
  m = \frac{h(d-1)}{3\ell} =
  \begin{cases}
     2 & \text{for $d = 7$,} \\
    22 & \text{for $d = 199$.}
  \end{cases}
\end{equation}
They are small enough that we can present the calculations in full
detail for~$d=7$ and considerable detail for~$d=199$. No serious
computational difficulties arise. On the other hand all the
complications in principle show up for $d = 199$; except that the
class number $h = 1$.

We begin with $d = 7$. Our starting point is a set of numerical Stark
units, ordered by the ray class group as described in Section
\ref{sec:Stark_units}.  The numerical precision must be made high enough so
that we can determine their exact minimal polynomial, which we call
$p_1(t)$. Its coefficients are integers in the quadratic base field
$K$, and it has degree $m$. For $d = 7$
\begin{equation}
  p_1(t) = t^2 -(1+\sqrt{2})t + 1 .
\end{equation}
We obtain a polynomial $p_2(t)$ whose roots are the Stark phase units
$e^{i\vartheta_j}$ by changing $\sqrt{2}$ to $-\sqrt{2}$. For $d = 7$
\begin{equation}
  p_2(t) = t^2 -(1-\sqrt{2})t + 1 .
\end{equation}
In this example (and the next) the Hilbert class field is identical to
the quadratic field, so step (4) of the recipe is trivial, $p_3(t) =
p_2(t).$ For the fiducial vector we need to take the square roots of
the Stark phase units, but doing so directly would take us out of the
ray class field. To remedy this we introduce the algebraic integer
$x_0 = - 2 - 2\sqrt{d+1}$, and we obtain the minimal polynomial of the
scaled Stark phase units.
For $d = 7$ it reads
\begin{equation}
  x_0^{2}p_3( t/x_0) = t^2 - 2t + 12 + 8\sqrt{2} .
\end{equation}
The roots of this polynomial are of the form $x_0e^{i\vartheta_j}$. We
need the square roots of these numbers. For this purpose we make the
substitution $t \rightarrow t^2$. If the square roots of the roots of
$p_3(t)$ do lie in the small ray class field as predicted by
Conjecture~\ref{conj4}, the resulting polynomial
will factorise over the base field, and indeed it does. For $d = 7$
\begin{equation}
  x_0^2p_3(t^2/x_0) = \Bigl( t^2 + (2+\sqrt{2})t + 2+ 2\sqrt{2}\Bigr)
  \Bigl(t^2 - (2+\sqrt{2})t + 2 + 2\sqrt{2} \Bigr) . \label{eq67}
\end{equation}
To proceed we just pick one of the factors, say the first, and call it
$p_4(t)$ (the other factor is $p_4(-t)$). Our sequence of polynomials
has now reached its end. For $d = 7$
\begin{equation}
  p_4(t) = t^2 + (2+\sqrt{2})t + 2+ 2\sqrt{2} .
\end{equation}
The roots of $p_4(t)$ are of the form $z_j =
\sqrt{x_0e^{i\vartheta_j}}$. The only sign ambiguity which is left is
the overall sign ambiguity that arises when we choose a factor in
eq.~\eqref{eq67}.

Because $p_4(t)$ has only two roots we can write down the SIC fiducial
vector directly. According to eq.~\eqref{primetikett} its
components are distributed using a generator $\theta$ of
$\ZZ_d^\times$, the multiplicative group of integers modulo $d$. For
$d = 7$ there are two possible choices, $\theta = 3$ or $\theta=5$,
but due to the Zauner symmetry they lead to the same vector. We pick
one of the roots of $p_4(t)$ and call it $z_0$. The two possible
choices lead to Clifford-equivalent SICs. Following
Section~\ref{sec:IB1} we define
\begin{alignat}{7}
\hat{\Psi}_{3^0 \bmod 7} = \hat{\Psi}_{3^2 \bmod 7} = \hat{\Psi}_{3^4 \bmod 7} &{}= z_0 ,\\
\hat{\Psi}_{3^1 \bmod 7} = \hat{\Psi}_{3^3 \bmod 7} = \hat{\Psi}_{3^5 \bmod 7} &{}= z_1,
\end{alignat}
where Zauner symmetry was used. The fiducial vector is then
\begin{equation}
  \Psi = N(\pm x_0, z_0, z_0, z_1, z_0, z_1, z_1)^{\rm T} . \label{d7}
\end{equation}
The sign ambiguity is the result of our choice of a factor of
$p_3(t)$. (Instead of choosing the overall sign of the $z_i$, we
choose the sign of the first component $\pm x_0$.) We resolve it by
calculating $\langle \Psi|X|\Psi\rangle$ numerically, and find that
the choice that works in this case is the plus sign. (We have not yet
found a way around this final somewhat inelegant step.) Because $z_0$
is the complex conjugate of $z_1$ it is straightforward to check that
we now have a SIC.

The case $d = 7$ (and $d = 19$) is exceptional because the Galois
group is cyclic of order two, so that no ordering problem
arises. Although we do not need them we can easily calculate the
Galois polynomials. In particular, by solving the quadratic equation
$p_4(t) = 0$ and denoting the roots by $z_0$ and $z_1$ we see that the
polynomial
\begin{equation}
  g_4(t) = 2 + \sqrt{2} - t .
\end{equation}
has the properties that $g_4(z_0) = z_1$ and $g_4(z_1) = z_0$, so this
is the desired Galois polynomial. Incidentally the SIC fiducial
vector \eqref{d7} itself was first derived by hand \cite{Marcus}, by
solving the defining eqs.~\eqref{eq:sicdefeq}.

We turn to $d = 199$. The minimal polynomial of the Stark units,
$p_1(t)$, now has degree $22$, and it can be calculated using Magma in
less than a minute. It is straightforward to go from there to the
polynomial $p_4(t)$ that has the components of the SIC fiducial as
roots. It is
\begin{small}
\begin{alignat}{9}
 p_4&(t)=
      t^{22} + (16-18a)t^{21} + (146 - 54 a) t^{20} - (5848 - 3884 a) t^{19}\nonumber\\
    & - (69328 - 49868 a) t^{18} + (450480 - 320688 a) t^{17} + (9271856 - 6553664 a) t^{16}\nonumber\\
    & - (6276944 -  4448392 a) t^{15} - (511130784 - 361362336 a) t^{14}\nonumber\\
    & - (501114624 - 354768560 a) t^{13} + (12982950368 - 9182919584 a) t^{12}\nonumber\\
    & + (14458284800 - 10213792864 a) t^{11} - (157692490944 - 111463664512 a) t^{10}\nonumber\\
    & - (73845898496 - 52328201280 a) t^9 + (914730317568 - 647071560192 a) t^8\nonumber\\
    & - (136112007424 - 96919409792 a) t^7 - (2444614507008 - 1731374423552 a) t^6\nonumber\\
    & + (1421567308800 - 1043246226432 a) t^5 + (2891624580096 - 1803172507136 a) t^4\nonumber\\
    & - (3497710163968 - 1315439504384 a) t^3 + (1952962604032 +  2282964641792 a) t^2\nonumber\\
    & - (4235310882816 + 5036656351232 a) t +9271967234048 + 7154311792640 a , 
\end{alignat}
\end{small}%
where $a = \sqrt{2}$.  To build the fiducial {vector} we need to order its
roots into a cycle of length $22$. Testing all possibilities is out of
the question, so at this point we need the Galois polynomial
$g_2(t)$. In the recipe that we gave in Section~\ref{sec:Markus} steps
(2), (5), and (9), are concerned with this question. However, for
$d=199$ we can try a shortcut and directly compute the action of the
Galois group for the number field defined by $p_4(t)$ using standard
Magma routines. For the higher dimensions this would take a prohibitive 
amount of time, but in the present case we obtain the answer in a few
minutes. Since the Galois group is cyclic of order $22$ it has $10$
generators of order $22$. We simply pick one of them. Magma then gives
us a polynomial $g_4(t)$ having the properties that
\begin{equation}
  z_{j+1} = g_4(z_j), \quad z_{j+11} = g_4^{[11]}(z_j)=1/z_j, \quad z_{j+22} = g_4^{[22]}(z_j)=z_j, \label{199galois}
\end{equation}
where $g_4^{[k]}(t)=g_4^{[k-1]}(g_4(t))$ denotes the $k$-fold
composition of the polynomial $g_4(t)$ with itself.  The explicit
expression for the polynomial is somewhat unwieldy, and we have
relegated it to Appendix~\ref{sec:AppC}.

We now have to relate this ordering of the roots to the ordering of
the components provided by the Clifford group. The group
$\ZZ_d^\times$ has altogether $\varphi(d-1)=60$ generators $\theta$.
For any choice of generator of the Galois group there will be $6$
choices of $\theta$ that result in a SIC fiducial vector with
$\hat{\Psi}_{\theta^j} = z_j$. There is also a sign ambiguity in
$\hat{\Psi}_0$.  We resolve these ambiguities by a numerical
calculation of a single overlap $\langle \Psi|X|\Psi\rangle$.  When
using $p_4(t)$ and the Galois polynomial $g_4(t)$ one finds that we
have to choose a negative sign for the zeroth component, while the
appropriate choices of $\theta$ are $\theta = 41, 75, 134, 167, 189,
190$. Because of the symmetry these choices result in identical
vectors.

We now have a candidate exact SIC fiducial vector for $d = 199$. In
this case the exact calculation that verifies that we really have a
SIC is quick, as can be seen in Table \ref{tab:timings_verification}.

The computational complexity of some of the steps described in
Section~\ref{sec:Markus} grows with the dimension, but the logic
remains simple.  All our solutions can be described simply in words:
The components of the fiducial vector are square roots of scaled
Stark phase units ordered by the Galois group of the small ray class
field over the Hilbert class field.

\section{Remarks, observations, and extensions}\label{sec:IB4}

\noindent It is time to compare our SIC recipe with other methods that
have been used to construct them.  The first extensive collection of
exact solutions was obtained by solving the defining equations
\eqref{eq:sicdefeq} using Gr\"obner bases \cite{Scott}. After that the
majority of new solutions have been obtained by applying
integer relation algorithms to high precision numerical solutions,
making use of (and supporting) the conjectures about the relevant
number fields \cite{ACFW}. By now exact solutions in more than one hundred
different dimensions have been obtained using this method and
variations\cite{MG, MG2}, including many dimensions of the form
$d=n^2+3$, namely every such dimension with $n \leq 18$ together with
$d= 403$, $844$, and $1299$. But this method encounters two
bottlenecks. One is that it becomes necessary to factorise polynomials of
high degree over number fields of high degree. The other is the very
time consuming search for a numerical solution to start with
\cite{Andrew}. Prior to the work reported here, $d = 1299$ was the
highest dimension in which a SIC was known in exact form
\cite{MG}. The highest dimension for which a numerical search has been
successful\cite{AS} is $d = 2208$.  For $d = 5779$ there has been an
attempt to find a numerical solution with the standard routines. After
$55065$ trials and $17.69$ CPU years the search came up empty-handed
\cite{MG2}. Both bottlenecks are avoided by our recipe, which is why
we have been able to reach higher dimensions than ever before.

We add two observations that lead to open questions.  The first
concerns a theorem that we were unable to prove. Consider the overlap
phases
\begin{equation}
  e^{i\nu_j} = \sqrt{d+1}\langle \Psi|X^j|\Psi\rangle . \label{babies}
\end{equation}
They are phase factors because $|\Psi\rangle$ is a SIC fiducial
vector, and they belong to the small ray class field because the
operator $X$ contains only real integers (in fact, its entries are $1$
or $0$). Because of the symmetry of the fiducial vector the number of
distinct phase factors equals $(d-1)/3\ell$. Hence one may suspect
that these overlap phases are identical to the Stark phase units.  Let
$a_j$ be the $d-1$ non-trivial components of the fiducial vector, as
in eq.~\eqref{Psi1}. 
In our examples we have observed that those overlaps are indeed Stark
phase units, and moreover
\begin{equation}
\sqrt{d+1}\langle \Psi|X^{-2j}|\Psi \rangle = -\frac{a_j^2}{|a_j|^2} .
\end{equation}
A proof that this formula must hold would strengthen the motivation
for our recipe, and might offer insight into the structure underlying
the Stark units.

The second observation concerns the real fiducial vectors that can be
obtained by a discrete Fourier transformation from the ones we have
constructed. By inspection one can see that real fiducial vectors are
largely built from real units \cite{Anti1}, but not in any obvious way
from Stark units. Still it may in the end prove advantageous to work
with real rather than complex fiducials, especially since the ordering
of the real Stark units is directly computationally accessible. 

Finally, let us comment on the cases when $d$ is not a prime
number. When $d$ is odd and the multiplicities of its prime factors
equal one, the Clifford group splits into a direct product, and each
prime factor in $d$ splits into prime ideals. We now have to compute
Stark units for an entire lattice of subfields. For each subfield we
have to relate the action of the Galois group to the permutation
action on the fiducial vector provided by the Clifford group, and the
choices have to be correlated.  There is also an issue of principle
that arises when the dimension is divisible by $3$, which is that we
encounter a Zauner symmetry of a different type (of type $F_a$ in the
terminology of Scott and Grassl \cite{Scott}). We feel that we have
not yet found the best way to deal with these problems, and we have
decided to postpone a full presentation of these results. When the
dimension $d$ is even it will contain a factor $4$, and this factor
has to be given special treatment. Having done this we find that there
is---after a change of basis---again an almost flat fiducial vector
whose components can be obtained from Stark units{;} but this time there
are some significant differences {from} the case of odd $d$.  As it
happens the degrees of the small ray class fields that occur are in
some ways more manageable than those that occur for odd $d$. Again we
postpone a full presentation of these results. We can, however, report
already now that we have used the connection to the Stark units to
write down exact or numerical SIC fiducial vectors in dimensions of
the form $d=n^2+3$ for 
$d=4$,
$7$,
$12$,
$19$,
$28$,
$39$,
$52$,
$67$,
$84$,
$103$,
$124$,
$147$,
$172$,
$199$,
$259$,
$292$,
$327$,
$403$,
$487$,
$579$,
$679$,
$628$,
$787$,
$844$,
$964$,
$1027$,
$1159$,
$1228$,
$1299$,
$1447$,
$1603$,
$1684$,
$1852$,
$1939$,
$2119$,
$2307$,
$2404$,
$2503$,
$2707$,
$3028$,
$3603$,
$4099$,
$4492$,
$4627$,
$5779$,
$6727$,
$7399$,
$19603$, and
$39604$.

\begin{table}[hbt]
\caption{Prime dimensions in the sequence $d = n^2+3$, including all
  primes among the first $100$ entries.  The {columns detail the} class
  number $h$, {the} position $\ell$ in the sequence of dimensions
  \eqref{deez}, the factorised absolute degree of the small ray class
  field, the height (absolute value of the largest coefficient) of the
  minimal polynomial of the Stark units (represented by the $\log_{10}$, when
  known or estimated), and a note telling if the full verification of
  the SIC property is numerical or exact. The dimension is in boldface
  if the relevant SIC has been constructed using our recipe.}
  \medskip

  \def\arraystretch{1.3}\tabcolsep2\tabcolsep
  \begin{tabular}{|r|r|r|r|c|r|c|}
    \hline
    \multicolumn{1}{|c|}{$n$} &
    \multicolumn{1}{c|}{$d$} &
    \multicolumn{1}{c|}{$h$} &
    \multicolumn{1}{c|}{$\ell$} &
    degree &
    \multicolumn{1}{c|}{log height} &
     check\\
     \hline
     &&&&&&\\[-3.7ex]
     \hline
     {\bf 2} & {\bf 7} & 1 & 1 & $2^2$ & 1 & exact \\
     {\bf 4} & {\bf 19} & 1 & 3 & $2^2$ & 1 & exact \\
     {\bf 8} & {\bf 67} & 1 & 1 & $2^2\times 11$ & 10 & exact \\
     {\bf 10} & {\bf 103} & 2 & 1 & $2^3\times 17$ & 44 & exact \\
     {\bf 14} & {\bf 199} & 1 & 3 & $2^2\times 11$ & 11 & exact \\
     {\bf 22} & {\bf 487} & 2 & 1 & $2^2\times 3^4$ & 424 & exact \\
     {\bf 28} & {\bf 787} & 1 & 1 & $2^2\times 131$ & 299 & $10000$ digits \\
     {\bf 38} & {\bf 1447} & 2 & 1 & $2^3\times 241$ & 2158 & $10000$ digits \\
     {\bf 50} & {\bf 2503} & 4 & 1 & $2^4\times 3 \times 139$ & 6464 & $10000$ digits \\
     {\bf 52} & {\bf 2707} & 1 & 1 & $2^2\times 11 \times 41$ & 1861 & $2000$ digits \\
     62 & 3847 & 4 & 1 & $2^4\times 641$ & 11133 &  \\
     {\bf 64} & {\bf 4099} & 1 & 1 & $2^2\times 683$ & $974$ & $2000$ digits \\
     70 & 4903 & 10 & 1 & $2^3\times 5\times 19\times 43$ & 25224 & \\
     74 & 5479 & 4 & 1 & $2^4\times 11 \times 83$ & 19618 & \\
     {\bf 76} & {\bf 5779} & 1 & 9 & $2^2\times 107$ & $127$ & $2000$ digits \\
     92 & 8467 & 2 & 1 & $2^3\times 17 \times 83$ & 12133 &  \\
     94 & 8839 & 8 & 1 & $2^5\times 3\times 491$ & 48203 &  \\
     \hline
     {\bf 140} & {\bf 19603} & 1 & 3 & $2^2\times 3^2\times 11^2$ & $1754$ & $1000$ digits \\
     \hline
  \end{tabular}
  \label{tab:prim}
\end{table}

\section{Conclusions}\label{sec:summary}

\noindent We have proposed a recipe for constructing SICs in prime dimensions
of the form $d = n^2+3$. The key ingredients of the construction are the (conjectural) Stark
units in precisely specified abelian extensions of real quadratic number fields.

The remarkable interplay between the geometry and the
number theory began \cite{AFMY} with the fact that a search for SICs in
complex \emph{dimension} $d$ led directly to a ray class field of
\emph{modulus} $d$.  It is tempting to make the analogy with the
cyclotomic case (a polygon with $d$ sides leads to a ray class field
over $\QQ$ of conductor $d$); but 
as yet we know of no abelian variety whose $d$-torsion points would lead to this structure.

In the prime dimensions of concern in this paper we are able to find `minimal' fiducial vectors 
having components that lie inside a much smaller ray class field, whose conductor has finite part
$\pp$ {being} just one of the factors $\sqrt{p+1}\pm 1$ of $p$ inside the ring
of integers $\ZZ_K$ of the base field $K$.  Among other things, this
facilitates calculations as the degree is significantly smaller than 
that of the full `overlap field' explored
in Refs. \citenum{AFMY,Kopp,Kopp2}.

But possibly the most remarkable feature of this construction is that
we produce a fiducial vector whose non-trivial entries are
\emph{precisely} the set of Galois conjugates of the units described
in the most ambitious form of Stark's original
conjectures\cite{stark4}.  Namely, those entries are the \emph{square
roots} of the units derived directly from Stark's original predicted
$L$-function values, about which Tate~\cite{tatetok} --- combining
this with work of Brumer --- later broadened the conjecture, to
predict that these square roots would still live in an \emph{abelian}
extension of $K$.  It is an additional strange quirk that in every
case we have considered, this top field is identical to that which
would be created by taking the original ray class field of conductor
$\pp$ and adjoining the geometric scaling factor (see
Conjecture~\ref{conj4}).

Once the Stark units have been computed, the outline of the recipe, as
given in Section \ref{sec:intro}, is very simple. A precise version of
the recipe, including all the calculational steps, was given in
Section \ref{sec:Markus}. As can be seen there it leads to very
complex calculations. It should not surprise anyone that working out
explicit formulas in dimension $4099$ (say) leads to complex
calculations, but it is easy to see why it is nevertheless important
to do so.

One reason is that we have at present no suggestions for how to turn
our recipe into a general existence proof for the infinite sequence of
dimensions we consider.  If indeed it requires a version of the
complex Stark conjectures then at present it would seem quite
inaccessible.  However since it is possible to formulate a version of
Zauner's SIC existence conjecture over $p$-adic fields and examine it
prime-by-prime, it seems appropriate to mention the huge recent
progress on $p$-adic versions of the (Brumer-)Stark Conjectures.  It
would take us too far afield to try to incorporate a proper account
into this paper, but broadly speaking there are two strands to this.
See Refs. \citenum{daskak1,daskak2} and the more recent survey article
\citenum{daskakICM} for the celebrated recent proof of the conjectures
and the context into which it fits.  On the other hand, Darmon, Pozzi
and Vonk have made considerable inroads into the same questions by
very different techniques.  The latest in this line of papers may be
found at Ref. \citenum{DPV} which is also placed into the overall
context in the paper \citenum{daskakICM}.  These may conceivably
provide a `tunnel' into our questions via an argument using some sort
of Hasse local-global principle, whereby one may deduce the existence
of a solution to the SIC problem via that of~$p$-adic solutions at
each~$p$.

For now, SIC existence is proved by explicit construction, dimension
by dimension.  It therefore becomes important to guard against
low-dimensional accidents.  The problem is that we do not know what
``low'' means in this context.  We feel, however, that once we get to
the four-digit dimensions we are probably on the safe side. See Table
\ref{tab:prim} for a summary of the results we have achieved in prime
dimensions.  Additional data can be found online \cite{online}.

We note that there is only one case ($d = 487$) in which we have
carried out a complete exact verification of the SIC condition for a
SIC that was not previously known.  In the remaining cases the exact
verification was at best partial, but complete numerical checks with
high precision have been performed in all cases (see Table
\ref{tab:timings_verification}).  However, it seems to us more
important to say that we have found $13$ prime dimensions (and $49$
dimensions altogether, as listed at the end of Section \ref{sec:IB4})
of the form $d = n^2+3$ for which a SIC fiducial vector can be built
from Stark units, and none where the recipe for doing this fails.

\section*{Acknowledgements}
Thanks to John Coates and Henri Darmon for encouraging insights during the course of this work. 
We also thank the anonymous referee for many helpful remarks. 
Access to computing resources of the divisions Marquardt and Russell
of the Max Planck Institute for the Science of Light, as well as the
HPC clusters Raven and Cobra of the Max Planck Computing and Data
Facility, {are} {gratefully} acknowledged.  The `International Centre for Theory of
Quantum Technologies’ project (contract no.~MAB/2018/5) is carried out
within the International Research Agendas Programme of the Foundation
for Polish Science co-financed by the European Union from the funds of
the Smart Growth Operational Programme, axis IV: Increasing the
research potential (Measure 4.3). We also thank the Mathematics
Department at Stockholm University for access to their computer. 
GM thanks the QOLS Group at Imperial College London for their ongoing hospitality.

\clearpage

\appendix

\section{\protect\hyperlink{Appendix \protect\ref{sec:AppB}}{}}\label{sec:AppB}

Here we give a brief summary of the Weyl--Heisenberg and Clifford groups. 
For a complete account see, e.\,g., Ref.~\citenum{Marcus}. Given a positive 
integer $d$ there is a Weyl--Heisenberg group generated by $X$, $Z$, and $\omega$, 
subject to 
\begin{equation}
  \omega^d = X^d = Z^d = {\bf 1} , \quad ZX = \omega XZ , \quad
  \omega X = X \omega , \quad \omega Z = Z \omega .
\end{equation}
This is an example of a Heisenberg group in the sense of
Chapter~I of Ref.~\citenum{tatamum}.  It has an essentially unique
unitary representation in dimension $d$, such that $\omega$ is
represented by $e^\frac{2\pi i}{d}$ times the unit matrix, and such
that $Z$ is a diagonal $d\times d$ matrix \cite{Weyl}. The basis
vectors are labelled by integers modulo $d$ and are ordered according
to
\begin{equation}
  Z|r\rangle = \omega^r|r\rangle , \qquad X|r\rangle = |r+1\rangle.
\end{equation} 
Throughout this paper we assume that this basis has been
fixed. Note that this means that the labeling of the $d$ components of
our vectors goes from $0$ to $d-1$.

The Clifford group is defined as the group of automorphisms of the
Weyl--Heisenberg group within the unitary group. To discuss this it is
convenient to collect pairs of integers modulo $d$ (or $2d$ if $d$ is
even) into a `vector' ${\bf p} = (a,b)$, and to define the
displacement operators
\begin{equation}
  D_{\bf p} = D_{a,b} = \left( - e^{\frac{\pi i}{d}}\right)^{ab} X^aZ^b.
\end{equation}
When $d$ is odd the phase factor is a $d$-th root of unity.
It can now be shown that the Clifford group contains the
representation of a symplectic group.  More precisely, for every
symplectic two by two matrix $M$ with entries that are integers modulo
$d$ (or $2d$ if $d$ is even) there is a unitary representative $U_M$
determined up to an overall phase such that
\begin{equation}
  M = \left( \begin{array}{cc} \alpha & \beta \\ \gamma & \delta \end{array} \right)
 \quad \Longrightarrow \quad 
 U_MD_{\bf p}U_M^{-1} = D_{M{\bf p}}.
\end{equation}
The matrix $M$ is symplectic if $\det{M} = \alpha \delta -
\beta \gamma = 1$ modulo $d$ (or $2d$ if $d$ is even).  Importantly,
the representation of this symplectic group is determined up to
overall phase factors by the representation of the Weyl--Heisenberg
group. Moreover the matrix elements of the unitaries $U_M$ are $d$-th
(or $2d$th) roots of unity up to an overall factor which also belongs
to the cyclotomic field.  The matrices $U_M$ are referred to as
\emph{symplectic unitaries} in the literature.

An example of a symplectic unitary that is given in the text is the
Fourier matrix $U_F$ in eq.~\eqref{eq:Fourier}. The square of the Fourier
matrix is a permutation matrix.  For the corresponding symplectic
matrices it holds that
\begin{equation}
  F = \left( \begin{array}{cc} 0 & - 1 \\ 1 & 0 \end{array} \right) , \qquad
 P = F^2 = \left( \begin{array}{cc} - 1 & 0 \\ 0 & -1 \end{array} \right) , \qquad
 \left( U_P\right)_{r,s} = \delta_{r+s,0}. \label{parity}
\end{equation}
The $\delta$ is a Kronecker delta modulo $d$. The operator
$U_P$ whose matrix elements are defined by it is known as the parity
operator. Importantly, every diagonal symplectic matrix is represented
by a permutation matrix that leaves the first component $\Psi_0$ of
any vector invariant. An example is the order $3\ell$ symplectic
matrix $F_S$ given in eq.~\eqref{eq:Fourier}. The corresponding
symplectic unitary $U_S$ is a permutation matrix leaving the fiducial
vector $|\Psi\rangle$ invariant.

A strengthened version \cite{Marcus} of Zauner's conjecture
\cite{Zauner} states that for every SIC that is an orbit under the
Weyl--Heisenberg group there exists a transformation in the Clifford
group so that one of the resulting vectors is invariant under a
symplectic unitary of order three, having the additional property that
the trace of its symplectic matrix equals $-1$ modulo $d$ (or $2d$ if
$d$ is even).  The group of symplectic matrices whose unitary
representatives stabilize a given SIC fiducial vector is denoted
$S$. Unitary symmetries of order $3\ell$ occur in dimensions that sit
higher up in the dimension sequences described in eq.~\eqref{deez} of
Section~\ref{sec:towers}.

The extended Clifford group is the automorphism group of the Weyl--Heisenberg group 
within the group of all unitary and anti-unitary transformations in the Hilbert space. 
Recall that, given a fixed basis, an anti-unitary transformation is represented by 
complex conjugation followed by a unitary transformation. It turns out that 
complex conjugation is the representative of the anti-symplectic matrix 
\begin{equation}
  J = \left( \begin{array}{cc} 1 & 0 \\ 0 & -1 \end{array} \right) . 
  \label{J}
\end{equation}
In its defining $2\times 2$ representation the extended symplectic group 
is obtained by adding $J$ as a generator to the symplectic group. Such a matrix 
is represented by an anti-unitary transformation if its determinant equals $-1$ modulo 
$d$ (or $2d$ if $d$ is even).

\clearpage

\section{\protect\hyperlink{Appendix \protect\ref{sec:AppC}}{}}\label{sec:AppC}

The Galois polynomial that is needed in eq.~\eqref{199galois} to specify the
exact solution in $d = 199$ is, explicitly,
\begin{small}
\begin{alignat}{6}
  g_4(t) =
\frac{1}{2^{10}\cdot 7^{20} \cdot 97 \cdot 137 \cdot 353 \cdot 11777}  \nonumber\\
\times\Bigl( 2^{10}7^{20}(-17078833449878756 + 12080287801165569 a)
 \nonumber\\
   + 2^{10}7^{20}(40287696684934067 - 28489493217381300 a) t  \nonumber\\
   + 2^{9}7^{19}(234689455794798196 - 165942513813988969 a) t^2  \nonumber\\
   - 2^{9}7^{19} (1267244797948364094 - 896075533706946133 a) t^3  \nonumber\\
   + 2^97^{17} (9850430584095374136 - 6965297266085092019 a) t^4  \nonumber\\
   + 2^87^{16} (707445743433268037312 - 500239679098868649349 a) t^5  \nonumber\\
   - 2^77^{15} (2833346680504939765576 - 2003478630481275710517 a) t^6  \nonumber\\
   - 2^{8}7^{14} (13062826273186848404584 - 9236813105616767488891 a) t^7   \nonumber\\
   + 2^77^{13} (50446588512588102272829 - 35671125113271678931917 a) t^8   \nonumber\\
   + 2^67^{12} (469995705600596099309418 - 332337149050655917780371 a) t^9  \nonumber\\
   - 2^67^{11}(639261563754521585705288 - 452026182486419826134005 a) t^{10}   \nonumber\\
   - 2^57^{10} (4169521749102877568698537 - 2948297111467613829154913 a) t^{11}  \nonumber\\
   + 2^47^{9} (4761186552625810725476246 - 3366667317059104479408941 a) t^{12}  \nonumber\\
   + 2^47^{9}(2462129766016039342770763 - 1740988649521184466334136 a) t^{13}  \nonumber\\
   - 2^47^{7} (674236002730880041782142 - 476756827271704087500383 a) t^{14}  \nonumber\\
   - 2^37^{6} (31996193193391918129178252 - 22624725168026626035121663 a) t^{15}  \nonumber\\
   - 2^37^{5}(8855058917437803643379503 - 6261472753616669980085918 a) t ^{16}  \nonumber\\
   + 2^47^4 (6119378614605756530360451 - 4327055148053001825770557 a) t^{17}  \nonumber\\ 
   + 2^27^{3} (12639797088162032274244852 - 8937678964369046837867715 a) t^{18}  \nonumber\\
   - 2^27^{2} (2094260756589813076610143 - 1480856690531237031689199 a) t^{19}  \nonumber\\
   - 7(9470731731247123181642606 - 6696848429749782828833133 a) t^{20}  \nonumber\\
   - (1563447700882137315845205 - 1105520045182155035071680 a) t^{21} \Bigr) ,\label{eq:galois199}
\end{alignat}
\end{small}
where $a = \sqrt{2}$.

\section*{References}

\end{document}